\newcolumntype{L}{>{$}l<{$}}
\newcolumntype{R}{>{$}r<{$}}
\newcolumntype{C}{>{$}c<{$}}
\newcommand{\mytikshrink}{0.8}
\theoremstyle{plain}
\newtheorem{theorem}{Theorem}
\newtheorem{lemma}[theorem]{Lemma}
\newtheorem{corollary}[theorem]{Corollary}
\theoremstyle{remark}
\newtheorem{remark}[theorem]{Remark}
\theoremstyle{definition}
\newtheorem{notation}[theorem]{Notation}
\newtheorem{definition}[theorem]{Definition}
\newenvironment{tabred}{
  \begin{center}
  \begin{tabular}{L L L}
}{
  \end{tabular}
  \end{center}
}
\newcommand{\vect}[1]{\overrightarrow{#1}}
 \newcommand{\sur}[2]{\overset{#2}{#1}}
 \newcommand{\entre}[3]{#2 \leq #1
  \leq #3} \newcommand{\ptentre}[3]{(\forall #1\ : \entre{#1}{#2}{#3})}
\newcommand{\metas}[2]{#2\{{#1}\}}
\newcommand{\varmetas}[3]{#3\{{#2}/{#1}\}}
\newcommand{\varsmetas}[3]{\varmetas{\vect{#1}}{\vect{#2}}{#3}}
\newcommand{\varss}[2]{{#2}[{#1}]}
\newcommand{\vars}[3]{{#3}[{#2} / {#1}]}
\newcommand{\sind}[3]{{#2}[{#1}]_{#3}}
\newcommand{\refrawsd}[3]{\sind{#2}{#3}{\downarrow}}
\newcommand{\refsrawsd}[3]{\sind{#2}{#3}{\downarrow}}
\newcommand{\refssd}[3]{\refsrawsd{#1}{\multv{#2}}{#3}}
\newcommand{\refrawsu}[3]{\sind{#2}{#3}{\uparrow}}
\newcommand{\refsrawsu}[3]{\sind{#2}{#3}{\uparrow}}
\newcommand{\refssu}[3]{\sind{\multv{#2}}{#3}{\uparrow}}
\newcommand{\refrawsl}[3]{\sind{#2}{#3}{\lambda}}
\newcommand{\refsrawsl}[3]{\sind{#2}{#3}{\lambda}}
\newcommand{\refssl}[3]{\sind{\multv{#2}}{#3}{\lambda}}
\newcommand{\multv}[1]{\mathcal{#1}}
\newcommand{\get}[1]{\text{get}({#1})}
\newcommand{\set}[2]{\text{set}({#1},{#2})}
\newcommand{\store}[2]{{#1} \Leftarrow {#2}}
\newcommand{\rrulet}[2]{($\texttt{#1}_\texttt{#2}$)}
\newcommand{\rrule}[2]{($\texttt{#1}_{#2}$)}
\newcommand{\reach}[2]{\text{Reach}({#1},{#2})}
\newcommand{\skel}[1]{\text{Sk}({#1})}
\newcommand{\rincl}{\sqsubseteq}
\newcommand{\rdiffraw}[1]{\rincl_{#1}}
\newcommand{\rdiff}[1]{\rdiffraw{\multv{#1}}}
\newcommand{\req}{\backsimeq}
\newcommand{\seq}[1]{({#1}_i)}
\newcommand{\dom}[1]{\text{dom}({#1})}
\newcommand{\lamadio}{$\lambda_\text{C}$}
\newcommand{\lthis}{$\lambda_\text{cES}$}
\newcommand{\setw}{assignment}
\newcommand{\getw}{read}
\newcommand{\condSN}{(\textbf{SN})}
\newcommand{\condWB}{(\textbf{WB})}
\newcommand{\mynl}{\\[1.5ex]}
\title{An Interactive Proof of Termination for a Concurrent $\lambda$-calculus
  with References and Explicit Substitutions}
\author{Yann Hamdaoui
\institute{IRIF, Univ. Paris Diderot}
\email{yann.hamdaoui@irif.fr}
\and
Beno{\^i}t Valiron
\institute{LRI – CentraleSupelec, Univ. Paris Saclay}
\email{benoit.valiron@lri.fr}
}
\begin{document}

\maketitle


\begin{abstract}
  In this paper we introduce a typed, concurrent $\lambda$-calculus
  with references featuring explicit substitutions for variables and
  references. Alongside usual safety properties, we recover strong
  normalization. The proof is based on a reducibility technique and an
  original interactive property reminiscent of the Game
  Semantics approach.
\end{abstract}

\section{Introduction}
\label{sec:intro}

%

The $\lambda$-calculus is a versatile framework in the study and design of
higher-order functional programming languages. One of the reasons of
its widespread usage is the fact that it can easily be extended to
model various computational side-effects. Another reason comes
from its theoretical ground and the fine granularity it allows in the
design of abstract machines to express various reduction
strategies. These abstract-machines can then serve as foundation for
the design of efficient interpreters and compilers.

A specially useful tool in the design of such abstract machines is the
notion of explicit substitution, a refinement over $\beta$-reduction.
The $\beta$-reduction of the $\lambda$-calculus is a meta-rule where
substitution is defined inductively and performed all at once on the
term. But its implementation is a whole different story: to avoid
size explosion in presence of duplication,
mechanisms such as sharing are usually deployed. Abstract machines
implement various specific strategies that may either be representable
in pure $\lambda$-calculus (call-by-value or call-by-name) or for
which the syntax needs to be augmented with new objects
(e.g. call-by-need or linear head reduction). The mismatch between
$\beta$-reduction and actual
implementations can make the proof of soundness for an evaluator or a
compiler a highly nontrivial task. The heart of the theory of explicit
substitutions, introduced in~\cite{Abadi:1989:ES:96709.96712}, is to
give substitutions a first class status as objects of the syntax to
better understand the dynamics and implementation of
$\beta$-reduction. It consists in decomposing a substitution into
explicit atomic steps. The main ingredient is to modify the $\beta$
rule so that $(\lambda x.M)N$ reduces to $M[N/x]$, where $[N/x]$ is now part
of the syntax. Additional reduction rules are then provided to
propagate the substitution $[N/x]$ to atoms.

Studied for the last thirty
years~\cite{%
Abadi:1989:ES:96709.96712,
Accattoli:2015:PNC:2852784.2853038,
Accattoli:2014:NST:2535838.2535886,
Accattoli2016,
Accattoli2010,
GP99,
kesner2009theory,
LRD95,
Ros92,
SFM03
},
explicit substitution turns out to be a crucial device when
transitioning from a formal higher-order calculus to a concrete
implementation. It has been considered in the context of sharing of
mutual recursive definitions~\cite{Ros92},
higher-order unification~\cite{LRD95},
algebraic
data-types~\cite{GP99}, efficient abstract
machines~\cite{Accattoli:2015:PNC:2852784.2853038,SFM03},
cost-model analysis~\cite{Accattoli2016},
{\em etc}.
The use of explicit substitutions however comes at a
price~\cite{kesner2009theory}.  Calculi with such a feature are
sensitive to the definition of reduction rules. If one is too liberal
in how substitutions can be composed then a strongly normalizing
$\lambda$-term may diverge in a calculus with explicit
substitutions~\cite{Mellies1995}. If one is too restrictive, confluence
on metaterms is lost~\cite{Curien:1996:CPW:226643.226675}. The
challenge is to carefully design the language to implement desirable
features without losing fundamental properties. Several solutions have
been proposed to fix these
defects~\cite{Accattoli2010,kesner2009theory} for explicit
substitutions of term variables.

This paper introduces an extension of explicit substitutions to a
novel case: a lambda-calculus augmented with concurrency and
references. Such a calculus forms a natural model for shared memory
and message passing. We aim at proving that a translation of a shared
memory model to a message passing one is sound. The long time goal of
this work is to implement a memoryful language in a formalism (such
that interaction nets/proof nets) that allows to easily distribute
parts of a program to be executed on different nodes, and to
parallelize independent parts of the program.

The current paper concentrates on the problem of strong-normalization
of such a language.

\subsection{Strong Normalization in a Concurrent Calculus with
  References}\label{sec:amadio}

A concurrent lambda-calculus with references -- referred as {\lamadio}
below -- has been introduced by Amadio in~\cite{Amadio2009}. It is a
call-by-value $\lambda$-calculus extended with:
\begin{itemize}
\item a notion of threads and an operator $\parallel$ for parallel
  composition of threads,
\item two terms $\set{r}{V}$ and $\get{r}$, to respectively
  assign a value to and read from a reference,
\item special threads $\store{r}{V}$, called stores, accounting for
  assignments.
\end{itemize}
When $\set{r}{V}$ is reduced, it turns to the unit value $\ast$ and
produces a store $\store{r}{V}$ making the value available to all the
other threads. A corresponding construct $\get{r}$ is reduced by
choosing non deterministically a value among all the available
stores. For example, assuming some support for basic arithmetic
consider the program
$
(\lambda x. x + 1)\ \get{r} \parallel
  \set{r}{0} \parallel \set{r}{1}.
$
It consists of 3 threads: two concurrent assignments $\set{r}{0}$
and $\set{r}{1}$, and an application $(\lambda x. x + 1)\,\get{r}$.
This programs admits two normal forms depending on which
assignment ``wins'':  the term $1 \parallel
\ast \parallel \ast \parallel r \Leftarrow 0 \parallel
\store{r}{1}$ and the term $2 \parallel
\ast \parallel \ast \parallel r \Leftarrow 0 \parallel
\store{r}{1}$. Despite the $\parallel$ operator being a static constructor, it can be
embedded in abstractions and thus dynamically liberated or duplicated. For example, the term
$(\lambda f. f\ \ast \parallel f\ \ast)$ act like a \emph{fork} operation: if
applied to $M$, it generates two copy of its argument in two parallel threads $M\ \ast \parallel M\ \ast$.
Coupled with the Landin's fixpoint introduced below one can even write a fork
bomb, that is a non terminating term which spans an unbounded number of
threads.

In this language, the stores are global and cumulative: their scope is the
whole program, and each {\setw} adds a new binding that does not erase the
previous one. Reading from a store is a non deterministic process that
chooses a value among the available ones. References are able to handle an unlimited
number of values and are understood as a typed abstraction of possibly
several concrete memory cells. This feature allows {\lamadio} to simulate various
other calculi with references such as variants with dynamic references or
communication~\cite{madet:tel-00794977}.

While a simple type system for usual $\lambda$-calculus ensures
termination, the situation is quite different in a language with
higher-order references. The so called Landin's trick~\cite{landin1964mechanical} allows to encode
a fixpoint in the simply typed version of a calculus with references.
The problem lies in the fact that one can store in a reference $r$
values that can themselves read from the reference~$r$, leading
to a circularity. For example, the term $\get{r}\ \ast \parallel r
\Leftarrow (\lambda x. \get{r}\ \ast)$ loops while involving only simple types
$\mbox{Unit}$ and $\mbox{Unit} \to \mbox{Unit}$.

In order to address this issue, type and effects systems have been
introduced to track the potential effects produced by a term during
its evaluation. Together with stratification on
references~\cite{BOUDOL2010716}, one can recast termination in such an
imperative context. Intuitively, stratification imposes an order
between references: a reference can only store terms that
access smaller ones, ruling out Landin's fixpoint. Formally, this
allows to apply the usual reducibility argument to a calculus with
references: stratification ensures that the inductive definition of
reducibility sets on types with effects is well-founded.

While scheduling is explicitly handled through language constructs
in~\cite{BOUDOL2010716}, {\lamadio}'s liberal reduction allows to chose a
different thread to operate on at any time. This cause additional difficulty, as
from a single thread's point of view, arbitrary new {\setw s} may become
available between two reduction steps. For {\lamadio}, the proof of
termination in~\cite{Amadio2009} resorts to what amounts to infinite terms with
the notion of saturated stores.

\subsection{Our Contributions}

The contributions of this paper are twofold.
\begin{enumerate}
\item The definition of a system of \emph{explicit substitutions} for a
  concurrent $\lambda$-calculus with references, both for variables
  and references.

  \smallskip
  The problem we address is the bidirectional property of assignment
  of references within a term. An assignment for a term variable in a
  redex only diffuses inward: in $(\lambda x.M)V$, the assignment
  $x \mapsto V$ only concerns the subterm $M$. Instead, a reference
  assignment $\set{r}{V}$ is potentially {\em global}: it concerns
  all the occurrences of the subterm $\get{r}$.

  \smallskip
  Our first contribution is to propose an explicit substitution
  mechanism to be able to express reference assignment step-wise, as
  for term-variables.

\item A proof of strong normalization for a typed fragment using
  a novel \emph{interactive} property.

  \smallskip
  Akin to~\cite{Amadio2009}, the language we propose is typed and the
  type-system is enforcing strong-normalization. In the proof
  of~\cite{Amadio2009} the infinitary structure of terms is restricted to
  top-level stores. In our setting, this would require infinite
  explicit substitutions which are subject to duplication, erasure, composition,
  \ldots

  \smallskip
  Our proof only uses {\em finite} terms. It has a Game Semantics flavor which
  we find of interest on its own. Indeed, we use the idea of abstracting the
  context in which a subterm is executed as an opponent able to interact by
  sending and receiving explicit substitutions.  Moreover, we believe that the
  finite, interactive technique we develop in this second contribution may be
  well-suited for different settings such as proof nets or other concurrent
  calculi.
\end{enumerate}

\subsection{Plan of the paper}
Section~2 presents the calculus with explicit
substitutions {\lthis}. Section~3 introduces the stratified type and
effect system. Section~4 focuses on the proof of strong
normalization, while Section~5 discusses the construction. Section~6
concludes the paper.

\section{A Concurrent $\lambda$-calculus with Explicit
Substitutions}\label{section:lthis}

In standard presentations of the lambda-calculus and its extensions such
as~\cite{Amadio2009}, substitutions are applied globally. This hides
the implementation details of the procedure. Exposing such an
implementation is one of the reasons for the introduction of explicit
substitutions. In the literature, explicit substitutions have only been
used for term variables and not for references.

In this section, we introduce the language {\lthis}, a call-by-value,
concurrent $\lambda$-calculus with explicit substitutions for both
term variables and references.

\subsection{Syntax}

The language {\lthis} has two kinds of variables: term variables (simply named
{\em variables}) represented with $x,y,\ldots$, and {\em references},
represented with $r,r',\ldots$.  Substitutions are represented by partial
functions with finite support. Variable substitutions, denoted with Greek
letters $\sigma,\tau,\ldots$, map variables to values. Reference substitutions,
denoted with calligraphic uppercase letters $\multv{V},\multv{U},\ldots$,
map references to {\em finite multisets} of values. Multisets reflect the
non-determinism, as multiple writes may have been performed on the same
reference. They are represented with the symbol $\mathcal{E}$.
The language consists
of values, terms and sums of terms, representing non-determinism.
\begin{center}
\begin{tabular}{llll}
  -values & $V$ & $::=$ & $x \mid \ast \mid \lambda x.M$ \\
  -terms & $M$ & $::=$ & $V \mid \varss{\sigma}{M} \mid
                         \refssl{r}{V}{(M\,M)} \mid \get{r}
\mid \refssd{r}{V}{M} \mid \refssu{r}{V}{M} \mid M \parallel M$ \\
  -sums & $\mathbf{M}$ & $::=$ & $\mathbf{0} \mid M \mid \mathbf{M} +
  \mathbf{M}$ \\
\end{tabular}
\end{center}
The construct $\varss{\sigma}{M}$ stands for the explicit substitutions
of variables in $M$ under the substitution $\sigma$. There are three
constructs for explicit substitutions for references:
$\refssd{r}{V}{M}$ and $\refssu{r}{V}{M}$ are respectively the
downward and upward references substitutions, while
$\refssl{r}{V}{(M\,M)}$ is the $\lambda$-substitution. The reason for
which the language needs three distinct notations is explained in the
next section while presenting the reduction rules. Finally, the role
of the sum-terms $\mathbf{M}$ is to capture and keep all
non-deterministic behaviours.

Terms are considered modulo an equivalence relation presented in
Table~\ref{fig:lth-struct-rules}. The sum is
idempotent, associative and commutative, while the parallel composition
is associative and commutative.

\begin{remark}
  The three constructs $\refssd{r}{V}{}$, $\refssu{r}{V}{}$ and
  $\refssl{r}{V}{}$ encapsulate the assignments in terms and in stores
  presented in Section~\ref{sec:intro}: there is no need anymore for
  ${\rm set}(r,V)$ and $r\Leftarrow V$. See
  Section~\ref{sec:discussion} for a discussion of this aspect.
\end{remark}

\begin{notation}
  Reference substitutions will be sometimes written
  with the notation $[\store r V]$ to mean $[\multv{V}]$ with $\multv{V} : r
  \mapsto [V]$. Explicit variables substitutions are written $\varss{\{x_1\mapsto
    V_1,\ldots,x_n\mapsto V_n\}}{M}$.
  Finally, by abuse of notation we write $(M\ N)$ for $(M\ N)[]$.
\end{notation}

\subsection{Reduction}

We adopt a weak call-by-value reduction where the reduction order of
an application is not specified. It is weak in the sense that no
reduction occurs under abstractions.

Although in a general setting non-determinism and call-by-value taken together
may break confluence even when collecting all possible
outcomes~\cite{deLiguoroP95},
this phenomenon does not happen here. Indeed, we cannot reduce under
abstractions, and the only non-deterministic construct $\get{r}$ must be reduced
before being duplicated, avoiding problematic interactions between
$\beta$-reduction and non-deterministic choice.

The language {{\lthis}} is equipped with the reduction defined
in Table~\ref{fig:lth-red-rules}.
The rules presented are closed under the structural rules
  of Table~\ref{fig:lth-struct-rules}. We assume the usual conventions
  on alpha-equivalence of term, and as customary substitutions are considered
  modulo this alpha-equivalence.
They make use of several notations
that we lay out below.
Rules devoted to dispatching substitutions are referred as
\emph{structural rules}. The \emph{variable} (resp. \emph{downward},
\emph{upward}) \emph{structural rules} consist in \rrulet{subst}{}
(resp.  \rrulet{subst-r}{}, \rrulet{subst-r'}{}) rules excluding
\rrulet{subst}{var} (resp.  \rrulet{subst-r}{get},
\rrule{subst-r}{\top}). An in-depth discussion about these rules follows.

\begin{table}[tb]
\centering
\begin{minipage}{.49\linewidth}
\centering
\begin{tabular}{rcl}
$M \parallel M'$ & = & $M' \parallel M$ \\
$(M \parallel M') \parallel M''$ & = & $M \parallel (M' \parallel M'')$ \\
$\mathbf{M} + \mathbf{M}'$ & = & $\mathbf{M}' + \mathbf{M}$ \\
$(\mathbf{M} + \mathbf{M}') + \mathbf{M}''$ & = & $\mathbf{M} + (\mathbf{M}' +
\mathbf{M}'')$ \\
$\mathbf{M} + \mathbf{M}$ = $\mathbf{M} + \mathbf{0}$ & = & $\mathbf{M}$
\mynl
\end{tabular}
\caption{Structural Rules}\label{fig:lth-struct-rules}
\end{minipage}
\begin{minipage}{.49\linewidth}
\centering
\begin{tabular}{lcl}
~\\
$E$ & $::=$ & $[.] \mid \refssl{r}{V}{(E\ M)} \mid
  \refssl{r}{V}{(M\ E)}\mid{} $ \\
& & $\refssd{r}{V}{E}
  \mid \refssu{r}{V}{E}$ \\
$C$ & $::=$ & $[.] \mid (C \parallel M) \mid (M \parallel C)$ \\
$\mathbf{S}$ & $::=$ & $[.] \mid \mathbf{S} + \mathbf{M} \mid \mathbf{M} +
\mathbf{S}$ \mynl
\end{tabular}
\caption{Evaluation Contexts}\label{fig:lth-eval-context}
\end{minipage}
\end{table}

\begin{table}[!p]
\begin{center}
\begin{tabular}{lrcl}
\hline
\multicolumn{4}{|c|}{{\bf (a)}~$\beta$-reduction} \\
\hline
$(\beta_v)$ & $\refssl{r}{U}{((\lambda x.M)\,V)}$ & $\rightarrow$ &
$\refssd{r}{U}{(\varss{\{x\mapsto V\}}{M})}$ \\
\\
\multicolumn{4}{l}{If $M \rightarrow M'$ with rule $(\beta_v)$, then
  $\mathbf{S}[C[E[M]]] \rightarrow \mathbf{S}[C[E[M']]]$}\\
\\
\hline
\multicolumn{4}{|c|}{{\bf (b)}~Variable Substitutions} \\
\hline
\rrulet{subst}{var} & $\varss{\sigma}{x}$
& $\to$ & $\sigma(x)$ if defined, or $x$ otherwise\\
\rrulet{subst}{unit} & $\varss{\sigma}{\ast}$
& $\to$ & $\ast$ \\
\rrulet{subst}{app} & $\varss{\sigma}{\refssl{}{V}{(M\,N)}}$ & $\to$
& $\refrawsl{}{\varss{\sigma}{\multv{V}}}{((\varss{\sigma}{M})\,(\varss{\sigma}{N}))}$ \\
\rrule{subst}{\lambda} & $\varss{\sigma}{(\lambda y.M)}$ & $\to$
  & $\lambda y.(\varss{\sigma}{M})$ \\
\rrulet{subst}{get} & $\varss{\sigma}{\get{r}}$ & $\to$
 & $\get{r}$ \\
\rrule{subst}{\parallel} & $\varss{\sigma}{(M \parallel M')}$ & $\to$
& $(\varss{\sigma}{M}) \parallel (\varss{\sigma}{M'})$ \\
\rrulet{subst}{subst-r} &
$\varss{\sigma}{( \refssd{r}{V}{M} )}$ & $\to$
& $\refsrawsd{}{\metas{\sigma}{\multv{V}}}{\varss{\sigma}{M}}$ \\
\rrulet{subst}{subst-r'} &
$\varss{\sigma}{( \refssu{r}{V}{M} )}$ & $\to$
& $\refsrawsu{}{\metas{\sigma}{\multv{V}}}{\varss{\sigma}{M}}$ \\
\rrulet{subst}{merge} & $\varss{\tau}{\varss{\sigma}{M}}$ & $\to$ &
$\varss{\sigma,\tau}{M}$
\mynl
\multicolumn{4}{l}{Congruence case: }\\
\multicolumn{4}{l}{If $M \rightarrow M'$ by any of the previous rules, then
  $\mathbf{S}[C[E[M]] \rightarrow \mathbf{S}[C[E[M']]]$} \\
\\
\hline
  \multicolumn{4}{|c|}{{\bf (c)}~Downward Reference Substitutions} \\
\hline
\rrulet{subst-r}{val} & $\refssd{}{V}{V}$
& $\to$ & $V$ \\
\rrule{subst-r}{\parallel} & $\refssd{}{V}{(M
\parallel M')}$ & $\to$
& $(\refssd{}{V}{M}) \parallel (\refssd{}{V}{M'})$ \\
\rrulet{subst-r}{subst-r'} &
$\refssd{r}{V}{\refssu{s}{U}{M}}$ & $\to$
& $\refssu{s}{U}{\refssd{r}{V}{M}}$ \\
 & & & \\
\rrulet{subst-r}{merge} &
$\refssd{}{V}{\refssd{}{U}{M}}$ & $\to$
& $\refsrawsd{}{\multv{U},\multv{V}}{M}$ \\
\rrulet{subst-r}{app} &
$\refssd{}{V}{\refssl{}{U}{(M\,N)}}$ & $\to$
& $\refsrawsl{}{\multv{U},\multv{V}}{(\refssd{}{V}{M})\ (\refssd{}{V}{N})}$ \mynl
\multicolumn{4}{l}{Congruence cases:}\\
\multicolumn{4}{l}{If $M \rightarrow M'$ by any of the previous rules, then
  $\mathbf{S}[C[E[M]] \rightarrow \mathbf{S}[C[E[M']]]$} \mynl
\multicolumn{4}{l}{Finally:}\\
\rrulet{subst-r}{get} &
$\mathbf{S}[C[E[\refssd{}{V}{\get{r}}]]]$ & $\to$
& $\mathbf{S}[C[E[\get{r}]]] + \sum_{V \in \multv{V}(r)}
C[E[V]]$ \\
\\
\hline
\multicolumn{4}{|c|}{{\bf (d)}~Upward Reference Substitutions} \\
\hline
\rrule{subst-r'}{\parallel} & $(\refssu{r}{V}{M}) \parallel N$
& $\to$ & $\refssu{r}{V}{(M \parallel (\refssd{r}{V}{N}))}$ \\
 & & & \\
\rrulet{subst-r'}{lapp} &
$\refssl{}{U}{((\refssu{}{V}{M})\,N)}$ & $\to$ &
$\refssu{}{V}{\refsrawsl{}{\multv{U},\multv{V}}{(M\,(\refssd{}{V}{N}))}}$ \\
\rrulet{subst-r'}{rapp} &
$\refssl{}{U}{(M\,(\refssu{}{V}{N}))}$ & $\to$ &
$\refssu{}{V}{\refsrawsl{}{\multv{U},\multv{V}}{((\refssd{}{V}{M})\,N)}}$ \mynl
\multicolumn{4}{l}{Congruence case:}\\
\multicolumn{4}{l}{If $M \rightarrow M'$ by any of the previous rules, then
  $\mathbf{S}[C[E[M]] \rightarrow \mathbf{S}[C[E[M']]]$} \mynl
\multicolumn{4}{l}{Finally:}\\
\rrule{subst-r'}{\top} & $\mathbf{S}[\refssu{r}{V}{M}]$
& $\to$ & $\mathbf{S}[M]$
\end{tabular}
\end{center}
\caption{Reduction Rules. These are closed under the structural rules
  of Table~\ref{fig:lth-struct-rules}}\label{fig:lth-red-rules}
\end{table}

\begin{notation}\label{not:context}\rm
  the contexts $E$, $C$ and $\mathbf{S}$ are defined in
  Table~\ref{fig:lth-eval-context}. The context $E$ stands for a usual
  call-by-value applicative context, $C$ picks a thread, while
  $\mathbf{S}$ picks a term in a non-deterministic sum. Note how
  $E$ does not enforce any reduction order on an application.
\end{notation}
\begin{notation}\label{not:subst}\rm
  Given a variable substitution $\sigma$ and a value $V$, we define
  the value $\metas{\sigma}V$ as follows:
  $\metas{\sigma}{(\lambda x . M)} = \lambda x .
  (\varss{\sigma}{M})$,
  $\metas{\sigma}{(x)} = \sigma(x)$ if $\sigma$ is defined at $x$ or
  $\metas{\sigma}{(x)} = x$ if not, and
  $\metas{\sigma}{(\ast)} = \ast$.
\end{notation}
\begin{notation}\label{not:subst2}\rm
  We use the
   notation $\multv{X} = \multv{V},\multv{W}$ for the juxtaposition of
   references substitutions. It is defined by
   $\multv{X}(r) =
   \multv{V}(r) + \multv{W}(r)$ if both are defined and where $+$ is the union of multisets,
   $\multv{V}(r)$ if only $\multv{V}$ is defined, and
   $\multv{W}(r)$ if only $\multv{W}$ is defined.
   We use the same notation for the
   composition of variable substitutions, defined by
   $(\sigma,\tau)(x) =
     \metas{\tau}{\sigma(x)}$ if both are defined,
     $\sigma(x)$ if only $\sigma$ is defined
     $\tau(x)$  if only $\tau$ is defined.
   Finally, we define
   $\metas{\sigma}{\multv{V}} : r \mapsto [ \metas{\sigma}{V_i} \mid V_i \in
   \multv{V}(r) ]$.
\end{notation}

We now give some explanations on the rules of Table~\ref{fig:lth-red-rules}.

\paragraph*{(a) $\beta$-reduction}
If one forgets the $\lambda$-substitution explained below in
Subsection (d), this set of rules encapsulates the call-by-value
behavior of the language: only values can be substituted in the body
of abstractions, and this happens within a thread in a call-by-value
applicative context.

\paragraph*{(b) Variable Substitutions}

A variable
substitution can be seen as a message emitted by a $\beta$-redex and
dispatched through the term seen as a tree. The substitution flows
from the redex downward the term-tree until it reaches the occurrence
of a variable. The occurrence is then replaced, or not, depending on
the variable to be substituted. The rules in
Table~\ref{fig:lth-red-rules}(a) are an operational formalization of
this step-by-step procedure.
Consider for example the reduction of the term
$(\lambda x . x\ y\ z)\ (\lambda x . x)$. The redex triggers with
$(\beta_v)$ the substitution of all occurrences of $x$ in what was the
body of the lambda-abstraction. The substitution goes down the
corresponding sub-term and performs the substitution when it reaches an
occurrence of $x$.
\begin{center}
\scalebox{.8}{%
\begin{tabular}{ccccc}
  \begin{tikzpicture}[yscale=\mytikshrink]
    \node[draw=none] (MN) at (0,0)
      {$(\lambda x . x\ y\ z)\ (\lambda x.x)$};
    \node[draw=none] (M) at (-1,-1)
      {$\lambda x. x\ y\ z$};
    \node[draw=none] (M2) at (-1,-2)
      {$\ldots$};
    \node[draw=none] (X) at (-1,-3)
      {$x$};
    \node[draw=none] (N) at (1,-1)
      {$(\lambda x. x)$};

    \draw[-,>=latex] (MN) -- (M);
    \draw[-,>=latex] (MN) -- (N);
    \draw[-,>=latex] (M) -- (M2);
    \draw[-,>=latex] (M2) -- (X);
  \end{tikzpicture}

  & \begin{tikzpicture}[yscale=\mytikshrink]
    \node[draw=none] (M) at (0,0)
    {$\varss{x \mapsto (\lambda x. x)}{x\ y\ z}$};
    \node[draw=none] (M2) at (0,-1)
      {$\ldots$};
    \node[draw=none] (X) at (0,-2)
      {$x$};

    \draw[-,>=latex] (M) -- (M2);
    \draw[-,>=latex] (M2) -- (X);
  \end{tikzpicture}

  & \begin{tikzpicture}[yscale=\mytikshrink]
    \node[draw=none] (M) at (0,0)
      {$x\ y\ z$};
    \node[draw=none] (M2) at (0,-1)
    {$\varss{x \mapsto (\lambda x. x)}{\ldots}$};
    \node[draw=none] (X) at (0,-2)
      {$x$};

    \draw[-,>=latex] (M) -- (M2);
    \draw[-,>=latex] (M2) -- (X);
  \end{tikzpicture}

  & \begin{tikzpicture}[yscale=\mytikshrink]
    \node[draw=none] (M) at (0,0)
      {$x\ y\ z$};
    \node[draw=none] (M2) at (0,-1)
      {$\ldots$};
    \node[draw=none] (X) at (0,-2)
      {$\varss{x \mapsto (\lambda x. x)}{x}$};
    \draw[-,>=latex] (M) -- (M2);
    \draw[-,>=latex] (M2) -- (X);
  \end{tikzpicture}

  & \begin{tikzpicture}[yscale=\mytikshrink]
    \node[draw=none] (M) at (0,0)
      {$(\lambda x. x)\ y\ z$};
    \node[draw=none] (M2) at (0,-1)
      {$\ldots$};
    \node[draw=none] (X) at (0,-2)
      {$\lambda x. x$};
    \draw[-,>=latex] (M) -- (M2);
    \draw[-,>=latex] (M2) -- (X);
  \end{tikzpicture}
\end{tabular}%
} 
\end{center}

\begin{remark}
On rules \rrulet{subst}{subst-r} and \rrulet{subst}{subst-r'}.
When composing or swapping substitutions, non-values may appear in
unfortunate places: take for example
$\varss{\sigma}{(\refssu{}{V}{\ast})}$, its reduction should be
$\refrawsu{}{\multv{V'}}{(\varss{\sigma}{\ast})}$ where
$\multv{V}'(r) = [ \varss{\sigma}{V} \mid V \in \multv{V}(r) ]$ when
$\multv{V}(r)$ is defined. But $\varss{\sigma}{V}$ are not necessarily
values and should then be able to be reduced inside
substitutions. However, note that that $\varss{\sigma}{V}$ always
reduces in one step to the value $\metas{\sigma}{V}$. To avoid
additional complexity, we perform this reduction at the same time,
whence the use of $\metas{\sigma}{V}$ instead of $\varss{\sigma}{V}$
in the actual rules.
\end{remark}

\paragraph*{(c) Downward Reference Substitutions}

An {\setw} can occur anywhere within a term and it must be able to reach a
{\getw} located in an arbitrary position. In a language such
as~\cite{Amadio2009}, as discussed in Section~\ref{sec:amadio}
the solution is to keep all assignments in a global
store. When a {\getw} gets evaluated, the value for the reference is
taken from the store.  This approach is very global in nature: the
store is ``visible'' by every subterm.

The language {{\lthis}} features a step-by-step decomposition of
reference assignments akin to term variable substitutions: an
assignment follows the branches of the term-tree, actively seeking a
{\getw}.
We therefore introduce two sorts of reference substitutions: one that
goes downward (indicated by $\downarrow$), similar to variable
substitutions, and one that goes upward (indicated by $\uparrow$). Starting from
an \setw, the latter climbs up the tree up to the root. The rules in
Table~\ref{fig:lth-red-rules}(c) describe the former while the rules in
Table~\ref{fig:lth-red-rules}(d) describe the latter.

\begin{remark}
  In Table~\ref{fig:lth-red-rules}, the rule \rrulet{subst-r}{get} is
  the central case of the reduction of reference substitutions. It
  says that whenever a downward substitution reaches a $\get{r}$, then
  it generates a non deterministic sum of all the available values for
  the reference $r$ (if $\multv{V}$ is undefined at $r$, then this sum
  is understood as a neutral element $\mathbf{0}$) plus a term where
  the substitution was discarded but the $\get{r}$ is left
  unreduced. To see why this ``remainder'' is necessary, consider the
  term $\refsrawsd{}{\store r V_2}{\refsrawsd{}{\store r V_1}{\get{r}}}$.  If we
  omit the remainder, the term could reduce to
  $\refrawsd{}{\store r V_2}{V_1}$ and finally to $V_1$. But another
  reduction is possible: one can first reduce the term to
  $\refsrawsd{}{\store{r}{V_1},\store{r}{V_2}}{\get{r}}$ and then to $V_1 + V_2$.  The
  $\get{r}$ must not be greedy: when it meets a substitution, it has
  to consider the possibility that other substitutions will be
  available later. This aspect will be crucial when considering the
  proof of strong normalization of the language in
  Section~\ref{sec:term}.
\end{remark}

\paragraph*{(d) Upward Reference Substitutions}

Each time an upward reference substitution goes through a multi-ary constructor --
as an application or a parallel composition -- it propagates downward
substitutions in all the children of the constructor except the one
it comes from, while continuing its ascension. Eventually, all the
leafs are reached by a corresponding downward substitution. To
illustrate the idea, consider a term $M\ N$ where $M$ contains a
$\get{r}$ somewhere and $N$ an {\setw} $\refrawsu{}{\store r V}{\ast}$. The reduction of explicit
substitutions would go as follows.

\begin{center}
\scalebox{.8}{%
\begin{tabular}{cccc}
  %
  %
  \begin{tikzpicture}[yscale=\mytikshrink]
    \node[draw=none] (MN) at (0,0)
      {$M\ N$};
    \node[draw=none] (M) at (-1,-1)
      {$M$};
    \node[draw=none] (M2) at (-1,-2)
      {$\ldots$};
    \node[draw=none] (G) at (-1,-3)
      {$\text{get}(r)$};
    \node[draw=none] (N) at (1,-1)
      {$N$};
    \node[draw=none] (N2) at (1,-2)
      {$\ldots$};
    \node[draw=none] (S) at (1,-3)
      {$\refrawsu{}{\store r V}{\ast}$};

    \draw[-,>=latex] (MN) -- (M);
    \draw[-,>=latex] (MN) -- (N);
    \draw[-,>=latex] (M) -- (M2);
    \draw[-,>=latex] (M2) -- (G);
    \draw[-,>=latex] (N) -- (N2);
    \draw[-,>=latex] (N2) -- (S);
  \end{tikzpicture}

  & \begin{tikzpicture}[yscale=\mytikshrink]
    \node[draw=none] (MN) at (0,0)
      {$M\ N$};
    \node[draw=none] (M) at (-1,-1)
      {$M$};
    \node[draw=none] (M2) at (-1,-2)
      {$\ldots$};
    \node[draw=none] (G) at (-1,-3)
      {$\text{get}(r)$};
    \node[draw=none] (N) at (1,-1)
      {$N$};
    \node[draw=none] (N2) at (1,-2)
      {$\refrawsu{}{\store r V}{\ldots}$};
    \node[draw=none] (S) at (1,-3)
      {$\ast$};

    \draw[-,>=latex] (MN) -- (M);
    \draw[-,>=latex] (MN) -- (N);
    \draw[-,>=latex] (M) -- (M2);
    \draw[-,>=latex] (M2) -- (G);
    \draw[-,>=latex] (N) -- (N2);
    \draw[-,>=latex] (N2) -- (S);
  \end{tikzpicture}

  & \begin{tikzpicture}[yscale=\mytikshrink]
    \node[draw=none] (MN) at (0,0)
      {$M\ N$};
    \node[draw=none] (M) at (-1,-1)
      {$M$};
    \node[draw=none] (M2) at (-1,-2)
      {$\ldots$};
    \node[draw=none] (G) at (-1,-3)
      {$\text{get}(r)$};
    \node[draw=none] (N) at (1,-1)
      {$\refrawsu{}{\store r V}{N}$};
    \node[draw=none] (N2) at (1,-2)
      {$\ldots$};
    \node[draw=none] (S) at (1,-3)
      {$\ast$};

    \draw[-,>=latex] (MN) -- (M);
    \draw[-,>=latex] (MN) -- (N);
    \draw[-,>=latex] (M) -- (M2);
    \draw[-,>=latex] (M2) -- (G);
    \draw[-,>=latex] (N) -- (N2);
    \draw[-,>=latex] (N2) -- (S);
  \end{tikzpicture} \\

  \begin{tikzpicture}[yscale=\mytikshrink]
    \node[draw=none] (MN) at (0,0)
      {$\refrawsu{}{\store r V}{(M\ N)}$};
    \node[draw=none] (M) at (-1,-1)
      {$\refrawsd{}{\store r V}{M}$};
    \node[draw=none] (M2) at (-1,-2)
      {$\ldots$};
    \node[draw=none] (G) at (-1,-3)
      {$\text{get}(r)$};
    \node[draw=none] (N) at (1,-1)
      {$N$};
    \node[draw=none] (N2) at (1,-2)
      {$\ldots$};
    \node[draw=none] (S) at (1,-3)
      {$\ast$};

    \draw[-,>=latex] (MN) -- (M);
    \draw[-,>=latex] (MN) -- (N);
    \draw[-,>=latex] (M) -- (M2);
    \draw[-,>=latex] (M2) -- (G);
    \draw[-,>=latex] (N) -- (N2);
    \draw[-,>=latex] (N2) -- (S);
  \end{tikzpicture}

  & \begin{tikzpicture}[yscale=\mytikshrink]
    \node[draw=none] (MN) at (0,0)
      {$\refrawsu{}{\store r V}{(M\ N)}$};
    \node[draw=none] (M) at (-1,-1)
      {$M$};
    \node[draw=none] (M2) at (-1,-2)
      {$\refrawsd{}{\store r V}{\ldots}$};
    \node[draw=none] (G) at (-1,-3)
      {$\text{get}(r)$};
    \node[draw=none] (N) at (1,-1)
      {$N$};
    \node[draw=none] (N2) at (1,-2)
      {$\ldots$};
    \node[draw=none] (S) at (1,-3)
      {$\ast$};

    \draw[-,>=latex] (MN) -- (M);
    \draw[-,>=latex] (MN) -- (N);
    \draw[-,>=latex] (M) -- (M2);
    \draw[-,>=latex] (M2) -- (G);
    \draw[-,>=latex] (N) -- (N2);
    \draw[-,>=latex] (N2) -- (S);
  \end{tikzpicture}

  & \begin{tikzpicture}[yscale=\mytikshrink]
    \node[draw=none] (MN) at (0,0)
      {$\refrawsu{}{\store r V}{(M\ N)}$};
    \node[draw=none] (M) at (-1,-1)
      {$M$};
    \node[draw=none] (M2) at (-1,-2)
      {$\ldots$};
    \node[draw=none] (G) at (-1,-3)
      {$\refrawsd{}{\store r V}{\get r}$};
    \node[draw=none] (N) at (1,-1)
      {$N$};
    \node[draw=none] (N2) at (1,-2)
      {$\ldots$};
    \node[draw=none] (S) at (1,-3)
      {$\ast$};

    \draw[-,>=latex] (MN) -- (M);
    \draw[-,>=latex] (MN) -- (N);
    \draw[-,>=latex] (M) -- (M2);
    \draw[-,>=latex] (M2) -- (G);
    \draw[-,>=latex] (N) -- (N2);
    \draw[-,>=latex] (N2) -- (S);
  \end{tikzpicture}

  & \begin{tikzpicture}[yscale=\mytikshrink]
    \node[draw=none] (MN) at (0,0)
      {$\refrawsu{}{\store r V}{(M\ N)}$};
    \node[draw=none] (M) at (-1,-1)
      {$M$};
    \node[draw=none] (M2) at (-1,-2)
      {$\ldots$};
    \node[draw=none] (G) at (-1,-3)
      {$V$};
    \node[draw=none] (N) at (1,-1)
      {$N$};
    \node[draw=none] (N2) at (1,-2)
      {$\ldots$};
    \node[draw=none] (S) at (1,-3)
      {$\ast$};

    \draw[-,>=latex] (MN) -- (M);
    \draw[-,>=latex] (MN) -- (N);
    \draw[-,>=latex] (M) -- (M2);
    \draw[-,>=latex] (M2) -- (G);
    \draw[-,>=latex] (N) -- (N2);
    \draw[-,>=latex] (N2) -- (S);
  \end{tikzpicture}
\end{tabular}%
} 
\end{center}

One last subtlety in the movement of reference substitutions concerns
$\lambda$-abstractions. As made explicit in
Table~\ref{fig:lth-red-rules}(a), the language is call-by-value:
reduction does not happen under $\lambda$-abstractions. In particular, a
read within the body of a $\lambda$-abstraction should only be accessible
by an assignment when the $\lambda$-abstraction is opened: we have a
natural notion of pure and impure terms. Pure terms are terms that
will not produce any effect when reduced, and in particular, all values
are expected to be pure terms since they cannot reduce further. This is
highlighted by rule \rrulet{subst-r}{val}: when encountering a pure
term, a reference substitution vanishes. But the case of abstraction
is more subtle: computational effects frozen in its body are freed
when the abstraction is applied. If one implements naively the
reduction rules of reference substitutions, then the following example
does not behave as expected:
$\refssd{}{V}{((\lambda x. \get{r})\ \ast)} \to
(\refssd{}{V}{(\lambda x .  \get{r})})\ (\refssd{}{V}{\ast}) \to
(\lambda x .  \get{r})\ \ast \to \get{r}$.
We end up with an orphan $\get{r}$ despite the fact that a substitution was
available at the beginning.
The problem is that the substitution diffuses through the application,
then encounters two pure terms and vanishes.

In an application, the left term eventually exposes the body of an
abstraction, and this body should be able to use any substitution that
was in its scope. The $\lambda$-substitution $[-]_\lambda$ is a
special stationary reference substitution attached to an
application. Its goal is precisely to record all the substitutions
that went down through it with Rules \rrulet{subst-r'}{lapp} and
\rrulet{subst-r'}{rapp}. When the application is finally reduced with
a $\beta_v$-rule, this substitution will turn to a downward one and
feed the $\get{r}$'s that were hidden in the abstraction's
body.

\begin{remark}
  An alternative approach to $\lambda$-substitution would be to make
  downward substitutions not vanish (i.e. getting rid of Rule
  \rrulet{subst-r}{val}). In this situation, values would be handled
  with their whole context of references assignment. Apart from the
  heavy syntactical cost of carrying around a lot of similar and
  possibly useless substitutions, the idea that hidden effects are
  released at application appears more natural regarding type and
  effect systems, as the one we introduce in Section~\ref{sec:type}.
\end{remark}

\begin{remark}
  Rule \rrule{subst-r'}{\top} acts as a garbage
  collection to eliminate top-level upward substitutions. While not
  necessary, this will greatly ease the statement and proof of lemmas
  and theorems (such as Lemma~\ref{lemma-progress}).
\end{remark}




\section{Stratification and Type System}
\label{sec:type}

We present in this section a stratified type and effect system for {\lthis}
inspired from~\cite{Amadio2009,madet:tel-00794977}.
A type and effect system aims at statically track the potential
effects that a term can produce when reduced. Here, the considered effects are
read from or write to references.

\subsection{The Type System of {{\lthis}}}

\newcommand{\Unit}{\ensuremath{\texttt{Unit}}}

Formally, the type and effect system is defined as follows.
\begin{center}
\begin{tabular}{lrcl}
  -effects & $e,e'$ & $\subset$ & $\{r_1,r_2,\ldots\}$ \\
  -types & $\alpha$ & $::=$ & $\mathbf{B} \mid A$ \\
  -value types & $A$ & $::=$ & $\Unit \mid A \sur{\to}{e} \alpha
  \mid \text{Ref}_r A$
\end{tabular}
\end{center}
The type $\Unit$ is the type of $\ast$. The function type
$A \sur{\to}{e} \alpha$ is annotated with an effect $e$: the set of
references the function is allowed to use. Finally, the type
$\text{Ref}_r A$ states that the reference $r$ can only be substituted
with values of type $A$.
Since thread cannot be fed as an argument to a function, the type of the
parallel components of a program is irrelevant. They are given the
opaque behavior type $\mathbf{B}$.
We separate $\alpha$-types and $A$-types to ensure
that $\mathbf{B}$ cannot be in the domain of a function.

In the typing rules we use two distinct contexts: variable contexts
$\Gamma$ of the form $x_1 : A_1,\ldots,x_n : A_n$ and reference
contexts $R$ of the form $r_1 : A_1,\ldots,r_n : A_n$. The latter
indicates the type of the values that a reference $r$ appearing in $M$
can hold. If the order of variables in $\Gamma$ is irrelevant, the
order of references in $R$ is important.

In order to ensure termination, the type and effect system is
stratified: this stratification induces an order forbidding circularity
in reference assignments. It is presented as a set of rules to build
the reference context and can be found in Figure~\ref{fig:strat}. It
states that when a new reference is added to the context,
all references appearing in its type must already be in $R$.
In Figure~\ref{fig:strat} the entailment symbol $(\vdash)$
is overloaded with several meanings:
\begin{itemize}
  \item $R$ is well formed, written {$R \vdash$}, means that the references appearing in
    $R$ are stratified.
  \item A type $\alpha$ is well formed under $R$, written {$R \vdash (\alpha,e)$}, means
    that all references appearing in $e$ and $\alpha$ are in $R$.
  \item A variable context $\Gamma$ is well formed under $R$, written $R \vdash \Gamma$,
    means that all the types appearing in $\Gamma$ are well formed under $R$.
\end{itemize}

The type and effect system features a subtyping relation whose
definition rules are presented in Figure \ref{fig:subtyp}. It
formalizes the idea that a function of type $A \sur{\to}{\{r\}} \alpha$ is
not obliged to use the reference $r$.

Typing judgments overload once more the symbol $(\vdash)$ and take
the form $R; \Gamma \vdash M : (\alpha,e)$ where $R$ is the reference
context, $\Gamma$ the variable context, $\alpha$ the type of $M$ and
$e$ the references that $M$ may affect.  Using the stratification and
the subtyping relation, the typing rules for the language {\lthis} are
presented in Figure~\ref{lces-typ-rules}. For succinctness, the application rule
has been factorized into two rules, (APP) and 
(SUBST) for $\xi = \lambda$. Thus (APP) is not a legitimate rule but an abuse of notation, and
must be followed by an appropriate (SUBST) in any type derivation.

\begin{remark}
In Rule \textsc{(lam)}, when abstracting over a variable in a term
$R;\Gamma,x:A \vdash M : (\alpha,e)$, the resulting value
$\lambda x. M$ is pure and hence its effects should be the empty
set. However one must remember that the body of this abstraction is
potentially effectful: this is denoted by annotating the functional
arrow ``$\to$'' with a superscript indicating these effects.
Also note that in general, the
order of references in $R$ is capital: it is the order
induced by stratification.
\end{remark}

\begin{figure}[tb]
\begin{gather*}
\infer[]
{}{\emptyset \vdash}
\qquad
\infer[]
{ R \vdash A \qquad r \notin \text{dom}(R) }
{ R,r:A \vdash }
\qquad
\infer[]
{ R \vdash }
{ R \vdash \Unit }
\qquad
\infer[]
{ R \vdash }
{ R \vdash \mathbf{B}}
\mynl
\infer[]
{ R \vdash A \qquad R \vdash \alpha \qquad e \subseteq \text{dom}(R)}
{ R \vdash A \sur{\rightarrow}{e} \alpha }
\qquad
\infer[]
{ R \vdash \qquad r : A \in R }
{ R \vdash \text{Ref}_r A }
\end{gather*}
\caption{Stratification of the type system}\label{fig:strat}
\end{figure}

\begin{figure}[tb]
\begin{gather*}
\infer[(ref)]
{}{R \vdash \alpha \leq \alpha}
\qquad
\infer[(arrow)]
{ R \vdash A' \leq A \qquad R \vdash (\alpha,e) \leq (\alpha',e') }
{ R \vdash A \sur{\to}{e} \alpha \leq A' \sur{\to}{e'}
\alpha' }
\qquad
\infer[(cont)]
{ e \subset e' \subset \text{dom}(R) \qquad R \vdash \alpha \leq \alpha' }
{ R \vdash (\alpha,e) \leq (\alpha',e') }
\end{gather*}
\caption{Subtyping relation}\label{fig:subtyp}
\end{figure}

\begin{figure}[tb]
\begin{gather*}
\infer[(var)]
{R \vdash \Gamma,x : A}
{R;\Gamma,x : A \vdash x : (A,\emptyset)}
\qquad
\infer[(unit)]
{R \vdash \Gamma} {R;\Gamma \vdash \ast : (\Unit,\emptyset)}
\qquad
\infer[(reg)]
{R \vdash \Gamma \qquad r : A \in R}
{R;\Gamma \vdash r : Ref_r{A}}
\mynl
\infer[(lam)]
{ R;\Gamma,x : A \vdash M : (\alpha,e) }
{ R;\Gamma \vdash \lambda{x}.M : (A \sur{\to}{e} \alpha,\emptyset) }
\qquad
\infer[(app)]
{ R;\Gamma \vdash M : (A \sur{\to}{e_1} \alpha, e_2) \qquad R;\Gamma \vdash N : (A,e_3) }
{ R; \Gamma\vdash M\ N : (\alpha, e_1 \cup e_2 \cup e_3) }
\mynl
\infer[(get)]
{ R;\Gamma \vdash \text{Ref}_r{A} }
{ R;\Gamma \vdash \get{r} : (A,\{ r \}) }
\qquad
\infer[(sub)]
{ R;\Gamma \vdash M : (\alpha,e) \qquad R \vdash (\alpha,e) \leq (\alpha',e') }
{ R;\Gamma \vdash M : (\alpha',e') }
\mynl
\infer[(subst)]
{ R; \Gamma, x_1 : A_1, \ldots, x_n : A_n \vdash M : (\alpha,e) \qquad
  \forall i:\ \ R; \Gamma \vdash V_i : (A_i,\emptyset) }
  { R; \Gamma \vdash \varss{\forall i:\, x_i \mapsto V_i}{M} : (\alpha, e) }
\mynl
\infer[(subst\text{-}r)]
{ \forall i:\ R; \Gamma \vdash r_i : \text{Ref}_{r_i} A_i \quad
  R; \Gamma \vdash M : (\alpha,e) \quad
  \forall i:\ r_i \in e \quad
  \forall i:\ V \in \mathscr{E}_i \implies R; \Gamma \vdash V : (A_i,\emptyset) }
{ R; \Gamma \vdash M[\forall i:\, r_i \mapsto \mathscr{E}_i]_{_\xi} :
  (\alpha, e) \qquad  \text{for } \xi \in \{\uparrow,\downarrow,\lambda\} }
\mynl
\infer[(par)]
{i=1,2 \qquad R;\Gamma \vdash M_i : (\alpha_i,e_i) }
{ R;\Gamma \vdash M_1 \parallel M_2 : (\mathbf{B},e_1 \cup e_2) }
\qquad
\infer[(sum)]
{i=1,2 \qquad R;\Gamma \vdash \mathbf{M_i} : (\alpha,e) }
{ R;\Gamma \vdash \mathbf{M}_1 + \mathbf{M}_2 : (\alpha,e) }
\end{gather*}
\caption{Typing rules for $\lambda_\text{cES}$}\label{lces-typ-rules}
\end{figure}


\subsection{Basic Properties of $\lambda_\text{cES}$ }

The language $\lambda_\text{cES}$ satisfies the usual safety
properties of a typed calculus.
First, {\lthis} enjoys subject reduction.

\begin{lemma}[Subject reduction]\label{lemma-subject-reduction}
  Let $R;\Gamma \vdash M : (\alpha,e)$ be a typing judgment, and assume that $M
  \rightarrow M'$. Then $R;\Gamma \vdash M' : (\alpha,e)$.\qed
\end{lemma}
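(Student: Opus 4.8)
Since the reduction relation is defined by a base set of rules closed under the congruence cases $\mathbf{S}[C[E[\cdot]]]$ and the structural equivalences of Table~\ref{fig:lth-struct-rules}, the induction has two sorts of cases: the base reduction rules proper, and the closure cases. For the closure under contexts, I would first establish a \emph{context-decomposition} sublemma: if $R;\Gamma \vdash \mathbf{S}[C[E[N]]] : (\alpha,e)$, then the hole is typed by some judgment $R;\Gamma' \vdash N : (\beta,e')$, and moreover replacing $N$ by any $N'$ with $R;\Gamma' \vdash N' : (\beta,e')$ yields again $R;\Gamma \vdash \mathbf{S}[C[E[N']]] : (\alpha,e)$. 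This reduces the congruence cases to the base rules. The contexts $E$, $C$, $\mathbf{S}$ only build applications (via the $\lambda$-substitution), reference substitutions, parallel compositions and sums, so this sublemma follows by a routine induction on the context using the (app)/(subst)/(par)/(sum) rules; the key point is that the typing of the hole is unchanged because each surrounding constructor simply accumulates effects monotonically.

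\emph{For the base cases I would check each rule of Table~\ref{fig:lth-red-rules} individually.} The workhorse is a \textbf{substitution lemma} stating that if $R;\Gamma,x:A \vdash M : (\alpha,e)$ and $R;\Gamma \vdash V : (A,\emptyset)$ then $R;\Gamma \vdash \metas{\{x\mapsto V\}}{M} : (\alpha,e)$ (and the analogous statement for multiple variables and for the value-extraction $\metas{\sigma}{V}$ of Notation~\ref{not:subst}). Given this, $(\beta_v)$ is immediate: an application $\refssl{r}{U}{((\lambda x.M)\,V)}$ is typed via (app) then (subst) for $\xi=\lambda$, and the right-hand side $\refssd{r}{U}{(\varss{\{x\mapsto V\}}{M})}$ re-packages exactly the same premises, turning the $\lambda$-substitution into a $\downarrow$-substitution, which the (subst-r) rule permits since it is uniform in $\xi \in \{\uparrow,\downarrow,\lambda\}$. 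The variable-substitution rules (b) are verified by inverting the (subst) rule and reassembling; the merge rule uses that composition of substitutions $\sigma,\tau$ preserves typing, which follows from the substitution lemma applied to each $\metas{\tau}{\sigma(x)}$.

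\emph{The reference-substitution rules (c) and (d) are where the effect bookkeeping must be checked carefully.} For the propagation rules — (subst-r)$_\parallel$, (subst-r)$_{\mathit{app}}$, the swap (subst-r)$_{\mathit{subst\text{-}r'}}$, and the upward rules (subst-r')$_\parallel$, (subst-r')$_{\mathit{lapp}}$, (subst-r')$_{\mathit{rapp}}$ — I would invert the (subst-r) rule to recover that each $r_i$ lies in the effect $e$ and that each stored value has the right type, then observe that pushing the substitution inward or merging two substitutions preserves these side conditions because the effect annotation $e$ is identical on both sides and the multiset union $\multv{U},\multv{V}$ just pools values that were already well-typed at the same reference type. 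The merge rule (subst-r)$_{\mathit{merge}}$ uses that the union of two multisets of values of type $A_i$ is again such a multiset. The garbage-collection rule (subst-r')$_\top$ and the vanishing rule (subst-r)$_{\mathit{val}}$ preserve the type trivially since they only discard a substitution while keeping the same $(\alpha,e)$.

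\textbf{The main obstacle is the rule (subst-r)$_{\mathit{get}}$}, which is genuinely non-structural: it produces a non-deterministic sum $\mathbf{S}[C[E[\get{r}]]] + \sum_{V\in\multv{V}(r)} C[E[V]]$, so I must type each summand and invoke (sum) repeatedly. The ``remainder'' summand $\get{r}$ keeps type $(A,\{r\})$ by (get), and since the original term had effect $e \supseteq \{r\}$ this is absorbed by subtyping via the (cont) rule. For each value summand I need $R;\Gamma \vdash V : (A,\emptyset)$, which is exactly the side condition guaranteed by the (subst-r) rule that typed the original $\refssd{}{V}{\get{r}}$; then the context-decomposition sublemma, applied with the hole previously typed by $(A,\{r\})$ and now filled with a value of type $(A,\emptyset)$, shows each $C[E[V]]$ carries the same type $(\alpha,e)$ as the original (here again (cont) absorbs the smaller effect). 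The delicate accounting is precisely that $\{r\} \subseteq e$ lets every summand be coerced to the common type $(\alpha,e)$ demanded by (sum); verifying this coercion uniformly across the remainder and all value summands is the crux of the proof.
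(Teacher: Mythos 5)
The paper gives no proof of Lemma~\ref{lemma-subject-reduction}: it is stated and closed with \qed{} immediately, and nothing in the appendices addresses it. So there is no official argument to compare yours against, and I can only assess your proposal on its own terms. Your architecture is sound: induction on the reduction derivation, a context-decomposition sublemma for $\mathbf{S}[C[E[\cdot]]]$ (which works here because these contexts bind no variables), inversion modulo subtyping for the base rules, and repeated use of (sum) with (cont)-coercion for \rrulet{subst-r}{get}. You also correctly spotted that $(\beta_v)$ needs no meta-level substitution lemma --- the substitution stays explicit on the right-hand side, so the redex's typing premises are simply re-packaged through (subst) and (subst-r) with $\xi$ switched from $\lambda$ to $\downarrow$ --- and that the sum produced by \rrulet{subst-r}{get} is the crux, each summand being coerced to the common type $(\alpha,e)$.

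One claim is too quick, and it concerns precisely the rules where subject reduction is most at risk in this system. You assert that for the propagation rules ``the effect annotation $e$ is identical on both sides'' so the side conditions of (subst-r) carry over. That is not true as stated for the rules that \emph{duplicate} a substitution into a sibling subterm: \rrule{subst-r'}{\parallel}, \rrulet{subst-r'}{lapp}, \rrulet{subst-r'}{rapp}, and also \rrulet{subst-r}{app}. In $(\refssu{r}{V}{M}) \parallel N \to \refssu{r}{V}{(M \parallel (\refssd{r}{V}{N}))}$, typing the new $\refssd{r}{V}{N}$ by (subst-r) requires each $r_i$ to lie in the effect of $N$ \emph{alone}, whereas the hypothesis of the left-hand side only guarantees $r_i$ in the effect of $M$; nothing forces $N$ to mention $r_i$ at all. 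The repair is the same (sub)/(cont) coercion you deploy for the get rule --- inflate $N$'s effect to $e_M \cup e_N$ before applying (subst-r), which leaves the outer judgment unchanged --- so this is a bookkeeping gap rather than a methodological one, but a complete proof must spell it out at these four rules and not only at \rrulet{subst-r}{get}. Finally, your induction silently relies on typing being invariant under the structural equivalences of Table~\ref{fig:lth-struct-rules}, under which the reduction is closed; this is routine given the symmetry of (par) and (sum), but it is a case you never discharge and should state.
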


\begin{remark}
  The fact that an effectful term
may become pure after reduction is reflected by the subtyping
relation.  For example, consider $P = \refssd{}{V}{\get{r}}$ where
$R \vdash P : (A,\{r\})$ and $P \to (V + \ldots)$. Since $V$ is a
value it can only be given the type $R \vdash V : (A,\emptyset)$.
Subject reduction would however require that $V$ has the same type
$(A,\{r\})$ as $P$. The subtyping relation corresponds to effect
containment, meaning that the effects appearing in types are an upper
bound of the actual effects produced by a term, so that
$(A,\emptyset)$ is a subtype of $(A,\{r\})$.
\end{remark}

Well-typed normal forms of {\lthis} may not be values. For example,
the term $\get{r}$ is not a value.  The progress theorem states that the
only reason for which a term may get stuck is the presence of an orphan {\getw}
with no corresponding {\setw}. Normal forms are thus either values, or
some application of values together with at least one such stuck
{\getw}.
\begin{lemma}[Progress]\label{lemma-progress}
  Let $R \vdash \mathbf{M} : (A,e)$ be a typable program that does not reduce
  further. Then $\mathbf{M}$ is of the form
 $\sum_{i=1}^n (M^i_1 \parallel \ldots \parallel M^i_{l_i})$ where the
 $M^i_j$ are
  either values or terms of the grammar
  $ M_{\text{norm}} ::= \get{r} \mid
  \refssl{r}{V}{(M_{\text{norm}}\ V)} \mid \refssl{r}{V}{(V\
    M_{\text{norm}})} \mid \refssl{r}{V}{(M_{\text{norm}}\
    M_{\text{norm}})}$.
  \qed
\end{lemma}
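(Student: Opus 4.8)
The plan is to prove the statement by first peeling off the non-deterministic sum and the parallel structure, and then establishing a per-thread normal-form characterization by induction. Writing $\mathbf{M}$ as a sum $\sum_i M^i$ of non-sum terms according to the grammar of sums, the context $\mathbf{S}$ shows that $\mathbf{M}$ reduces as soon as any $M^i$ does, so each summand $M^i$ is itself a normal form. Decomposing each $M^i$ into its maximal parallel threads $M^i_1 \parallel \cdots \parallel M^i_{l_i}$ (each $M^i_j$ not a parallel composition, using associativity and commutativity of $\parallel$), the context $C$ similarly forces each thread to be a normal form. It then suffices to prove the following: \emph{a closed, well-typed, non-parallel term $M$ in normal form is either a value or lies in the grammar $M_{\text{norm}}$}. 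I would prove this by induction on the size of $M$, proceeding by case analysis on its outermost constructor.

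The easy cases rule out every shape except values, $\get{r}$, and $\lambda$-substitution applications, by exhibiting a redex. A top-level variable substitution $\varss{\sigma}{M}$ is always a redex: every possible shape of its body is the left-hand side of one of the rules \rrulet{subst}{var}--\rrulet{subst}{merge}, so it cannot occur in a normal form. Likewise a downward substitution $\refssd{r}{V}{M}$ always matches one of \rrulet{subst-r}{val}, \rrule{subst-r}{\parallel}, \rrulet{subst-r}{subst-r'}, \rrulet{subst-r}{merge}, \rrulet{subst-r}{app}, or \rrulet{subst-r}{get} (this last one being available precisely because a top-level thread sits in a $\mathbf{S}[C[E[\cdot]]]$ position with $E=[.]$). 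An upward substitution thread is also always a redex: if it is alone in its summand, rule \rrule{subst-r'}{\top} applies; otherwise it has a sibling and rule \rrule{subst-r'}{\parallel} applies up to commutativity of $\parallel$. Thus no normal-form thread is a substitution, leaving only values (here $\ast$ or $\lambda x.M$, since closedness forbids a free variable), the base case $\get{r}$, and $\lambda$-substitution applications.

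The crux is the application case $\refssl{r}{V}{(M\,N)}$, and this is where typing becomes essential. Since $M$ and $N$ sit in the applicative contexts $\refssl{r}{V}{(E\ N)}$ and $\refssl{r}{V}{(M\ E)}$, both must themselves be normal forms; by the (app) rule $M$ carries a function type and $N$ a value type $A$, so neither can be a parallel composition (whose only type is $\mathbf{B}$) --- without this observation a parallel operand would be genuinely stuck, as $C$ cannot reach inside an application. An upward substitution in either operand is a redex via \rrulet{subst-r'}{lapp} or \rrulet{subst-r'}{rapp}, so by the induction hypothesis each of $M$, $N$ is a value or an $M_{\text{norm}}$ term. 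It remains to exclude the case where both are values: in a closed term a value of function type must be a $\lambda$-abstraction, so if $M$ and $N$ are both values then $\refssl{r}{V}{((\lambda x.M')\,N)}$ is a $(\beta_v)$-redex, contradicting normality. Hence at least one operand is an $M_{\text{norm}}$ term and $\refssl{r}{V}{(M\,N)}$ matches one of the three recursive productions of $M_{\text{norm}}$. The main obstacle I anticipate is the bookkeeping of the interplay between the three evaluation contexts --- in particular ensuring that upward substitutions are caught by the right rule (\rrule{subst-r'}{\top}/\rrule{subst-r'}{\parallel} at thread level versus \rrulet{subst-r'}{lapp}/\rrulet{subst-r'}{rapp} at operand level), and verifying that the typing constraints genuinely forbid the two stuck configurations, namely a parallel operand and a variable in function position.
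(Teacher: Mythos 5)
The paper itself offers no proof of this lemma---it is stated with a terminal \qed and no argument appears in the appendices---so there is nothing to compare your proposal against except correctness, and on that score your proof is essentially sound. The decomposition through $\mathbf{S}$ and $C$, the per-thread induction, the typing argument that operands of an application cannot be parallel compositions (a parallel can only receive type $\mathbf{B}$, no subtyping rule relates $\mathbf{B}$ to an $A$-type, and rule (app) forces both operands to carry $A$-types), and the $(\beta_v)$ argument excluding two value operands are exactly the points on which the statement hinges; your identification of the three recursive productions of $M_{\text{norm}}$ with ``at least one operand is stuck'' is also correct.

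One subcase is misstated, although the conclusion survives. You claim a downward substitution $\refssd{r}{V}{M}$ ``always matches one of'' the \rrulet{subst-r}{} rules, but those rules cover bodies that are values, parallel compositions, upward substitutions, downward substitutions, $\lambda$-substitution applications, and $\get{r}$---not bodies that are variable substitutions. The term $\refssd{r}{V}{(\varss{\sigma}{N})}$ is the left-hand side of no rule. It is nonetheless reducible, because $\refssd{r}{V}{E}$ is an evaluation context and, as you establish in the preceding case, a variable substitution is a redex for every possible shape of its body; so the inner $\varss{\sigma}{N}$ fires in $E$-position. The same remark is needed when a downward or variable substitution appears as an operand of an application: it is not matched at the top by any rule, but it sits in the $E$-contexts $\refssl{r}{V}{(E\ M)}$ and $\refssl{r}{V}{(M\ E)}$ and hence reduces, which is what licenses your appeal to the induction hypothesis there. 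With that one-line repair the induction goes through.
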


\newcommand{\rednd}{\to_\texttt{nd}}
\section{Termination}
\label{sec:term}

Our main result is a finitary, interactive proof of strong
normalization for {\lthis}. This section is devoted to the presentation
of the problem in the context of references, the explanation of
why the existing solutions do not apply to our setting and what we
propose instead.

\subsection{Shortcoming of Existing Solutions}

Introduced by Tait in 1967~\cite{tait67symboliclogjkic},
reducibility is a widely used,
versatile technique for proving strong normalization of
lambda-calculi. The core of this technique is to define inductively on
types $\tau$ a set $\mathbf{SC}(\tau)$ of well typed terms, called
{\em strongly computable terms}, satisfying a series of
properties. One proves that terms in $\mathbf{SC}(\tau)$ are strongly
normalizing (Adequacy) and (the most difficult part) that all well
typed terms of a type $\tau$ are actually in $\mathbf{SC}(\tau)$.

When adapting this technique to a type and effect system, the main
difficulty is that the definition is not obviously inductive anymore.
To define $\mathbf{SC}{(A \sur{\to}{e} \alpha)}$, we need to have
defined the types of references appearing in $e$. But $e$ can itself contain
a reference of type $A \sur{\to}{e} \alpha$: in the Landin's
fixpoint example shown in Section~\ref{sec:intro}, the looping term
has the type $\mbox{\Unit} \sur{\to}{\{r\}} \mbox{\Unit}$ while $r$ has
the same type. The role of stratification is to induce a
well-founded ordering on types so that the definition becomes
consistent.

The solution offered by stratification of the type system is however
not enough for {\lthis}. In
Boudol~\cite{BOUDOL2010716} where the technique is introduced,
concurrency is explicitly
controlled by threads themselves that are guaranteed to be the only process in
execution during each slice of execution. In {\lthis}, reduction
steps are performed in arbitrary threads such that stores may be affected by
others between two atomic steps in a particular thread.

To overcome this issue, for the language presented in Section~\ref{sec:amadio},
Amadio~\cite{Amadio2009} strengthens the condition defining $\mathbf{SC}$ sets
by asking that they also terminate under infinite stores of the form $(\store{r}{V_1}
\parallel \ldots \parallel \store{r}{V_n} \parallel \ldots)$ with $\seq{V}$
an enumeration of all the elements of
$\mathbf{SC}(\alpha)$. In this setting,
infinite stores are static top-level constructions: once saturated,
they are invariant by any new {\setw}. In a term $(M_1 \parallel M_2 \parallel S)$
with $S$ being such a store, any memory operation of $M_2$ is completely
invisible to $M_1$ and one can prove separately the termination of each thread.

However, this solution is not easily transposable to {\lthis}.
First of all, the rule \rrulet{subst-r}{get} produces all the possible
values associated to a store. The corresponding $\refssd{}{V}{\get{r}}$ would
reduce to an infinite sum $\get{r} + \sum_i V_i$ where, even if each summand
terminates, there is also for any positive integer $n$ a summand that takes at
least $n$ steps to reach normal form. The total sum is not terminating anymore.
Secondly, unlike static top-level stores, reference substitutions are
duplicated, erased and exchanged in an interactive way between
threads.

\subsection{Our Solution}
\label{sec:our}

To prove strong normalization of {\lthis}, we change gears.
With explicit substitutions, \setw s\
and \getw s\ are a way of exchanging messages between threads or subterms. Apart
from the termination of each term in isolation, the key property we need
is that threads cannot exchange an infinite amount of messages.

We formalize this condition by strengthening the definition of
strongly computable terms. We force them to also be {\em
  well-behaved}. A well-behaved term must only emit a finite number of
upward substitutions containing strongly computable terms when placed
in a ``fair'' context. A fair context is a context that would only
send strongly computable reference substitutions (albeit potentially
infinitely many).

These notions are defined in Section~\ref{sec:tech-defs}, while the
strong normalization result is spelled out in
Section~\ref{sec:sn}.

\subsection{Technical Definitions}
\label{sec:tech-defs}

\begin{remark}
  In the following, we do not want to deal with the clumsiness of handling sums
  of terms everywhere. If a reduction sequence is seen as a tree, where branching points
  correspond to \rrulet{subst-r}{get} and the children to all the summands
  produced by this rule, then by K\"{o}nig's lemma it is finite if and only if all its
  branches are finite. We will thus use an alternative
  non-deterministic reduction, denoted by $\rednd$, such that a sequence of
  reductions~$\rednd$ corresponds to a branch in the original
  reduction system. The
  termination of $\rednd$ is sufficient, thanks to the above remark.
  We define $\rednd$ by replacing the \rrulet{subst-r}{get} reduction by the
  following two rules:
  \begin{tabred}
    \refssd{}{V}{\get{r}} & \rednd & V \text{ if } V \in \multv{V}(r) \\
    \refssd{}{V}{\get{r}} & \rednd & \get{r}
  \end{tabred}
  In the rest of the paper, we only consider simple terms (non-sums) and the $\rednd$
  reduction.
\end{remark}

The purpose of the following Definition~\ref{definition-env-reduction}
is to formalize the interaction of a subterm with its context as a play against an opponent that
can non-deterministically drop downward substitutions at the top level or absorb upcoming
substitutions. This is summarized in the condition {\condWB} of
Definition~\ref{definition-strongly-computable}.

\begin{definition}[Environment Reduction]%
  \label{definition-env-reduction}%
  \rm
  Let $\vdash M : (\alpha,e)$ be a well typed term. Let $\seq{\multv{V}}$
  be a sequence of reference substitutions such that
  $\refrawsd{}{\multv{V}_i}{M}$ is well typed: we denote it with
  $\vdash (M,\seq{\multv{V}})$.  We call a $(M,\seq{\multv{V}})$-reduction a finite
  or infinite reduction sequence starting from $M$ where each step is either a
  $\rednd$, or an interaction with the environment defined by the additional
  rules $\refssu{}{V}{M} \to_\uparrow M$ and $M \to_\downarrow \refrawsd{}{\multv{V}_i}{M}$.
\end{definition}

We define the notion of strongly computable terms discussed in
Section~\ref{sec:our} as follows.

\begin{notation}
  By abuse of notation, in
  Definition~\ref{definition-strongly-computable}
  we shall omit the $R$ or $(\alpha,e)$ when it is
  obvious from the context and just write $M \in
  \mathbf{SC}$.
  Moreover, we abusively write $\multv{V} \subseteq \mathbf{SC}$ to
  mean that for all $r$ where $\multv{V}$ is defined we have
  $\multv{V}(r) \subseteq \mathbf{SC}_R(R(r),\emptyset)$.
\end{notation}

\begin{definition}[Strongly Computable Terms]%
  \label{definition-strongly-computable}%
  \rm The set $\mathbf{SC}_R(\alpha,e)$ of strongly computable terms
  of type $(\alpha,e)$ is defined by induction on the type $\alpha$.

\smallskip
\noindent
{\em Base case.}~~
Assume that $\alpha = \Unit \mid \mathbf{B}$ and that $R \vdash M :
    (\alpha,e)$. Then  $M \in \mathbf{SC}_R(\alpha,e)$ if it is
    \begin{description}
      \item[\condSN]Strongly normalizing under reference
        substitutions: For all $\multv{V} \subseteq \mathbf{SC}_R$,
         the term $\refssd{r}{V}{M}$ is strongly normalizing.
      \item[\condWB]Well Behaved:
        For any $\seq{\multv{V}}$ with $\forall i: \multv{V}_i
        \subseteq \mathbf{SC}_R$ and $\vdash (M,\seq{\multv{V}})$,
        for any $(M,\seq{\multv{V}})$-reduction $M = M_0 \to\!\ldots\!\to M_n \to
        \ldots$, there exists $n_0 \geq 1$ such that for all $k \geq 1$:
        \begin{enumerate}
          \item If $M_{k-1}$ is of the form $\refssu{}{U}{N}$ with $M_{k-1} \to_\uparrow M_k$
            then $\multv{U} \subseteq \mathbf{SC}$,
          \item If $k \geq n_0$ then $M_{k-1} \to M_k$ is not a $(\to_\uparrow)$
            step.
        \end{enumerate}
    \end{description}
\smallskip
\noindent
{\em Inductive case.}~~ Assume that $R \vdash M :
    (A \sur{\to}{e_1} \alpha,e)$ with $e_1 \subseteq e$. Then $M$ belongs to $\mathbf{SC}_R(A \sur{\to}{e_1} \alpha,e)$
    provided that
    for all $N \in \mathbf{SC}_R(A,e)$, we have $M\,N \in \mathbf{SC}_R(\alpha,e)$.
    The requirement that $e_1 \subseteq e$ can always be assumed
    without loss of generality thanks to subtyping.
\end{definition}

\begin{remark}
  The condition {\condSN} requires terms to be strongly normalizing when
  put under any finite reference substitution of strongly computable
  terms. The finiteness is sufficient, thanks to the presence of
  condition {\condWB}. This rather technical condition is the
  well-behaved requirement developed in Section~\ref{sec:our}: it says
  that there are at most $n_0$ ($\to_\uparrow$) steps.
\end{remark}

In the proof of termination we make use of a preorder $\rincl$. The proposition $M
\rincl N$ means that the two terms are essentially the same, but that $N$ may
have more available {\setw}s, and possibly in different positions.  This is
typically the case if $N$ is a reduct of $\refssd{}{V}{M}$ using only downward
structural rules. We give the full definition and important properties in Appendix~\ref{ap:termination}.
For the purpose of the proof of termination, its interesting property is the
following one:

\begin{lemma}[Simulation Preorder]\label{lemma-ord-simulation}
  Assume that $M \rincl N$. If $N$ is strongly normalizing then $M$
  is strongly normalizing.\qed
\end{lemma}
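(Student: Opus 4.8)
The plan is to prove the statement by contraposition, exhibiting $\rincl$ as a simulation for the reduction $\rednd$ and then transferring strong normalization along it: any infinite $\rednd$-reduction out of $M$ will be mirrored by an infinite $\rednd$-reduction out of $N$, contradicting the hypothesis on $N$.

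First I would establish the central \emph{simulation property}: if $M \rincl N$ and $M \rednd M'$, then there is a term $N'$ with $N \rednd^{+} N'$ and $M' \rincl N'$. The intuition driving this is the informal meaning of $\rincl$ recalled before the statement: $N$ is obtained from $M$ by making additional assignments available and by relocating substitutions (typically, as when $N$ is a reduct of $\refssd{}{V}{M}$ obtained through downward structural rules). Consequently every redex that $M$ can fire is already present in $N$ — possibly carrying extra substitution baggage — so the corresponding rule can be applied in $N$. The argument is a case analysis on the rule used in $M \rednd M'$, reading off from the structural characterization of $\rincl$ given in Appendix~\ref{ap:termination} the admissible shapes of $N$ and checking, rule by rule, that firing the matching redex in $N$ yields an $N'$ still related to $M'$.

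I would then transfer strong normalization by an infinite-descent argument. Assume $N$ is strongly normalizing and, for contradiction, that $M$ is not: choose an infinite sequence $M = M_0 \rednd M_1 \rednd \cdots$. Starting from $M_0 \rincl N =: N_0$ and iterating the simulation property, I obtain $N_0 \rednd^{+} N_1 \rednd^{+} \cdots$ with $M_i \rincl N_i$ for every $i$. As each step on the $M$-side contributes at least one reduction on the $N$-side, this is an infinite $\rednd$-reduction from $N$, contradicting its strong normalization; hence $M$ is strongly normalizing.

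The main obstacle is the simulation property, and within it two points deserve care. The first is the reference-substitution rules: since the extra assignments in $N$ are distributed differently from those in $M$, re-establishing $M' \rincl N'$ after moves such as \rrulet{subst-r}{merge}, \rrulet{subst-r'}{lapp}/\rrulet{subst-r'}{rapp}, or the get step $\refssd{}{V}{\get{r}} \rednd V$ forces one to verify that the value (resp. multiset) selected on the $M$-side is still available, and in the right position, on the $N$-side — which is exactly what the definition of $\rincl$ is engineered to guarantee. The second is the possibility of \emph{stuttering}: a propagation step of $M$ that $N$ has already performed corresponds to no genuine move of $N$, so a naive simulation need not be strict. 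To rule this out I would equip $\rincl$ with an auxiliary well-founded measure on the $M$-side — counting the pending downward-propagation redexes — that strictly decreases on such steps; combining it lexicographically with the $N$-reduction prevents a tail of purely stuttering steps, so that infinitely many $M$-steps do force infinitely many $N$-steps and the descent goes through.
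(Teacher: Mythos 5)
Your proposal is correct and follows essentially the same route as the paper: a weak simulation property for $\rincl$ established by case analysis on the applied rule (the paper's appendix proves exactly this, allowing zero answering steps on the $N$ side precisely for the \rrulet{subst-r}{} rules other than \rrulet{subst-r}{get}), combined with the observation that such stuttering steps cannot continue forever. The only detail to adjust is your measure: the raw count of pending downward-propagation redexes does not strictly decrease under \rrule{subst-r}{\parallel} or \rrulet{subst-r}{app}, which duplicate the substitution, so you should instead weight each downward substitution by the size of the subterm beneath it (or use a similar size-based measure), which does strictly decrease on every stuttering step.
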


\subsection{Strong Normalization for {\lthis}}
\label{sec:sn}

We are now ready to state and sketch the proof of strong-normalization for {\lthis}.
The easy part is the adequacy result, stated as follows.

\begin{lemma}[Adequacy]\label{lemma-adequacy}
  If $M \in \mathbf{SC}(\alpha,e)$ then $M$ is strongly normalizable.\qed
\end{lemma}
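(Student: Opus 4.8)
The plan is to proceed by induction on the type $\alpha$, following exactly the inductive structure of Definition~\ref{definition-strongly-computable}.

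For the base case $\alpha \in \{\Unit,\mathbf{B}\}$, assume $M \in \mathbf{SC}_R(\alpha,e)$. I would instantiate condition \condSN{} with the empty reference substitution $\multv{V} = \emptyset$, which trivially satisfies $\multv{V} \subseteq \mathbf{SC}_R$; this gives that $\refssd{}{V}{M}$ is strongly normalizing. Since prefixing $M$ with an (empty) downward substitution only makes assignments available in the sense captured by the preorder, we have $M \rincl \refssd{}{V}{M}$, and Lemma~\ref{lemma-ord-simulation} then transfers strong normalization back to $M$.

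For the inductive case $\alpha = A \sur{\to}{e_1} \beta$ with $e_1 \subseteq e$, assume $M \in \mathbf{SC}_R(A \sur{\to}{e_1} \beta,e)$ and choose any $N \in \mathbf{SC}_R(A,e)$. The inductive clause of Definition~\ref{definition-strongly-computable} gives $M\,N \in \mathbf{SC}_R(\beta,e)$, and the induction hypothesis at the strictly smaller type $\beta$ makes this term strongly normalizing. Because $[.]\,N$ is an evaluation context $E$ and reduction is closed under $E$, any infinite reduction from $M$ lifts to one from $M\,N$; hence $M$ is strongly normalizing. This is the routine ``applicative context'' argument of the reducibility method, and it is why the authors classify adequacy as the easy direction.

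The one point that genuinely needs care — the main obstacle — is guaranteeing that a suitable argument $N$ exists in the inductive case, i.e.\ that the sets $\mathbf{SC}_R(A,e)$ are non-empty. I would settle this by an auxiliary induction on value types, showing that a free variable $x : A$ always lies in $\mathbf{SC}_R(A,e)$. At base types this is immediate: a variable is a value, so $\refssd{}{V}{x} \to x$ by \rrulet{subst-r}{val}, yielding \condSN{}, and it emits no upward substitution, so \condWB{} holds vacuously. At an arrow type one checks $x\,N \in \mathbf{SC}_R(\beta,e)$ for every computable $N$, using that $x$ is \emph{neutral} — it creates no head redex — so that reductions of $x\,N$ either dispatch substitutions or act within $N$, reducing the required \condSN{}/\condWB{} properties to those already available for $N$. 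Stratification ensures this auxiliary induction is well-founded. I expect the bookkeeping of the neutral case (tracking how downward substitutions diffuse through the application and are absorbed) to be the only delicate part, whereas the truly hard work lies in the converse inclusion — that every typed term belongs to $\mathbf{SC}$ — which adequacy does not address.
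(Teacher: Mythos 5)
Your overall strategy coincides with the paper's proof (Appendix~\ref{ap:sound-adeq}): an induction on types that simultaneously establishes non-emptiness of the $\mathbf{SC}$ sets, with the same base case (strip the downward substitution supplied by \condSN) and the same applicative-context argument at arrow types. The divergence --- and the one genuine problem --- lies in how non-emptiness is witnessed.

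You propose a free variable $x : A$ as inhabitant, via a neutrality argument. In this paper's setting that witness is not available: Definition~\ref{definition-strongly-computable} defines $\mathbf{SC}_R(\alpha,e)$ only for terms typed as $R \vdash M : (\alpha,e)$, i.e.\ under an \emph{empty} variable context, and the environment reductions underlying \condWB{} likewise presuppose $\vdash (M,\seq{\multv{V}})$. A free variable is simply not a candidate member of any $\mathbf{SC}$ set; indeed the whole point of the Soundness Lemma~\ref{lemma-soundness} is that open terms enter the picture only after their free variables have been replaced by strongly computable values. Moreover, even if one re-engineered the definitions to admit open terms, your estimate that the neutral case is ``bookkeeping'' undersells it: showing that $x\,N_1\,\ldots\,N_n$ is strongly normalizing when the head can never fire a $(\beta_v)$ is precisely the ``P is passive'' case of the soundness proof --- it needs \condWB{} to terminate the exchange of substitutions among the $N_i$, the gathering of the received substitutions into a single $\multv{X}_i$, and the bound $N_i' \rincl \refsrawsd{}{\multv{X}_i}{N_i}$ discharged through Lemma~\ref{lemma-ord-simulation}. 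That is the hard core of the whole development, not a side condition of adequacy. The paper avoids all of this by choosing closed witnesses: $\ast \in \mathbf{SC}(\texttt{Unit})$, $\ast \parallel \ast \in \mathbf{SC}(\mathbf{B})$, and at arrow type $A \sur{\to}{e_1} \alpha'$ the constant abstraction $\lambda x.M_{\alpha'}$, where $M_{\alpha'} \in \mathbf{SC}(\alpha')$ is given by the induction hypothesis and $x$ is not free in $M_{\alpha'}$. Replacing your variable witnesses by these closed ones turns your proposal into the paper's proof; keeping them requires reworking the definitions and re-proving the passive-head machinery before adequacy can be stated.
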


The heart of our result is the opposite result, the soundness:

\begin{lemma}[Soundness]\label{lemma-soundness}
  Suppose that  $R; x_1:A_1,\ldots,x_n:A_n \vdash P : (\alpha,e)$,
  and that
  $\sigma$ maps each $x_i$ to some $V_i\in\mathbf{SC}(A_i)$. Then
  $\varss{\sigma}{P} \in \mathbf{SC}(\alpha,e).$
\end{lemma}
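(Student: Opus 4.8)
The plan is to prove the Soundness Lemma by induction on the typing derivation of $P$, following the standard reducibility recipe adapted to the type-and-effect setting. The statement is the usual ``fundamental lemma'': a well-typed term under a strongly computable substitution is itself strongly computable. I would proceed case by case on the last typing rule used. The variable and unit cases are immediate: $\varss{\sigma}{x}$ reduces to $\sigma(x) = V_i \in \mathbf{SC}(A_i)$ by hypothesis, and $\varss{\sigma}{\ast}$ reduces to $\ast$, which one checks directly satisfies \condSN\ and \condWB\ (a pure value emits no upward substitution and cannot diverge). For the abstraction rule \textsc{(lam)}, I would use the inductive case of Definition~\ref{definition-strongly-computable}: to show $\varss{\sigma}{(\lambda x.M)} \in \mathbf{SC}$ I must show that for every $N \in \mathbf{SC}(A,e)$ the application is strongly computable, which unfolds via $(\beta_v)$ and rule \rrule{subst}{\lambda} and reduces to an instance of the induction hypothesis with the extended substitution $\sigma, \{x \mapsto N\}$ (using that $N$, or rather the value it reduces to, lands in the appropriate set).

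The application case \textsc{(app)} is where the definition of $\mathbf{SC}$ is designed to do the work: by the induction hypothesis $\varss{\sigma}{M} \in \mathbf{SC}(A \sur{\to}{e_1}\alpha, \cdot)$ and $\varss{\sigma}{N} \in \mathbf{SC}(A, \cdot)$, so the inductive clause directly yields that the application is in $\mathbf{SC}(\alpha, \cdot)$, modulo propagating $\sigma$ through the application with \rrulet{subst}{app} and reconciling the effect annotations via subtyping (Lemma~\ref{lemma-subject-reduction} guarantees types are preserved, and the subtyping clause \emph{(cont)} lets me enlarge effects freely). The \textsc{(get)} case requires showing $\get{r}$ is strongly computable at a base type: here \condSN\ follows because a downward substitution hitting $\get{r}$ either vanishes or produces a strongly computable value (by the hypothesis $\multv{V} \subseteq \mathbf{SC}$), and \condWB\ holds because $\get{r}$ emits no upward substitution at all, so $n_0 = 1$ works. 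The \textsc{(par)} and \textsc{(sum)} cases are handled at the opaque type $\mathbf{B}$, again reducing to the base-case conditions.

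The real difficulty, and the crux on which the whole argument turns, is the base case for terms whose type is $\Unit$ or $\mathbf{B}$: establishing conditions \condSN\ and \condWB\ for compound terms such as applications and parallel compositions that have been driven down to base type. This is precisely the interactive, Game-Semantics-flavored part of the paper: I must show that under any fair environment sequence $\seq{\multv{V}}$ of strongly computable substitutions, the term emits only finitely many upward substitutions (giving the bound $n_0$), and that each emitted $\multv{U}$ is itself strongly computable. The natural strategy is to argue that every value that can appear inside an emitted upward substitution originates either from a subterm already known to be strongly computable by the induction hypothesis, or from a downward substitution supplied by the fair environment — which by fairness is strongly computable — and that the \emph{finiteness} of the number of $(\to_\uparrow)$ steps follows from a well-founded measure on the term combined with \condSN. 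I expect this base-case bookkeeping — tracking how upward substitutions are created by \rrule{subst-r'}{\parallel}, \rrulet{subst-r'}{lapp}, \rrulet{subst-r'}{rapp} and garbage-collected by \rrule{subst-r'}{\top}, and proving a finite bound uniformly over all environment reductions — to be the main obstacle, and I would lean heavily on the Simulation Preorder (Lemma~\ref{lemma-ord-simulation}) to transfer strong normalization from the reduced, substitution-saturated forms back to the original term.
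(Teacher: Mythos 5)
Your overall skeleton matches the paper's proof: induction over the structure of $P$ (equivalently its typing derivation), the application case discharged directly by the arrow-type clause of Definition~\ref{definition-strongly-computable}, the crux located at base type where \condSN{} and \condWB{} must be verified for compound terms, and the Simulation Preorder (Lemma~\ref{lemma-ord-simulation}) used to transfer strong normalization from substitution-saturated bounds back to the actual reducts. This is indeed how the paper and its Appendix~\ref{ap:sound-adeq} proceed.

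The gap is in the mechanism you propose for that crux. You assert that the finiteness of $(\to_\uparrow)$ steps ``follows from a well-founded measure on the term combined with \condSN'', and that you would prove ``a finite bound uniformly over all environment reductions''. Neither can work. First, \condSN{} says nothing about environment reductions: these interleave arbitrarily many $\to_\downarrow$ injections, so they can be infinite even for strongly normalizing terms, and \condSN{} gives no bound on them. Second, no measure on the term alone (and a fortiori no uniform bound) can control upward emissions, because the emissions depend on the values the opponent injects: for the fixed term $(\get{r}\ \ast)$, the environment can inject, for any $k$, a value $\lambda x.(\refrawsu{}{\multv{W}_1}{\ast} \parallel \ldots \parallel \refrawsu{}{\multv{W}_k}{\ast})$ whose consumption by the $\get{r}$ and subsequent application releases $k$ upward emissions; so the $n_0$ of \condWB{} is inherently per-reduction-sequence, not uniform, and is not a function of the term. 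The paper obtains finiteness from a different source: the interacting subterms $N_i$ (and the injected values) are strongly computable by hypothesis and induction, hence themselves \condWB; the reduction of the compound term is mapped onto environment reductions of each $N_i$, each of which stops emitting after finitely many steps, so the mutual exchange of substitutions terminates. Only after that does one gather the substitutions received by each $N_i'$ into a single $\multv{X}_i$, establish the bound $N_i' \rincl \refsrawsd{}{\multv{X}_i}{N_i}$, and invoke \condSN{} together with Lemma~\ref{lemma-ord-simulation}. Relatedly, your (lam) case cannot simply ``reduce to the induction hypothesis with the extended substitution'': the paper must split into a passive case (the head $\beta_v$ never fires, and termination must come purely from the WB/SN of the arguments as just described, with no help from the IH on the body) and an active case (where the IH applies to $\varss{x\mapsto N_1'}{\varss{\sigma}{M}}$ for a \emph{reduct} $N_1'$ of $N_1$, together with closure of $\mathbf{SC}$ under downward substitutions and a mimicking construction yielding $\Lambda' \rincl \Delta$). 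Without replacing your measure argument by this WB-of-subterms argument, the central case of the proof does not go through.
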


%
\noindent
{\em Sketch of the proof of Lemma \ref{lemma-soundness}.}
The proof is performed by induction on the structure of the term $P$. To show
how the proof works, we focus on a representative case.

Let us treat the case $P = \varss{\sigma}{(\lambda x. M)}$. We assume that $M
\in \mathbf{SC}$, and we want to show that $P \in \mathbf{SC}$. Let $\alpha =
A_1 \sur{\to}{e_1} \ldots \sur{\to}{e_{n-1}} A_n \sur{\to}{e_n} \beta$ be the
expansion of the type of $P$, where $\beta$ is either $\mbox{\Unit}$ or $\mathbf{B}$.
If we unfold the recursive definition of $\mathbf{SC}$ sets, proving that $P \in
\mathbf{SC}(\alpha)$ amounts to check that
$\Lambda(P,\multv{U},\seq{N}) := \refssd{}{U}{P\ N_1\ \ldots\ N_n}$
satisfies {\condSN} and {\condWB} for all strongly computable
$N_1,\ldots,N_n$ and $\multv{U}$ with suitable types.
By abuse of notation we omit some
parameters of $\Lambda(P,U,(N_i))$ and write $\Lambda(P)$ when clear. If $\Lambda(P) \to^\ast P'\ N_1'\ \ldots\
N_n'$ (omitting some reference substitutions) where $\beta_V$ reductions occur
only inside subterms $P,N_1,\ldots,N_n$, we
will liberally call $N_i'$ a reduct of $N_i$. We focus on the strong
normalization of $\Lambda(P)$, the well-behaved condition being proved
in a similar manner.
Consider a sequence $S$ of reductions of $\Lambda(P)$. The head term $P$ has
only one possible reduction, namely  $P' = \lambda x . (\varss{\sigma}{M})$, and must
then take part in a $(\beta_v)$ reduction with a reduct of $N_1$ to reduce
further. We consider the two cases:
\begin{description}
  \item[P is passive] If such a $(\beta_v)$ does not occur in the
    sequence $S$, all the
    reducts of $\Lambda(P)$ have the form $\Lambda' = Q\ N_1'\ \ldots\ N_n'$ where
    $Q$ is either $P$ or $P'$ and where for all $i$,
    $N_i'$ is a reduct of $N_i$. We omitted a bunch of floating reference
    substitutions for the sake of readability.  $Q$ is inert and does not play
    any role in the termination: we can focus on showing that all the reducts 
    of $N_i$s cannot diverge. While they do terminate in isolation as strongly
    computable terms, the possibility of an infinite exchange of substitutions
    prevent us from using {\condSN} directly. This is the precise role of {\condWB}:
    the reduction of each $N_i$ can be mapped to an environment reduction. We
    adopt the following strategy :
    \begin{enumerate}
      \item Use {\condWB} to show that the exchange of substitutions must come to
        an end
      \item For each $N_i'$, gather all the substitutions (a finite number
        according the previous step) it receives during the reduction of
        $\Lambda(P)$ and merge them into one $\multv{X}_i$
      \item Show that we can bound each reduct $N_i'$ : $N_i' \rincl
        \refsrawsd{}{\multv{X}_i}{N_i}$
    \end{enumerate}
    Since the bounding terms are strongly normalizing by {\condSN}, so are the
    $N_i'$s by Lemma~\ref{lemma-ord-simulation}, and the considered reduction is finite.
  \item[P is active] Now, assume that at some point the leftmost application
    $\refssl{}{W}{Q\ N_1'}$ is reduced to $Q' =
    \refssd{}{W}{\varss{x \mapsto N'_1}{\varss{\sigma}{M}}}$ in $\Lambda(P)$, such that
    $\Lambda' = Q'\ N_2'\ \ldots\ N_n'$. The crucial fact is that $Q'$ is
    actually strongly computable. Step by step :
    \begin{enumerate}
      \item By induction hypothesis, $\varss{x \mapsto N'_1}{\varss{\sigma}{M}}$ is
        strongly computable.
      \item By a general lemma, $M \in \mathbf{SC} \implies \refssd{}{V}{M} \in
        \mathbf{SC}$ for suitable $\multv{V}$. In particular, this means $Q' \in
        \mathbf{SC}$.
      \item Then, we can construct a substitution $\multv{X}$, such that
        starting from $\Lambda(Q,\multv{X},\seq{N}) = \refssd{}{X}{Q\ N_1\ \ldots\ N_n}$ we can mimic the
        reduction steps of $\Lambda(P)$ and get a $\Delta$ such that $\Lambda(Q) \to^\ast \Delta$
        with $\Lambda' \rincl \Delta$.
    \end{enumerate}
    $Q$ being strongly computable, $\Lambda(Q,\multv{X},\seq{N})$ (hence
    $\Delta$) is strongly normalizing, and we conclude once again with
    Lemma~\ref{lemma-ord-simulation}.\qed
\end{description}

The reader may found other cases of the proof sketched in \ref{ap:sound-adeq}.
Finally, together with Lemma \ref{lemma-soundness} (with $n=0$) and Lemma
\ref{lemma-adequacy} one can prove strong-normalization for {\lthis}. Moreover,
the reduction is locally confluent (see \ref{ap:confluence}): we deduce the
confluence of the language.

\begin{theorem}[Termination]\label{theorem-termination}
  All well-typed closed terms are strongly normalizing.\qed
\end{theorem}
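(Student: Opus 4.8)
The plan is to derive the theorem as an immediate corollary of Soundness (Lemma~\ref{lemma-soundness}) and Adequacy (Lemma~\ref{lemma-adequacy}), the only extra work being to reconcile two mismatches: that $\mathbf{SC}$, and hence both lemmas, are phrased for the branch-reduction $\rednd$ on simple terms, whereas the theorem concerns the full reduction $\to$ that produces sums; and that Soundness yields a statement about $\varss{\emptyset}{M}$ rather than about $M$ itself.

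Concretely, let $M$ be a well-typed closed term, so $R \vdash M : (\alpha,e)$ for some reference context $R$, type $\alpha$ and effect $e$. First I would invoke Lemma~\ref{lemma-soundness} in the degenerate case $n=0$: with an empty variable context the substitution $\sigma$ is the empty map, the hypotheses on the $V_i$ are vacuous, and the conclusion reads $\varss{\emptyset}{M} \in \mathbf{SC}(\alpha,e)$. Adequacy (Lemma~\ref{lemma-adequacy}) then gives that $\varss{\emptyset}{M}$ is strongly normalizing for $\rednd$.

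It remains to transfer this to $M$. The empty variable substitution is \emph{transparent}: firing the structural rules of Table~\ref{fig:lth-red-rules}(b) merely pushes $\varss{\emptyset}{\cdot}$ through the term, leaving inert residues $\varss{\emptyset}{\cdot}$ only under $\lambda$-abstractions (where weak reduction is blocked anyway), and these are reabsorbed at the next $\beta_v$-step through \rrulet{subst}{merge}, since composing $\emptyset$ with any $\sigma$ gives back $\sigma$. Consequently every $\rednd$-step of $M$ can be simulated by finitely many propagation steps of $\varss{\emptyset}{M}$ followed by the corresponding step, the invariant being that the current term is always $M$ decorated with transparent empty substitutions. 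An infinite reduction of $M$ would thus lift to an infinite reduction of $\varss{\emptyset}{M}$, contradicting the previous paragraph, so $M$ is strongly normalizing for $\rednd$. Finally, each term is finite and each \rrulet{subst-r}{get} produces only finitely many summands (the multiset $\multv{V}(r)$ being finite), hence $\to$ is finitely branching; the König's-lemma observation opening Section~\ref{sec:tech-defs} then lifts strong normalization from $\rednd$ to the full reduction $\to$.

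All the genuine difficulty of the development is concentrated in Soundness, whose interactive reducibility argument I take as given; relative to it, the theorem is a short corollary. Within the corollary itself the two steps needing care are the transparency simulation and the $\rednd$-to-$\to$ passage, and I expect the former to be the more delicate, since it must track the residual $\varss{\emptyset}{\cdot}$ trapped beneath abstractions. This never obstructs a reduction, however, and is discharged uniformly by noting how \rrulet{subst}{merge} reabsorbs each residue exactly when its enclosing abstraction is applied.
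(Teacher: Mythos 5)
Your proposal takes exactly the paper's route: the paper obtains Theorem~\ref{theorem-termination} precisely by combining Lemma~\ref{lemma-soundness} in the case $n=0$ with Lemma~\ref{lemma-adequacy}, the passage from $\rednd$ back to the full sum-producing reduction $\to$ being already discharged by the K\"onig's-lemma remark opening Section~\ref{sec:tech-defs}. The two bookkeeping points you spell out --- the transparency of the empty substitution $\varss{\emptyset}{\cdot}$ and the finite branching of $\to$ --- are treated in the paper as implicit abuses of notation, so your write-up is, if anything, slightly more careful than the original while being the same proof.
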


\begin{corollary}[Confluence]\label{lemma-confluence}
  The reduction is confluent on typed terms.\qed
\end{corollary}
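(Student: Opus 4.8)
The plan is to obtain global confluence from local confluence via Newman's Lemma. Since Theorem~\ref{theorem-termination} already guarantees that every well-typed term is strongly normalizing, it suffices to show that the reduction is \emph{locally confluent} on typed terms: whenever $M \to M_1$ and $M \to M_2$, there is a common reduct $M'$ with $M_1 \to^\ast M'$ and $M_2 \to^\ast M'$. Newman's Lemma then upgrades this weak confluence to the full Church--Rosser property on the terminating subsystem, which is exactly the claim. Note that confluence must be read \emph{modulo} the structural equivalence of Table~\ref{fig:lth-struct-rules} (associativity and commutativity of $\parallel$, and associativity, commutativity and idempotence of $+$); since every rule of Table~\ref{fig:lth-red-rules} is closed under these structural rules, $\to$ is compatible with the equivalence, and I would invoke the variant of Newman's Lemma for rewriting modulo an equivalence.

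First I would reduce local confluence to a finite analysis of critical pairs, i.e. of the minimal terms on which two rule instances overlap at the root. Because all rules are closed under the evaluation contexts $E$, $C$ and $\mathbf{S}$, redexes in disjoint or strictly nested positions commute by a routine diagram, so only the genuine root overlaps require attention. These group into a few families: two substitution-dispatching rules competing above the same constructor (for instance a variable substitution and a downward reference substitution both sitting above an application, or two downward substitutions that may either \rrulet{subst-r}{merge} or be pushed through a constructor); the commutation rule \rrulet{subst-r}{subst-r'} swapping a downward past an upward substitution, which must be shown to join with the remaining dispatch rules; and the interaction of the upward-propagation rules \rrule{subst-r'}{\parallel}, \rrulet{subst-r'}{lapp} and \rrulet{subst-r'}{rapp} with the structural equivalence for $\parallel$ and with $(\beta_v)$. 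In each case I would close the diagram by letting both substitutions complete their descent or ascent and then reconciling the accumulated multisets through the \texttt{merge} rules, so that the two branches meet up to the structural equivalence.

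The step I expect to be the main obstacle is the rule \rrulet{subst-r}{get}, the unique rule producing a non-deterministic sum, $\refssd{}{V}{\get{r}} \to \get{r} + \sum_{V \in \multv{V}(r)} (\ldots)$. Its overlaps are delicate precisely because of the ``remainder'' summand $\get{r}$ that keeps the read alive: a substitution arriving through a competing redex must be shown to land consistently in \emph{every} summand of the result, and the joining terms coincide only up to the idempotence, associativity and commutativity of $+$. The representative case is that of two nested downward substitutions around a single $\get{r}$ (the scenario highlighted in the Remark following Table~\ref{fig:lth-red-rules}): I would verify that firing the two reads in either order, or first merging the two substitutions and then reading, all yield the same sum modulo the structural equivalence. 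Checking that this closure is uniform across all summands, and that it interacts correctly with the \texttt{merge} and commutation rules, is where the real work lies; once it is dispatched, the remaining overlaps are mechanical, local confluence follows, and Corollary~\ref{lemma-confluence} is obtained by Newman's Lemma together with Theorem~\ref{theorem-termination}.
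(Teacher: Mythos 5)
Your proposal follows exactly the paper's own route: the paper deduces Corollary~\ref{lemma-confluence} from Theorem~\ref{theorem-termination} via Newman's Lemma, proving local confluence in Appendix~\ref{ap:confluence} (Lemma~\ref{lemma-weak-confluence}) through a classification of critical pairs (Lemma~\ref{lemma-critical-pairs}) whose diagrams are closed modulo the structural rules, just as you describe. The only divergence is one of emphasis: in the paper's analysis the hardest diagrams come not from \rrulet{subst-r}{get} (whose extra summands are written uniformly as a remainder $\mathbf{M}'$ and absorbed by the structural rules for $+$) but from overlapping instances of \rrule{subst-r'}{\parallel}, i.e.\ two upward substitutions climbing out of parallel compositions that share a thread, which require long chains of \rrule{subst-r'}{\top}, \rrulet{subst-r}{subst-r'} and \rrulet{subst-r}{merge} steps to reconcile.
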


\section{Discussion}
\label{sec:discussion}

\subsection{Comparison with Other Languages}
One may wonder how {\lthis} compares to other concurrent calculi and especially
the language {\lamadio} presented in Section~\ref{sec:amadio}. In
particular, {\lthis}
is almost an explicit substitution version of {\lamadio}.
Indeed, it turns out that we can define a translation of {\lamadio}
to {\lthis}. The weak reduction of {\lthis} prevents variable substitutions from
percolating under abstractions, and translated terms may evaluate to closures as
$\lambda x.(\varss{\sigma}{M})$ instead of the expected $\lambda
x.(M\{x_1/\sigma(x_1),\ldots,x_n/\sigma(x_n)\})$ if
$\dom{\sigma}=\{x_1,\ldots,x_n\}$. Up to this difference (that can be properly
formalized -- see Appendix~\ref{ap:simul}) there is a simulation of {\lamadio} in {\lthis}.

More generally, we followed the design choice of adopting cumulative stores,
while many languages in the literature and in practice follow an erase-on-write
semantics. Remarkably, our choice makes the version with explicit substitutions
asynchronous, as various upward and downward substitutions may be reduced
arbitrarily without the need of any scheduling. Another point that
justifies its introduction is that such calculi simulate a lot of other
paradigms, such as erase-on-write or communication channels for example, as
mentioned in ~\cite{madet:tel-00794977}. This means that the termination of the cumulative
store version implies the termination of the aforementioned variants.
To illustrate our point, let us quickly sketch how a calculus with explicit
substitutions with an erase-on-write semantics could be devised. First, encode
$\set{r}{V}$ as $((\lambda x. \refrawsu{}{r \mapsto [V]}{\ast})\,\ast)$. Then,
when an upward substitution becomes reducible, apply all possible downward and
upward structural rules until it is not possible anymore. Finally, instead of
merging reference substitutions, the upper one erases the lower one. Its termination
follows immediately from the one of {\lthis}.

\subsection{Globality, Locality and Linear Logic}
Linear logic's proof nets are graphical representations of proofs as
graphs endowed with a local cut-elimination procedure. They are strongly connected to
systems with explicit substitutions (see e.g. \cite{Accattoli:2015:PNC:2852784.2853038}).
A lot of calculi have been encoded in proof nets or related systems:
call-by-value and call-by-name $\lambda$-calculi~\cite{MARAIST1995370},
$\pi$-calculus with limited replication~\cite{EHRHARD2010606},
$\lambda$-calculus with references~\cite{tranquilli:hal-00465793}, {\em etc}.
These representations naturally lead to parallel
implementations~\cite{mackie94,Pedicini:2007,Pinto:2001}, extend to richer
logics~\cite{ehrhard:hal-00150274} and form the basis for concrete operational
semantics in the form of token-based automata~\cite{DANOS199640}.
Our future goal is to push further this correspondence by modeling a language featuring
concurrency, references and replication. The constructs of {\lthis} are inspired
by the approach of~\cite{EHRHARD2010606} and~\cite{tranquilli:hal-00465793}.
{\lthis} can be seen as a calculus-side version of some kind of proof nets. The
translation and simulation of {\lamadio} in {\lthis} could be described as a
compilation from a {\em global shared memory} model to a {\em local message
passing} one, in line with proof nets' philosophy. The correctness of this
compilation requires that a well-typed strongly normalizing term in the initial
language is also strongly normalizing in the target language, and this is what
this paper achieves.

\section{Conclusion}
\label{sec:ccl}

In this paper, we presented a lambda-calculus with concurrence and
references, featuring explicit substitutions for both variables and
references. We discussed the issues explicit substitutions raise with
respect to termination and explained how standard techniques fail to
address them.

The main contribution of the paper is a solution to this problem.
Reminiscent of Game Semantics, the proof technique we apply is
interesting in its own right. Based on an interactive point of view,
it is reasonable to expect that the general methodology we present
can be extended to other settings, such as proof nets or concurrent
calculi.

Finally, with this work we open the way to an embedding of a calculus with references into
differential proof nets, which has been one of our leading motivation
for this work with the hope that these results may be as fruitful as
they have been in the study of lambda-calculus.

\bibliographystyle{eptcs}
\bibliography{info}

\begin{thebibliography}{10}
\providecommand{\bibitemdeclare}[2]{}
\providecommand{\surnamestart}{}
\providecommand{\surnameend}{}
\providecommand{\urlprefix}{Available at }
\providecommand{\url}[1]{\texttt{#1}}
\providecommand{\href}[2]{\texttt{#2}}
\providecommand{\urlalt}[2]{\href{#1}{#2}}
\providecommand{\doi}[1]{doi:\urlalt{http://dx.doi.org/#1}{#1}}
\providecommand{\bibinfo}[2]{#2}

\bibitemdeclare{inproceedings}{Abadi:1989:ES:96709.96712}
\bibitem{Abadi:1989:ES:96709.96712}
\bibinfo{author}{M.~\surnamestart Abadi\surnameend},
  \bibinfo{author}{L.~\surnamestart Cardelli\surnameend},
  \bibinfo{author}{P.-L. \surnamestart Curien\surnameend} \&
  \bibinfo{author}{J.-J. \surnamestart Levy\surnameend} (\bibinfo{year}{1990}):
  \emph{\bibinfo{title}{Explicit Substitutions}}.
\newblock In: {\sl \bibinfo{booktitle}{Proceedings of the 17th ACM
  SIGPLAN-SIGACT Symposium on Principles of Programming Languages}},
  \bibinfo{series}{POPL '90}, \bibinfo{publisher}{ACM}, \bibinfo{address}{New
  York, NY, USA}, pp. \bibinfo{pages}{31--46}.

\bibitemdeclare{article}{Accattoli:2015:PNC:2852784.2853038}
\bibitem{Accattoli:2015:PNC:2852784.2853038}
\bibinfo{author}{Beniamino \surnamestart Accattoli\surnameend}
  (\bibinfo{year}{2015}): \emph{\bibinfo{title}{Proof Nets and the
  Call-by-value {$\lambda$}-calculus}}.
\newblock {\sl \bibinfo{journal}{Theor. Comput. Sci.}}
  \bibinfo{volume}{606}(\bibinfo{number}{C}), pp. \bibinfo{pages}{2--24}.

\bibitemdeclare{inproceedings}{Accattoli:2014:NST:2535838.2535886}
\bibitem{Accattoli:2014:NST:2535838.2535886}
\bibinfo{author}{Beniamino \surnamestart Accattoli\surnameend},
  \bibinfo{author}{Eduardo \surnamestart Bonelli\surnameend},
  \bibinfo{author}{Delia \surnamestart Kesner\surnameend} \&
  \bibinfo{author}{Carlos \surnamestart Lombardi\surnameend}
  (\bibinfo{year}{2014}): \emph{\bibinfo{title}{A Nonstandard Standardization
  Theorem}}.
\newblock In: {\sl \bibinfo{booktitle}{Proceedings of the 41st ACM
  SIGPLAN-SIGACT Symposium on Principles of Programming Languages}},
  \bibinfo{series}{POPL '14}, \bibinfo{publisher}{ACM}, \bibinfo{address}{New
  York, NY, USA}, pp. \bibinfo{pages}{659--670}.

\bibitemdeclare{inproceedings}{Accattoli2010}
\bibitem{Accattoli2010}
\bibinfo{author}{Beniamino \surnamestart Accattoli\surnameend} \&
  \bibinfo{author}{Delia \surnamestart Kesner\surnameend}
  (\bibinfo{year}{2010}): \emph{\bibinfo{title}{The Structural
  $\lambda$-Calculus}}.
\newblock In \bibinfo{editor}{Anuj \surnamestart Dawar\surnameend} \&
  \bibinfo{editor}{Helmut \surnamestart Veith\surnameend}, editors: {\sl
  \bibinfo{booktitle}{Computer Science Logic: 24th International Workshop, CSL
  2010, 19th Annual Conference of the EACSL, Brno, Czech Republic, August
  23-27, 2010. Proceedings}}, \bibinfo{publisher}{Springer Berlin Heidelberg},
  \bibinfo{address}{Berlin, Heidelberg}, pp. \bibinfo{pages}{381--395}.

\bibitemdeclare{inproceedings}{Amadio2009}
\bibitem{Amadio2009}
\bibinfo{author}{Roberto~M. \surnamestart Amadio\surnameend}
  (\bibinfo{year}{2009}): \emph{\bibinfo{title}{On Stratified Regions}}.
\newblock In \bibinfo{editor}{Zhenjiang \surnamestart Hu\surnameend}, editor:
  {\sl \bibinfo{booktitle}{Programming Languages and Systems: 7th Asian
  Symposium, APLAS 2009, Seoul, Korea, December 14-16, 2009. Proceedings}},
  \bibinfo{publisher}{Springer Berlin Heidelberg}, \bibinfo{address}{Berlin,
  Heidelberg}, pp. \bibinfo{pages}{210--225}.

\bibitemdeclare{article}{BOUDOL2010716}
\bibitem{BOUDOL2010716}
\bibinfo{author}{Gérard \surnamestart Boudol\surnameend}
  (\bibinfo{year}{2010}): \emph{\bibinfo{title}{Typing termination in a
  higher-order concurrent imperative language}}.
\newblock {\sl \bibinfo{journal}{Information and Computation}}
  \bibinfo{volume}{208}(\bibinfo{number}{6}), pp. \bibinfo{pages}{716 -- 736}.
\newblock \bibinfo{note}{Special Issue: 18th International Conference on
  Concurrency Theory (CONCUR 2007)}.

\bibitemdeclare{article}{Curien:1996:CPW:226643.226675}
\bibitem{Curien:1996:CPW:226643.226675}
\bibinfo{author}{Pierre-Louis \surnamestart Curien\surnameend},
  \bibinfo{author}{Th{\'e}r\`{e}se \surnamestart Hardin\surnameend} \&
  \bibinfo{author}{Jean-Jacques \surnamestart L{\'e}vy\surnameend}
  (\bibinfo{year}{1996}): \emph{\bibinfo{title}{Confluence Properties of Weak
  and Strong Calculi of Explicit Substitutions}}.
\newblock {\sl \bibinfo{journal}{J. ACM}}
  \bibinfo{volume}{43}(\bibinfo{number}{2}), pp. \bibinfo{pages}{362--397}.

\bibitemdeclare{article}{DANOS199640}
\bibitem{DANOS199640}
\bibinfo{author}{Vincent \surnamestart Danos\surnameend} \&
  \bibinfo{author}{Laurent \surnamestart Regnier\surnameend}
  (\bibinfo{year}{1996}): \emph{\bibinfo{title}{Reversible, Irreversible and
  Optimal {$\lambda$}-machines: Extended abstract}}.
\newblock {\sl \bibinfo{journal}{Electronic Notes in Theoretical Computer
  Science}} \bibinfo{volume}{3}(\bibinfo{number}{Supplement C}), pp.
  \bibinfo{pages}{40 -- 60}.
\newblock \bibinfo{note}{Linear Logic 96 Tokyo Meeting}.

\bibitemdeclare{article}{deLiguoroP95}
\bibitem{deLiguoroP95}
\bibinfo{author}{Ugo \surnamestart de'Liguoro\surnameend} \&
  \bibinfo{author}{Adolfo \surnamestart Piperno\surnameend}
  (\bibinfo{year}{1995}): \emph{\bibinfo{title}{Non Deterministic Extensions of
  Untyped Lambda-Calculus}}.
\newblock {\sl \bibinfo{journal}{Inf. Comput.}}
  \bibinfo{volume}{122}(\bibinfo{number}{2}), pp. \bibinfo{pages}{149--177}.

\bibitemdeclare{article}{EHRHARD2010606}
\bibitem{EHRHARD2010606}
\bibinfo{author}{Thomas \surnamestart Ehrhard\surnameend} \&
  \bibinfo{author}{Olivier \surnamestart Laurent\surnameend}
  (\bibinfo{year}{2010}): \emph{\bibinfo{title}{Interpreting a finitary
  pi-calculus in differential interaction nets}}.
\newblock {\sl \bibinfo{journal}{Information and Computation}}
  \bibinfo{volume}{208}(\bibinfo{number}{6}), pp. \bibinfo{pages}{606 -- 633}.
\newblock \bibinfo{note}{Special Issue: 18th International Conference on
  Concurrency Theory (CONCUR 2007)}.

\bibitemdeclare{article}{ehrhard:hal-00150274}
\bibitem{ehrhard:hal-00150274}
\bibinfo{author}{Thomas \surnamestart Ehrhard\surnameend} \&
  \bibinfo{author}{Laurent \surnamestart Regnier\surnameend}
  (\bibinfo{year}{2006}): \emph{\bibinfo{title}{{Differential interaction
  nets}}}.
\newblock {\sl \bibinfo{journal}{{Theoretical Computer Science}}}
  \bibinfo{volume}{364}(\bibinfo{number}{2}), pp. \bibinfo{pages}{166--195}.
\newblock \urlprefix\url{https://hal.archives-ouvertes.fr/hal-00150274}.
\newblock \bibinfo{note}{30 pages}.

\bibitemdeclare{inproceedings}{GP99}
\bibitem{GP99}
\bibinfo{author}{M.~J. \surnamestart Gabbay\surnameend} \&
  \bibinfo{author}{A.~M. \surnamestart Pitts\surnameend}
  (\bibinfo{year}{1999}): \emph{\bibinfo{title}{A new approach to abstract
  syntax involving binders}}.
\newblock In: {\sl \bibinfo{booktitle}{Logic in Computer Science}},
  \bibinfo{publisher}{IEEE Computer Society Press}, pp.
  \bibinfo{pages}{214--224}.

\bibitemdeclare{article}{kesner2009theory}
\bibitem{kesner2009theory}
\bibinfo{author}{Delia \surnamestart Kesner\surnameend} (\bibinfo{year}{2009}):
  \emph{\bibinfo{title}{A Theory of Explicit Substitutions with Safe and Full
  Composition}}.
\newblock {\sl \bibinfo{journal}{Logical Methods in Computer Science}}
  \bibinfo{volume}{5}.

\bibitemdeclare{article}{landin1964mechanical}
\bibitem{landin1964mechanical}
\bibinfo{author}{Peter~J. \surnamestart Landin\surnameend}
  (\bibinfo{year}{1964}): \emph{\bibinfo{title}{The Mechanical Evaluation of
  Expressions}}.
\newblock {\sl \bibinfo{journal}{The Computer Journal}}
  \bibinfo{volume}{6}(\bibinfo{number}{4}), pp. \bibinfo{pages}{308--320}.

\bibitemdeclare{inproceedings}{LRD95}
\bibitem{LRD95}
\bibinfo{author}{Pierre \surnamestart Lescanne\surnameend} \&
  \bibinfo{author}{Jocelyne \surnamestart Rouyer-degli\surnameend}
  (\bibinfo{year}{1995}): \emph{\bibinfo{title}{Explicit Substitutions with de
  Bruijn’s Levels}}.
\newblock In: {\sl \bibinfo{booktitle}{in Rewriting Techniques and
  Applications, 6th International Conference, Lecture Notes in Computer Science
  914}}, \bibinfo{publisher}{Springer}, pp. \bibinfo{pages}{294--308}.

\bibitemdeclare{phdthesis}{mackie94}
\bibitem{mackie94}
\bibinfo{author}{Ian \surnamestart Mackie\surnameend} (\bibinfo{year}{1994}):
  \emph{\bibinfo{title}{Applications of the Geometry of Interaction to language
  implementation}}.
\newblock Ph.D. thesis, \bibinfo{school}{Univ. of London}.

\bibitemdeclare{phdthesis}{madet:tel-00794977}
\bibitem{madet:tel-00794977}
\bibinfo{author}{Antoine \surnamestart Madet\surnameend}
  (\bibinfo{year}{2012}): \emph{\bibinfo{title}{{Complexit{\'e} Implicite de
  Lambda-Calculs Concurrents}}}.
\newblock \bibinfo{type}{Theses}, \bibinfo{school}{{Universit{\'e}
  Paris-Diderot - Paris VII}}.
\newblock \urlprefix\url{https://tel.archives-ouvertes.fr/tel-00794977}.

\bibitemdeclare{article}{MARAIST1995370}
\bibitem{MARAIST1995370}
\bibinfo{author}{John \surnamestart Maraist\surnameend},
  \bibinfo{author}{Martin \surnamestart Odersky\surnameend},
  \bibinfo{author}{David~N. \surnamestart Turner\surnameend} \&
  \bibinfo{author}{Philip \surnamestart Wadler\surnameend}
  (\bibinfo{year}{1995}): \emph{\bibinfo{title}{Call-by-name, Call-by-value,
  Call-by-need, and the Linear Lambda Calculus}}.
\newblock {\sl \bibinfo{journal}{Electronic Notes in Theoretical Computer
  Science}} \bibinfo{volume}{1}(\bibinfo{number}{Supplement C}), pp.
  \bibinfo{pages}{370 -- 392}.
\newblock \bibinfo{note}{MFPS XI, Mathematical Foundations of Programming
  Semantics, Eleventh Annual Conference}.

\bibitemdeclare{inproceedings}{Mellies1995}
\bibitem{Mellies1995}
\bibinfo{author}{Paul-Andr{\'e} \surnamestart Mellies\surnameend}
  (\bibinfo{year}{1995}): \emph{\bibinfo{title}{Typed $\lambda$-calculi with
  explicit substitutions may not terminate}}.
\newblock In \bibinfo{editor}{Mariangiola \surnamestart
  Dezani-Ciancaglini\surnameend} \& \bibinfo{editor}{Gordon \surnamestart
  Plotkin\surnameend}, editors: {\sl \bibinfo{booktitle}{Typed Lambda Calculi
  and Applications: Second International Conference on Typed Lambda Calculi and
  Applications, TLCA '95 Edinburgh, United Kingdom, April 10--12, 1995
  Proceedings}}, \bibinfo{publisher}{Springer Berlin Heidelberg},
  \bibinfo{address}{Berlin, Heidelberg}, pp. \bibinfo{pages}{328--334}.

\bibitemdeclare{article}{Pedicini:2007}
\bibitem{Pedicini:2007}
\bibinfo{author}{Marco \surnamestart Pedicini\surnameend} \&
  \bibinfo{author}{Francesco \surnamestart Quaglia\surnameend}
  (\bibinfo{year}{2007}): \emph{\bibinfo{title}{PELCR: Parallel Environment for
  Optimal Lambda-calculus Reduction}}.
\newblock {\sl \bibinfo{journal}{ACM Trans. Comput. Logic}}
  \bibinfo{volume}{8}(\bibinfo{number}{3}).

\bibitemdeclare{inproceedings}{Pinto:2001}
\bibitem{Pinto:2001}
\bibinfo{author}{Jorge~Sousa \surnamestart Pinto\surnameend}
  (\bibinfo{year}{2001}): \emph{\bibinfo{title}{Parallel Implementation Models
  for the lambda-calculus Using the Geometry of Interaction}}.
\newblock In: {\sl \bibinfo{booktitle}{Proceedings of the 5th International
  Conference on Typed Lambda Calculi and Applications}},
  \bibinfo{series}{TLCA'01}, \bibinfo{publisher}{Springer-Verlag},
  \bibinfo{address}{Berlin, Heidelberg}, pp. \bibinfo{pages}{385--399}.
\newblock \urlprefix\url{http://dl.acm.org/citation.cfm?id=1754621.1754653}.

\bibitemdeclare{inproceedings}{Ros92}
\bibitem{Ros92}
\bibinfo{author}{Kristoffer~H{\o}gsbro \surnamestart Rose\surnameend}
  (\bibinfo{year}{1993}): \emph{\bibinfo{title}{Explicit cyclic
  substitutions}}.
\newblock In \bibinfo{editor}{Micha{\"e}l \surnamestart Rusinowitch\surnameend}
  \& \bibinfo{editor}{Jean-Luc \surnamestart R{\'e}my\surnameend}, editors:
  {\sl \bibinfo{booktitle}{Conditional Term Rewriting Systems: Third
  International Workshop, CTRS-92 Point-{\`a}-Mousson, France, July 8--10 1992
  Proceedings}}, \bibinfo{publisher}{Springer Berlin Heidelberg},
  \bibinfo{address}{Berlin, Heidelberg}, pp. \bibinfo{pages}{36--50}.

\bibitemdeclare{inproceedings}{SFM03}
\bibitem{SFM03}
\bibinfo{author}{Fran{\c{c}}ois-R{\'e}gis \surnamestart Sinot\surnameend},
  \bibinfo{author}{Maribel \surnamestart Fern{\'a}ndez\surnameend} \&
  \bibinfo{author}{Ian \surnamestart Mackie\surnameend} (\bibinfo{year}{2003}):
  \emph{\bibinfo{title}{Efficient Reductions with Director Strings}}.
\newblock In \bibinfo{editor}{Robert \surnamestart Nieuwenhuis\surnameend},
  editor: {\sl \bibinfo{booktitle}{Rewriting Techniques and Applications: 14th
  International Conference, RTA 2003 Valencia, Spain, June 9--11, 2003
  Proceedings}}, \bibinfo{publisher}{Springer Berlin Heidelberg},
  \bibinfo{address}{Berlin, Heidelberg}, pp. \bibinfo{pages}{46--60}.

\bibitemdeclare{article}{tait67symboliclogjkic}
\bibitem{tait67symboliclogjkic}
\bibinfo{author}{W.~W. \surnamestart Tait\surnameend} (\bibinfo{year}{1967}):
  \emph{\bibinfo{title}{Intensional interpretations of functionals of finite
  type}}.
\newblock {\sl \bibinfo{journal}{Journal of Symbolic Logic}}
  \bibinfo{volume}{32}(\bibinfo{number}{2}), pp. \bibinfo{pages}{198--212}.

\bibitemdeclare{unpublished}{tranquilli:hal-00465793}
\bibitem{tranquilli:hal-00465793}
\bibinfo{author}{Paolo \surnamestart Tranquilli\surnameend}
  (\bibinfo{year}{2010}): \emph{\bibinfo{title}{{Translating types and effects
  with state monads and linear logic}}}.
\newblock \urlprefix\url{https://hal.archives-ouvertes.fr/hal-00465793}.
\newblock \bibinfo{note}{14 pages}.

\end{thebibliography}

\newpage
\appendix

\section{Weak confluence}
\label{ap:confluence}

\begin{lemma}{Critical pairs}\label{lemma-critical-pairs}\\
  We write $E_1 \# E_2$ for two contexts $E_1,E_2$ if $E_1 \neq E_2$, and each one
  is not a prefix of the other, ie $\forall E, E_1 \neq E_2[E]$ and $E_2 \neq
  E_1[E]$. In the following, we write the reduction rules as
  $S[N] \to S[N'] + \mathbf{M'}$, where $\mathbf{M}'$ is equals to $\mathbf{0}$ unless when
  \rrulet{subst-r}{get} occurs where it may have additionnal terms.
  \paragraph{}
  If $\mathbf{M} \to \mathbf{M_1}$ and $\mathbf{M} \to \mathbf{M_2}$ with
  $\mathbf{M_1} \neq \mathbf{M_2}$, then one of the assertion holds :
  \begin{enumerate}
    \item The two rules are of the form $S_i[N_i] \to S_i[N_i'] + \mathbf{M}'_i$
      with $S_1 \# S_2, \mathbf{M} = S_i[N_i] = \sum_i N_i$
    \item The two rules are of the form $S[C_i[N_i]] \to S_i[C_i[N_i']] +
      \mathbf{M}'_i$
      with $C_1 \# C_2, C_i[N_i] = \parallel_i N_i$
    \item The two rules are of the form $S[C[N_1]] \to S[C[N_1']]$ and
      $S[C[C_2[N_2]]] \to S_2[C[C_2[N_2']]] + \mathbf{M}'_2$,
      the first rule being \rrule{subst-r'}{\parallel}.
    \item The two rules are of the form $S[C[E[E_i[N_i]]] \to
      S_i[C[E[E_i[N_i']]]] + \mathbf{M}_i'$ with $E_1 = \refssl{r}{V}{E_1'[.]\ E_2'[N_2]}$ and $E_2
      = \refssl{r}{V}{E_1'[N_1]\ E_2'[.]}$, $N_i \to N_i'$
    \item The two rules are of the form $S[C[E[N_1]]] \to S[C[E[N_1']]$ and
      $S[C[E[E_2[N_2]]]] \to S_2[C[E[E_2[N_2']]]] + \mathbf{M}'_2$, with one
      of the following :
      \begin{enumerate}
        \item $N_1 = \refssd{r}{V}{M'}$ and the applied rule is
          \rrule{subst-r}{app}, \rrule{subst-r}{subst-r'} or
          \rrule{subst-r}{merge}
        \item $N_1 = \refssl{r}{V}{(\refssu{s}{U}{P})\
        E_2'[N_2]}$ or $N_1 = \refssl{r}{V}{E_2'[N_2]\
          (\refssu{s}{U}{P})}$
        \item  $N_1 = \refssl{r}{V}{(\refssu{s}{U}{E_2'[N_2]})\
          P}$ or $N_1 = \refssl{r}{V}{P\
          (\refssu{s}{U}{E_2'[N_2]})}$
        \item $C = E = [.]$, $N_1 = \refssu{r}{V}{E_2[N_2]}$ and the first rule
          used is \rrule{subst-r'}{\top}
      \end{enumerate}
  \end{enumerate}
\end{lemma}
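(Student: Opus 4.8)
The plan is to treat each reduction step as the firing of a single rule at a position identified by the layered context $S[C[E[\cdot]]]$ of Table~\ref{fig:lth-eval-context}, together with the handful of rules that are not closed under the applicative context $E$ in the usual local way, namely \rrulet{subst-r}{get}, \rrule{subst-r'}{\parallel} and \rrule{subst-r'}{\top}. Writing the two diverging steps as $\mathbf{M} \to \mathbf{M_1}$ and $\mathbf{M} \to \mathbf{M_2}$, I would record for each the sum context $S_i$, the thread context $C_i$, the applicative context $E_i$ and the rule fired at the hole. The proof is then an exhaustive case analysis organized by the relative positions of the two redexes, proceeding from the outermost layer inward, and reducing every configuration to one of the five listed forms.

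First I would split on the sum layer. If the two steps act on distinct summands of $\mathbf{M} = \sum_i N_i$, then the sum contexts are incomparable, $S_1 \# S_2$, and we land in case~1; because the sum is built only by the structural rules of Table~\ref{fig:lth-struct-rules} and every reduction commutes with $S$, no genuine overlap can occur at this layer. Otherwise both redexes lie in a common summand and I descend to the thread layer, splitting on whether the two $C_i$ are incomparable (distinct parallel threads, giving case~2) or comparable. The subtlety here is \rrule{subst-r'}{\parallel}, which restructures a parallel composition rather than acting purely locally: when one step is this rule and the other fires strictly inside one of its threads, the two redexes overlap through the reshaping of $\parallel$, and this is exactly case~3. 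I would verify that this is the only way a thread-level reduction can interfere with an interior one.

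Having fixed a common thread, I would analyze the applicative contexts $E_i$. If the two holes sit in the two distinct arguments of a common application $\refssl{r}{V}{(E_1'[N_1]\ E_2'[N_2])}$ — which is permitted precisely because $E$ imposes no reduction order — the redexes are disjoint and we obtain case~4. The remaining possibility is that one redex is nested inside the other, and here lie the genuine critical pairs. I would enumerate them by inspecting which left-hand sides can simultaneously match: the downward-propagation rules \rrule{subst-r}{app}, \rrule{subst-r}{subst-r'}, \rrule{subst-r}{merge}, whose redex $\refssd{r}{V}{M'}$ contains a further redex (subcase (a)); the upward-over-application rules \rrulet{subst-r'}{lapp} and \rrulet{subst-r'}{rapp}, whose matched upward substitution overlaps a reduction either beside it (subcase (b)) or inside it (subcase (c)); and the top-level garbage-collection rule \rrule{subst-r'}{\top} overlapping an interior step (subcase (d)). Together these give case~5.

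The main obstacle I anticipate is establishing \emph{exhaustiveness} rather than joinability: the rules that move structure — \rrule{subst-r'}{\parallel}, \rrulet{subst-r'}{lapp}, \rrulet{subst-r'}{rapp} and \rrule{subst-r'}{\top} — do not act at a fixed position, so when such a rule overlaps an interior redex the position of the second redex is displaced by the rewrite, and I must check that every such overlap is captured by one of the listed patterns and that no configuration has been omitted. Careful attention is also needed for \rrulet{subst-r}{get}, which contributes the extra summand $\mathbf{M}'$ in the normalized form $S[N] \to S[N'] + \mathbf{M}'$: one must confirm that its divergences are subsumed under the positional cases already described rather than forming genuinely new ones.
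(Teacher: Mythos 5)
Your plan follows the paper's own proof essentially step for step: the same layered descent (sum contexts with $S_1 \# S_2$ giving case~1, thread contexts with the special role of \rrule{subst-r'}{\parallel} giving cases~2--3, disjoint application arguments giving case~4), and the same enumeration of genuine nested overlaps by inspecting which left-hand sides can match inside another redex (downward rules, \rrulet{subst-r'}{lapp}/\rrulet{subst-r'}{rapp}, and \rrule{subst-r'}{\top}, i.e.\ cases 5a--5d). The exhaustiveness concern you flag is exactly what the paper resolves via the greatest-common-prefix decomposition of the $E$ contexts and the observation that $(\beta_v)$ cannot overlap an interior redex since no non-empty context can sit inside its value argument.
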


\begin{proof}
We can write $\mathbf{M}$ in a unique way
(modulo structual rules) as a sum of parallel of simple terms :
$$\mathbf{M} = \sum_k M_k,\  M_k = \parallel_{i} M^i_k$$
\begin{itemize}
  \item The two reductions rules have a premise of the form $S[N_i]$.
    Identifying the terms of both sums, we can write $S_1 = [.] + \sum_{k \neq
    k_1} M_k$ and $S_2 = [.] + \sum_{k \neq k_2} M_k$. If $k_1 \neq k_2$ we are in
    the case (1), or $S_1=S_2$.

  \item If one of the rule used (let say the first one) is
    \rrule{subst-r'}{\top}, then $M_{k_1} =
    \refssu{r}{V}{M_{k_1}'}$. Since $\mathbf{M_1} \neq
    \mathbf{M_2}$, the second rule can't be the same and is of the form
    $E[T] \to E[T']$. This is the case (5d) of the lemma.

  \item Otherwise, the premises of the two rules have the form $S[C_1[E_1[P_1]]]
    \to S[C_1[E_1[P_1']]] + \mathbf{M'}_1$ and $S[C_2[E_2[P_2]]] \to
    S[C_2[E_2[P_2']]] + \mathbf{M'}_2$.  If $C_1 \# C_2$, we are in case (2). If
    not, then either $C_1=C_2$ or $C_2=C_1[C_2'],C_2' \neq [.]$ but the only
    rule that matches a parallel is \rrule{subst-r'}{\parallel}, and this is
    case (3). We assume from now on that $C_1=C_2$. We decompose $E_1$ and $E_2$
    by their greatest common prefix, such that $E_1=E[E_1']$ and $E_2=E[E_2']$
    with either $E_1' = E_2' = [.]$, or $E_1' = [.],\ E_2' \neq [.]$, or $E_1 \#
    E_2$. The former is excluded since the reducts $\mathbf{N_1}$ and
    $\mathbf{N_2}$ are assumed differents, and no rule have overlapping redex on
    base cases (when $C$ and $E$ are empty).  \\
  \item
    If $E_1' \# E_2'$, since $E_1'[P_1]
    = E_2'[P_2]$, then $E_1$ must be of the form
    $\refssl{r}{V}{E_1''\ R}$, $E_2 =
    \refssl{r}{V}{L\ E_2''}$ with $L = E_1''[P_1]$ and $R =
    E_2''[P_2]$. This is case (4).
  \item
    Assume now that one of the two (let say $E_1'$) is $[.]$.
    Then $P_1$ and $E_2'$ have a common prefix. If $P_1$ is an application,
    it can't be the premise of the $(\beta_v)$ rule with $P_1 =
    \refssl{r}{U}{(\lambda x . M)\ V}$, because then $E_2' =
    \refssl{r}{U}{E_2''\ V}$ or $E_2' =
    \refssl{r}{U}{(\lambda x . M)\ E_2''}$ but no non-empty context
    verifies $E_2''[P_2] = V$ for a value $V$. Thus it must be the premise of
    \rrulet{subst-r'}{app}, and this corresponds to cases (5b) and (5c).

  \item
    If $P_1$ is not an application, since it must be both the premise of a rule
    and prefix of the context $E_2'$, the only remaining possibility is $P_1 =
    \refssd{r}{V}{\widetilde{P}_1}$. Then $\widetilde{P}_1$ can't be a value
    $y,\ast,\lambda y. \_$ or $\get{r}, \_ \parallel \_,
    \refssu{s}{U}{\_}$, because these constructors can't be in $E_2'$ : we are in case (5a).
\end{itemize}
\end{proof}

\begin{lemma}{Weak confluence}\label{lemma-weak-confluence}\\
  Let $\mathbf{M}$ be a term such that $\mathbf{M} \to
  \mathbf{M_1}$ and $\mathbf{M} \to \mathbf{M_2}$. Then
  $$\exists \mathbf{M'}, \mathbf{M_1} \to^\ast \mathbf{M'} \text{ and }
  \mathbf{M_2} \to^\ast \mathbf{M'}$$
\end{lemma}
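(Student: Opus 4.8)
The plan is to prove weak (local) confluence by a critical-pair analysis, using Lemma~\ref{lemma-critical-pairs} to reduce the problem to finitely many configurations. If $\mathbf{M_1} = \mathbf{M_2}$ there is nothing to prove, so I assume $\mathbf{M_1} \neq \mathbf{M_2}$ and apply Lemma~\ref{lemma-critical-pairs}, which places the pair of reductions into one of the cases (1)--(5). In each case I exhibit a common reduct $\mathbf{M'}$ together with reductions $\mathbf{M_1} \to^\ast \mathbf{M'}$ and $\mathbf{M_2} \to^\ast \mathbf{M'}$, so that the two branches rejoin.

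First I would dispatch the \emph{disjoint} cases (1), (2) and (4), in which the two redexes occupy independent positions: distinct summands of a sum (case 1), distinct threads of a parallel composition (case 2), or the two distinct arguments of an application sitting under a common $\lambda$-substitution (case 4). Here the two rewrites commute: since the positions are disjoint, firing one rule leaves the other redex, together with its surrounding context, untouched. I therefore reduce $N_2$ inside $\mathbf{M_1}$ and $N_1$ inside $\mathbf{M_2}$, and both branches reach the single term in which both redexes have been contracted. The one point needing a word is the extra summands $\mathbf{M}'_1,\mathbf{M}'_2$ possibly produced by \rrulet{subst-r}{get}: because the two contexts are disjoint, these residual sums are preserved by the other reduction and appear identically on both sides, so the diagram still closes in one step on each branch.

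The real content is in the \emph{overlapping} cases (3) and (5a)--(5d), which are the genuine critical pairs. Their common shape is that an ``outer'' structural rule moves or duplicates a reference substitution across exactly the position occupied by an ``inner'' redex. For instance in case (5a) the outer rule is one of \rrule{subst-r}{app}, \rrule{subst-r}{subst-r'}, \rrule{subst-r}{merge} acting on $N_1 = \refssd{r}{V}{M'}$, while the inner redex $N_2$ lives inside $M'$; after the outer rule the downward substitution $\refssd{r}{V}{}$ has been pushed toward the leaves, so the inner redex survives (possibly duplicated) under a propagated copy of $\refssd{r}{V}{}$. I close such a pair by firing the inner redex on the $\mathbf{M_1}$-branch, now beneath the propagated substitution, and the outer structural rule on the $\mathbf{M_2}$-branch, then checking that the two results coincide. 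Case (3), governed by \rrule{subst-r'}{\parallel}, and cases (5b), (5c), (5d), governed by \rrulet{subst-r'}{lapp}, \rrulet{subst-r'}{rapp} and \rrule{subst-r'}{\top}, follow the same commutation pattern with the roles of downward and upward substitutions interchanged.

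The step I expect to be the main obstacle is the interaction of these critical pairs with \rrulet{subst-r}{get}. Unlike the other rules, \rrulet{subst-r}{get} is non-local: it reads the whole surrounding evaluation context $C[E[\cdot]]$ and copies it into one summand per available value. When the inner redex of a critical pair is itself such a $\refssd{}{V}{\get{r}}$, or when a moving substitution alters the context that the get rule duplicates, firing the two rules in opposite orders need not obviously yield the same sum. I would therefore treat these sub-configurations with extra care, verifying that every duplicated copy of the context receives exactly the substitutions that the commuted reduction would have delivered, so that the two sides still join --- in general after several steps on each branch rather than a single one. Once every case of Lemma~\ref{lemma-critical-pairs} has been closed this way, the common reduct $\mathbf{M'}$ exists in all situations, which is precisely the claim.
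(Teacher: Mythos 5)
Your overall strategy --- a case analysis over Lemma~\ref{lemma-critical-pairs}, joining each configuration separately --- is exactly the paper's, but you have misplaced where the real difficulty sits, and the case you dismiss as trivial is the one on which your argument breaks. In case (2), the disjointness $C_1 \mathrel{\#} C_2$ is a condition on the \emph{contexts}, not on the redexes: the redex of \rrule{subst-r'}{\parallel} is a parallel composition $(\refssu{r}{V}{M}) \parallel N$ spanning \emph{two} threads, so two such redexes can overlap on a shared thread even though neither context is a prefix of the other. Concretely, with threads $P_{l_1} \parallel P_{l_0} \parallel P_{l_2}$, both rules may be \rrule{subst-r'}{\parallel} with thread sets $\{l_1,l_0\}$ and $\{l_0,l_2\}$. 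Firing one of them wraps the shared thread $P_{l_0}$ under an upward substitution and injects a downward substitution into it, so it does \emph{not} leave ``the other redex, together with its surrounding context, untouched'', and the diagram cannot close in one step on each branch. This overlap is the genuine critical pair of the calculus, and it is where the paper's proof spends most of its effort: it distinguishes whether the shared thread carries the active upward substitution in both reductions, in neither, or in exactly one, and each sub-case is closed only after a multi-step sequence of reductions using \rrule{subst-r'}{\parallel}, \rrule{subst-r'}{\top}, \rrulet{subst-r}{subst-r'} and \rrulet{subst-r}{merge}, ending in a common reduct in which every thread has received the appropriate downward substitutions. Only the sub-case of (2) where the two thread sets are actually disjoint behaves as you describe.

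Symmetrically, you invert the difficulty on the other side. Cases (5a)--(5d), which you present as the genuine critical pairs needing the real work, are nested outer/inner configurations that the paper dispatches as independent: the outer structural rule moves a substitution without destroying the inner redex, so the two reductions simply commute. Likewise, your worry that \rrulet{subst-r}{get} is ``the main obstacle'' is largely defused by the design of the rule itself: its remainder keeps $\get{r}$ with the substitution discarded, and the extra summands $\mathbf{M}'_i$ it produces are carried along uniformly (at worst one replays the other reduction inside each summand, whence $\to^\ast$ rather than $\to$). So the proposal has the right skeleton, but its treatment of case (2) is wrong as stated, and repairing it requires precisely the sub-case analysis of interacting upward substitutions across parallel threads that constitutes the core of the paper's proof.
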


\begin{proof}
We can write $\mathbf{M},\mathbf{M_1}$ and $\mathbf{M_2}$ in a unique way
(modulo structual rules) as a sum of parallel of simple terms :
$$\mathbf{M} = \sum_k M_k,\ \mathbf{M_1} = \sum_k M^1_k,\
\mathbf{M_2} = \sum_k M^2_k$$
Let us process all the possible cases of \ref{lemma-critical-pairs}, assuming that $\mathbf{N_1} \neq \mathbf{N_2}$:
\begin{enumerate}
  \item We have $S_1 \# S_2$, by identifying each terms, $\exists k_1 \neq k_2, S_i =
    [.] + \sum_{k \neq k_i}$, and we have
    \begin{tabred}
      \mathbf{M}_1 & = & \sum_{k \neq k_1} M_k + N_1' + \mathbf{M}'_1 \\
        & \to & \sum_{k \neq k_1,k_2} M_k + N_1' + N_2' + \mathbf{M}'_1 +
          \mathbf{M}'_2
    \end{tabred}
    as well as $\mathbf{M_2}$.
  \item Let write $C_i[N_i] = \parallel_{l \in
      \mathcal{L}} P_l$. Then
    there exists $\mathcal{L}_1, \mathcal{L}_2 \subseteq \mathcal{L}$. Since
    $C_1 \# C_2$, we have $\mathcal{L}_1 \neq \mathcal{L}_2$ and $\mathcal{L}_1
    \not \subseteq \mathcal{L}_2$ and $\mathcal{L}_2 \not \subseteq
    \mathcal{L}_1$, such that $C_i = [.] \parallel (\parallel_{l
      \in \mathcal{L}_i} P_l)$ and $N_i = \parallel_{l \notin \mathcal{L}_i} P_l$.
    The only rule that has a parallel of terms as premise is
    \rrule{subst-r'}{\parallel}. Thus $\mathcal{L}_i$ are either singletons (if
    the corresponding rule is not \rrule{subst-r'}{\parallel}) or have size two. If
    they are disjoint, then $\mathcal{L}_2 \subseteq \mathcal{L} \setminus
    \mathcal{L}_1$ and :
    \begin{tabred}
      \mathbf{M}_1 & = & S[N_1' \parallel (\parallel_{l \notin \mathcal{L}_1} P_l)] + \mathbf{M}'_1 \\
        & = &
          S[N_1' \parallel N_2 \parallel (\parallel_{l \notin \mathcal{L}_1 \cup
            \mathcal{L}_2} P_l)] + \mathbf{M}'_1 \\
        & \to &
          S[N_1' \parallel N_2' \parallel (\parallel_{l \notin \mathcal{L}_1 \cup
          \mathcal{L}_2} P_l)] +
          \mathbf{M}'_1 + \mathbf{M}'_2 \\
    \end{tabred}
    as do $\mathbf{M_2}$.\\
    The only remaining case is if both rules are \rrule{subst-r'}{\parallel} and
    $\mathcal{L}_1 \cap \mathcal{L}_2 = \{ l_0 \}$. We write $\mathcal{L}_1 = \{
    l_1, l_0 \}, \mathcal{L}_2 = \{ l_2, l_0 \}$ and $\mathcal{L_3} =
    \mathcal{L} \setminus \{ l_0, l_1, l_2 \}$. If $P_{l_0} =
    \refssu{r}{V}{P'}$ is the "active" upward substitution in both reduction, we
    have
    \begin{tabred}
      C_1[N_1'] & = & \refssu{r}{V}{(\refssd{r}{V}{P_{l_1}} \parallel P')}
        \parallel P_{l_2} \parallel (\parallel_{l \neq \mathcal{L}_3} P_l) \\
        & \to & \refssu{r}{V}{(\refssd{r}{V}{P_{l_1}} \parallel P' \parallel
          \refssd{r}{V}{P_{l_2}})} \parallel (\parallel_{l \neq \mathcal{L}_3}
          P_l) \\
    \end{tabred}
    If $P_{l_0}$ is the "passive" term in both reductions, with $P_{l_1} =
    \refssu{r}{V}{P_{l_1}'}$ and  $P_{l_2} = \refssu{s}{U}{P_{l_2}'}$, then
    \begin{tabred}
      C_1[N_1'] & = & \refssu{r}{V}{(P'_{l_1} \parallel \refssd{r}{V}{P_{l_0}})}
      \parallel \refssu{s}{U}{P'_{l_2}} \parallel (\parallel_{l \neq \mathcal{L}_3} P_l) \\
      & \to^\ast & P_{l_1} \parallel
        (\refssd{r}{V}{P_{l_0}}) \parallel
        (\refssd{r}{V}{\refssu{s}{U}{P'_{l_2}}}) \\
      & &
        \parallel (\parallel_{l \neq \mathcal{L}_3}
        \refssd{r}{V}{P_l}) \\
      & \to & P_{l_1} \parallel
        (\refssd{r}{V}{P_{l_0}}) \parallel
        (\refssu{s}{U}{\refssd{r}{V}{P'_{l_2}}}) \\
      & &
        \parallel (\parallel_{l \neq \mathcal{L}_3}
        \refssd{r}{V}{P_l}) \\
        & \to^\ast & (\refssd{s}{U}{P_{l_1}}) \parallel
        (\refssd{t}{W}{P_{l_0}}) \parallel
        (\refssd{r}{V}{P'_{l_2}}) \\
      & &
        \parallel (\parallel_{l \neq \mathcal{L}_3}
        \refssd{t}{W}{P_l}) \\
    \end{tabred}
    using repeated \rrule{subst-r'}{\parallel}, \rrule{subst-r'}{\top},
    \rrulet{subst-r}{subst-r'} and
    \rrule{subst-r}{merge}.
    \\
    Finally, if $P_{l_0}$ is active in of the two (let say the first) and
    passive in the other, meaning that $P_{l_0} =
    \refssu{r}{V}{P_{l_0}'},P_{l_2} = \refssu{s}{U}{P_{l_2}'}$, then

    \begin{tabred}
      C_1[N_1'] & = & \refssu{r}{V}{(\refssd{r}{V}{P_{l_1}} \parallel P_{l_0}')}
      \parallel (\refssu{s}{U}{P_{l_2}}) \parallel (\parallel_{l \neq \mathcal{L}_3} P_l) \\
        & \to & \refssu{r}{V}{(\refssd{r}{V}{P_{l_1}} \parallel P' \parallel
          \refssd{r}{V}{P_{l_2}})} \parallel (\parallel_{l \neq \mathcal{L}_3}
          P_l) \\
    \end{tabred}
    \begin{tabred}
      C_1[N_1'] & = & \refssu{r}{V}{(\refssd{r}{V}{P_{l_1}} \parallel P'_{l_0})}
      \parallel \refssu{s}{U}{P'_{l_2}} \parallel (\parallel_{l \neq \mathcal{L}_3} P_l) \\
      & \to^\ast & (\refssd{r}{V}{P_{l_1}}) \parallel
        P'_{l_0} \parallel
        (\refssd{r}{V}{\refssu{s}{U}{P'_{l_2}}}) \\
      & &
        \parallel (\parallel_{l \neq \mathcal{L}_3}
        \refssd{r}{V}{P_l}) \\
      & \to &  (\refssd{r}{V}{P_{l_1}}) \parallel
        P'_{l_0} \parallel
        (\refssu{s}{U}{\refssd{r}{V}{P_{l_2}}}) \\
      & &
        \parallel (\parallel_{l \neq \mathcal{L}_3}
        \refssd{r}{V}{P_l}) \\
        & \to^\ast & (\refssd{t}{W}{P_{l_1}}) \parallel
        (\refssd{s}{U}{P_{l_0}}) \parallel
        (\refssd{r}{V}{P_{l_2}}) \\
      & &
        \parallel (\parallel_{l \neq \mathcal{L}_3}
        \refssd{t}{W}{P_l}) \\
    \end{tabred}
    On the other side,
    \begin{tabred}
      C_2[N_2'] & = & P_{l_1} \parallel
        \refssu{s}{U}{( (\refssd{s}{U}{\refssu{r}{V}{P'_{l_0}}}) \parallel
        P'_{l_2})}
        \parallel (\parallel_{l \neq \mathcal{L}_3} P_l) \\
      & \to^\ast & (\refssd{s}{U}{P_{l_1}}) \parallel
        (\refssd{s}{U}{\refssu{r}{V}{P'_{l_0}}}) \parallel P'_{l_2} \\
      & &
          \parallel (\parallel_{l \neq \mathcal{L}_3}
          \refssd{s}{U}{P_l}) \\
      & \to & (\refssd{s}{U}{P_{l_1}}) \parallel
        (\refssu{r}{V}{\refssd{s}{U}{P'_{l_0}}}) \parallel P'_{l_2} \\
      & &
          \parallel (\parallel_{l \neq \mathcal{L}_3}
          \refssd{s}{U}{P_l}) \\
      & \to^\ast & (\refssd{t}{W}{P_{l_1}})
        \parallel (\refssd{s}{U}{P'_{l_0}})
        \parallel (\refssd{r}{V}{P'_{l_2}}) \\
      & &
          \parallel (\parallel_{l \neq \mathcal{L}_3}
          \refssd{t}{W}{P_l}) \\
    \end{tabred}

  \item
    $C_2 = \refssu{r}{V}{Q} \parallel C_2'$. Let write $C = \parallel_{l} P_l$
    and $C_2 = [.] \parallel P'$.
    then $C[N_1'] = \refssu{r}{V}{(Q \parallel \refssd{r}{V}{(N_2 \parallel P')})}$.
    \begin{itemize}
      \item Either $N_2 = \refssu{s}{U}{N_2'} \parallel Q'$ and
        \begin{tabred}
          C[N_1'] & \to^\ast & Q \parallel
            \refssd{r}{V}{((\refssu{s}{U}{N_2'}) \parallel Q')}
            \parallel (\parallel_{l} \refssd{r}{V}{P_l}) \\
          & \to^\ast & Q \parallel
            (\refssu{s}{U}{\refssd{r}{V}{N_2'}}) \parallel (\refssd{r}{V}{Q'})
            \parallel (\parallel_{l} \refssd{r}{V}{P_l}) \\
          & \to^\ast & (\refssd{s}{U}{Q})
          \parallel (\refssd{r}{V}{N_2'})
          \parallel (\refssd{t}{W}{Q'})
          \parallel (\parallel_{l} \refssd{t}{W}{P_l}) \\
        \end{tabred}
        and
        \begin{tabred}
          C[C_2[N_2']] & = & (\refssu{r}{V}{Q})
            \parallel \refssu{s}{U}{(N_2' \parallel \refssd{s}{U}{Q'})}
            \parallel (\parallel_{l} P_l) \\
          & \to^\ast &
            (\refssd{s}{U}{\refssu{r}{V}{Q}}) \parallel
            N_2' \parallel (\refssd{s}{U}{Q'})
            \parallel (\parallel_{l} \refssd{s}{U}{P_l}) \\
          & \to^\ast &
            (\refssd{s}{U}{Q}) \parallel
            (\refssd{r}{V}{N_2'}) \\
          & &
            \parallel (\refssd{t}{W}{Q'})
            \parallel (\parallel_l \refssd{t}{W}{P_l}) \\
        \end{tabred}
      \item Otherwise, $N_2$ is a premise of the form $E[Q']$ and
        $S[C[C_2[E[Q']]] \to S[C[C_2[E[Q'']]]] + \mathbf{M}_2'$. Then
        \begin{tabred}
          \mathbf{M_1} & \to^\ast & S[Q \parallel
            \refssd{r}{V}{E[Q']}
            \parallel (\parallel_{l} \refssd{r}{V}{P_l})] \\
          & \to & S[Q \parallel
            \refssd{r}{V}{E[Q']}
            \parallel (\parallel_{l} \refssd{r}{V}{P_l})] + \mathbf{M_2}' \\
        \end{tabred}
        On the other side,
        \begin{tabred}
          \mathbf{M_2} & = & S[(\refssu{r}{V}{Q}) \parallel
            E[Q''] \parallel
            (\parallel_{l} P_l)] + \mathbf{M_2}' \\
          & \to^\ast & S[Q \parallel
            (\refssd{r}{V}{E[Q'']}) \parallel
            (\parallel_{l} \refssd{r}{V}{P_l})] + \mathbf{M_2}' \\
        \end{tabred}
    \end{itemize}

  \item
    \begin{tabred}
      S[C[E[E_1[N_1']]]] + \mathbf{M}'_1 & = &
        S[C[E[\refssl{r}{V}{E_1'[N_1']\ E_2'[N_2]}]]] + \mathbf{M}'_1 \\
      & \to &
        S[C[E[\refssl{r}{V}{E_1'[N_1']\ E_2'[N_2']}]]] + \mathbf{M}'_1 +
        \mathbf{M}'_2 \\
    \end{tabred}

  \item
    \begin{enumerate}
      \item The applied rule is either :
        \begin{itemize}
          \item \rrulet{subst-r}{app} and $E_2 = \refssl{s}{U}{E_2'\ Q}$ or $E_2
            = \refssl{s}{U}{Q\ E_2'}$
          \item \rrulet{subst-r}{subst-r'} and $E_2 = \refssu{s}{U}{E_2'}$
          \item \rrulet{subst-r}{merge} and $E_2 = \refssd{s}{V}{E_2'}$
        \end{itemize}
        In the three cases, it is clear that the reductions are independant :
        the first one can be performed in $\mathbf{M_2}$ and vice-versa to get a
        common reduct.
      \item
      \item
      \item \ldots The same argument applies to the other four cases.
    \end{enumerate}
\end{enumerate}
\end{proof}

\section{Termination}
\label{ap:termination}

\subsection{Preorder on terms}
\label{ap:preorder}

%
\begin{definition}[Reachability and Associated Preorder]\label{def:reach-preorder}%
  \rm
  Let $M$ be a term, and $N$ an occurrence of a subterm in $M$
  that is not under an abstraction. We define
  $\reach{N}{M}$, a reference substitution, as the merge of all substitutions that are in scope
  of this subterm in $M$, as follows. Recall Notation~\ref{not:subst}.
  \begin{itemize}
    \item If $M=N$ then $\reach{N}{M}$ is nowhere defined.
    \item If $M=\refssd{s}{U}{M'}$ then $\reach{N}{M} =
      \multv{U},\reach{N}{M'}$, the juxtaposition of $\multv{U}$ and $\reach{N}{M'}$.
    \item If $M=\varss{\sigma}{M'}$ then $\reach{N}{M} = \metas{\sigma}{\reach{N}{M'}}$.
    \item If $M = \refssu{s}{U}{M'}$ then $\reach{N}{M} =
      \reach{N}{M'}$.
    \item If $M=\refssl{s}{U}{M_1\ M_2}$ or $M = M_1 \parallel M_2$, let $i$ be the index such that
      $N$ occurs in $M_i$, then $\reach{N}{M} = \reach{N}{M_i}$.
  \end{itemize}
  We define the skeleton of a term $\skel{M}$ by removing all downward
  reference substitutions that are not under an abstraction.
\end{definition}

\begin{definition}[Preorder]
  We say that $M
  \rincl N$
  if:
  \begin{itemize}
    \item $\skel{M}$ = $\skel{N}$, and thus we can put in a one-to-one correspondence the occurrences of
      $\get{r}$ and $\refssl{}{V}{(M_1\,M_2)}$ subterms of $M$ and $N$
    \item For all such $\get{r}$ occurrences, $\reach{\get{r}}{M} \subseteq
      \reach{\get{r}}{N}$
    \item For all such $M' = \refssl{r}{V}{(M_1\,M_2)}$ corresponding to
      $N' = \refssl{s}{U}{(N_1\,N_2)}$, then $\reach{M'}{M},\multv{V}
      \subseteq \reach{N'}{N},\multv{U}$
  \end{itemize}
  Similarly, we say that $M \rdiff{V} N$ if the
  difference between reachability sets involved in the definition is somehow
  ``bounded'' by $\multv{V}$:
  \begin{itemize}
    \item $\skel{M}$ = $\skel{N}$
    \item For all such occurrences of $\get{r}$, $\reach{\get{r}}{M} \subseteq
      \reach{\get{r}}{N} \subseteq \reach{\get{r}}{M},\multv{V}$
    \item For all such $M' = \refssl{s}{U}{M_1\ M_2}$ corresponding to
      $N' = \refssl{t}{W}{N_1\ N_2}$, then $\reach{M'}{M},\multv{W} \subseteq
      \reach{N'}{N},\multv{U} \subseteq \reach{M'}{M},\multv{W},\multv{V} $
  \end{itemize}
  The relations $\rincl$ and $\rdiff{V}$ are partial preorders on
  terms. 
\end{definition}

$M \rincl M'$ if $M$ and $M'$ have the same structure but the
available substitutions in scope of each $\get{r}$ in $M$ are contained in
$M'$ ones. Thus, $M'$ can do at least everything $M$ can do. The second preorder
$\rdiff{V}$ controls precisely what the difference between reachability sets
can be. The following properties make these intuitions formal:

\begin{lemma}{Invariance by \rrulet{subst-r}{}
reductions}\label{lemma-ord-invariance}\\
Let $M \rincl N$ (resp. $M \rdiff{V} N$).
  \begin{itemize}
    \item If $M \to M'$ by a \rrulet{subst-r}{} rule except
      \rrulet{subst-r}{get} then $M' \rincl N$ (resp. $M' \rdiff{V} N$)
    \item If $N \to N'$ by a
      \rrulet{subst-r}{} by a \rrulet{subst-r}{} rule except
      \rrulet{subst-r}{get} then $M \rincl N'$ (resp. $M \rdiff{V} N'$).
  \end{itemize}
\end{lemma}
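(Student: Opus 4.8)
The plan is to reduce the whole statement---both reductions, and both the $\rincl$ and $\rdiff{V}$ versions---to a single structural invariance: every \rrulet{subst-r}{} rule other than \rrulet{subst-r}{get} preserves the skeleton and leaves \emph{unchanged} the reachability multiset of every non-abstracted occurrence of a $\get{r}$ or an application node $\refssl{}{V}{(M_1\,M_2)}$. Precisely, I would first establish the claim: if $M \to M'$ by such a rule, then $\skel{M} = \skel{M'}$, and under the resulting bijection of occurrences one has $\reach{\get{r}}{M} = \reach{\get{r}}{M'}$ for every read, and, for every application node $P$ with image $P'$, the equality $\reach{P}{M},\multv{V} = \reach{P'}{M'},\multv{V'}$ of combined multisets, where $\multv{V},\multv{V'}$ are the respective $\lambda$-contents. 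Granting this claim, the lemma is immediate: since reachability is preserved as an \emph{equality}, every inclusion defining $M \rincl N$ transports verbatim to $M' \rincl N$ (and each pair of inclusions defining $M \rdiff{V} N$ to $M' \rdiff{V} N$); the case $N \to N'$ is symmetric.

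To prove the claim I would record an elementary locality fact about Definition~\ref{def:reach-preorder}: $\reach{N}{M}$ depends only on the root-to-$N$ path, accumulating downward substitutions by multiset juxtaposition (Notation~\ref{not:subst2}), transporting them through variable substitutions via $\metas{\sigma}{\cdot}$, and ignoring upward and $\lambda$-substitutions entirely. Consequently reachability is untouched for any occurrence lying in a disjoint subtree or strictly above the contracted redex, so it suffices to analyse the occurrences inside the redex. Since none of these rules creates or destroys a read or an application node, and since reduction is weak (the contexts $E,C,\mathbf{S}$ never enter a $\lambda$, so no occurrence is pushed under an abstraction), the occurrence bijection and $\skel{M}=\skel{M'}$ follow by direct inspection of each right-hand side.

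What remains is a short computation per rule. For \rrulet{subst-r}{merge} and \rrulet{subst-r}{subst-r'} the invariance of every internal reachability multiset is exactly commutativity and associativity of juxtaposition together with the fact that $\refssu{}{U}{\cdot}$ contributes nothing; for \rrule{subst-r}{\parallel} the duplicated $\multv{V}$ reaches precisely the occurrences it reached before the split; and \rrulet{subst-r}{val} merely deletes a substitution sitting above a value, which dominates no non-abstracted read or application node. I expect the one genuinely delicate case to be \rrulet{subst-r}{app}, namely $\refssd{}{V}{\refssl{}{U}{(M_1\,M_2)}} \to \refsrawsl{}{\multv{U},\multv{V}}{(\refssd{}{V}{M_1})\,(\refssd{}{V}{M_2})}$, where two accountings must agree simultaneously. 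First, at the application node itself the external downward $\multv{V}$ above the old $\lambda$-content $\multv{U}$ must match the merged content $\multv{U},\multv{V}$ of the new node: this works because the skeleton forgets $\lambda$-contents (so $\skel{\cdot}$ is insensitive to $\multv{U}\mapsto\multv{U},\multv{V}$) while the preorder records them through the trailing $,\multv{V}$ term, and $\reach{P}{M},\multv{U} = \reach{P'}{M'},\multv{U},\multv{V}$ once $\multv{V}$ migrates from the path into the content. Second, every read inside $M_1$ or $M_2$ keeps its reachability: the $\multv{V}$ that formerly descended from above now descends from the two duplicated downward substitutions, and crucially, because Definition~\ref{def:reach-preorder} excludes the $\lambda$-content from the reachability of subterms, no spurious copy of $\multv{U},\multv{V}$ is added. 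Pinning down this bookkeeping---the $\lambda$-content being invisible to descendants yet visible to the node it annotates---is the main obstacle; all other cases are routine.
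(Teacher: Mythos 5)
Your proposal is correct and takes essentially the same approach as the paper's (far terser) proof: both reduce the lemma to the observation that every \rrulet{subst-r}{} rule except \rrulet{subst-r}{get} preserves the skeleton and leaves \emph{exactly} unchanged the reachability multiset of each read and the combined reachability-plus-$\lambda$-content of each application node, so that all inclusions defining $\rincl$ and $\rdiff{V}$ transport verbatim. Your detailed treatment of \rrulet{subst-r}{app} --- the skeleton forgetting $\lambda$-contents, and $\lambda$-contents being invisible to descendants --- merely spells out the bookkeeping the paper dismisses as ``almost immediate.''
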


\begin{proof}
 Clearly, a \rrulet{subst-r}{} rule does not modify the skeleton, so $\skel{M} =
 \skel{M'} = \skel{N}$. It is also almost immediate to see that rules that
 propagate reference substitutions or \rrulet{subst-r}{val} that erases the ones
 only in scope of a value do not modify $\reach{\get{r}}{M}$ for an
 occurence $\get{r}$ in $M$, nor do they modify $\reach{M'}{M} + \multv{V}$ for
 $M' = \refssl{r}{V}{M_1\ M_2}$.
\end{proof}

\begin{lemma}{Simulation}\label{lemma-ord-simulation}\\
  Let $M \rincl N$ (resp. $M \rdiff{V} N$).
  \begin{enumerate}
    \item If $M \to M'$ then $\exists n \geq 0,\ N \to^n N'$ such that $M'
      \rincl N'$ (resp. $M' \rdiff{V} N'$). If the
      applied rule is not a \rrulet{subst-r}{} or is \rrulet{subst-r}{get}, then
      $n > 0$.
    \item Corollary. If $M \rincl N$, then if $N$ is strongly normalizing, so is
      $M$.
    \item Corollary. If $M \req N$, then $M$ is strongly normalizing iff $N$
      is.
  \end{enumerate}
\end{lemma}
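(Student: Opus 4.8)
\noindent\emph{Proof strategy.}
The plan is to establish clause~(1) by a case analysis on the rule firing in $M \to M'$, and then to read off the two corollaries. The whole argument rests on the shape imposed by $M \rincl N$: since $\skel M = \skel N$, the occurrences of $\get{r}$ and of applications $\refssl{}{V}{(M_1\,M_2)}$ are in bijection between $M$ and $N$, and the two terms differ only in the downward reference substitutions lying outside abstractions, whose effect is recorded by the reachability sets. I would partition the rules into two kinds. The structural downward rules, namely all of Table~\ref{fig:lth-red-rules}(c) except \rrulet{subst-r}{get}, only push, merge or erase downward substitutions and leave the skeleton untouched; for such a step I take $n=0$ and $N'=N$, and invoke the Invariance Lemma~\ref{lemma-ord-invariance} to conclude $M' \rincl N$ (resp.\ $M' \rdiff{V} N$), since these rules alter neither the skeleton nor any reachability set. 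These are precisely the steps for which the statement permits $n=0$.

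The remaining, \emph{essential}, steps are $(\beta_v)$, the variable-substitution rules, the upward-substitution rules and \rrulet{subst-r}{get}; for each I must produce a genuinely non-empty matching reduction of $N$. The common recipe is: because $\skel M = \skel N$, the same redex is present in $N$ up to the downward substitutions that the skeleton hides, so I first fire a finite burst of structural downward steps in $N$ to bring those substitutions into position (harmless to all reachability sets, again by Lemma~\ref{lemma-ord-invariance}), and then I fire the matching essential rule, checking that $M' \rincl N'$ (resp.\ $M' \rdiff{V} N'$) is restored. The representative delicate case is $(\beta_v)$: the abstraction and the argument sit under the skeleton, so the body and the argument value are literally identical in $M$ and $N$; when the redex fires, the $\lambda$-substitution attached to the application turns into a downward substitution over the body, and the third clause of $\rincl$ — which requires, for corresponding applications, that the downward substitutions reaching the application together with its attached $\lambda$-substitution in $M$ be contained in the same data for $N$ — is exactly what re-establishes the reachability inclusion for the $\get{r}$ occurrences that the application has just exposed. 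The variable- and upward-substitution cases follow the same pattern, using that the skeleton bijection is preserved by the matching step.

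The case \rrulet{subst-r}{get} is where I expect the main obstacle to lie. Here $M$ reduces some $\refssd{}{V}{\get{r}}$ by a $\rednd$ step, either selecting a value $W \in \multv{V}(r)$ or discarding the substitution. The corresponding $\get{r}$ in $N$ has $\reach{\get{r}}{M} \subseteq \reach{\get{r}}{N}$, so any value $M$ may read is reachable in $N$ as well. To mimic a selection I propagate the reachable downward substitutions of $N$ onto its $\get{r}$ and then apply \rrulet{subst-r}{get} to pick the same $W$; this forces $n>0$ and keeps the terms related, as the $\get{r}$ occurrence is replaced by the identical value on both sides while the surrounding skeleton is unchanged. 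To mimic a discard — the one sub-case where a naive choice would give $n=0$ — I force $n>0$ by propagating at least one reachable substitution towards the corresponding $\get{r}$ in $N$ (possible because the substitution dropped in $M$ is itself reachable in $N$); the resulting $N'$ still dominates $M'$, since propagation preserves reachability while the discard only shrinks $\reach{\get{r}}{M}$. The entire case analysis runs identically for the bounded preorder $\rdiff{V}$, the only extra task being to verify that the two-sided bound — the excess of $N$'s reachability over $M$'s staying within $\multv{V}$ — survives each mimicking burst, which it does because the Invariance Lemma and the value-matching above never enlarge that excess.

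For the corollaries I would use one auxiliary fact: the structural downward rules terminate in isolation, since they strictly decrease a depth/size measure by driving substitutions toward the leaves. Then, given $M \rincl N$ with $N$ strongly normalizing, an infinite reduction $M = M_0 \to M_1 \to \cdots$ would, via repeated application of clause~(1), yield $N = N_0 \to^{n_1} N_1 \to^{n_2} \cdots$ with $M_i \rincl N_i$; since no infinite reduction can consist of structural downward steps from some index on, infinitely many steps are essential and hence contribute $n_i>0$, producing an infinite reduction of $N$ and contradicting its strong normalization. This gives corollary~(2), and corollary~(3) follows by applying~(2) in both directions across $M \req N$. The genuine difficulty throughout is concentrated in clause~(1): re-checking, after $N$'s matching burst for each essential rule, that all three clauses of $\rincl$ (respectively the two-sided bounds of $\rdiff{V}$) hold for every $\get{r}$ and application occurrence that the step created or relocated.
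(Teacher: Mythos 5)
Your proposal follows the paper's own proof very closely: the same case split (structural downward rules mapped with $n=0$ via the Invariance Lemma~\ref{lemma-ord-invariance}; essential rules mimicked by first firing a burst of structural steps in $N$ to bring the hidden substitutions into position, then the matching rule), the same use of the application clause of $\rincl$ to re-establish the reachability inclusions after $(\beta_v)$, and the same treatment of \rrulet{subst-r}{get} by pushing reachable substitutions down to the corresponding $\get{r}$ in $N$ before selecting the same value. Your treatment of the corollaries is in fact more explicit than the paper's, which omits them entirely: you correctly isolate the auxiliary fact that the steps mapped with $n=0$ terminate in isolation, so that an infinite reduction of $M$ must contain infinitely many steps mapped with $n_i>0$. (Note only that ``strictly decreasing a depth/size measure'' needs a multiset formulation, since \rrulet{subst-r}{app} and \rrule{subst-r}{\parallel} duplicate a substitution into strictly smaller subterms.)

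One sub-case of your argument is, however, unsound as stated: the discard form of \rrulet{subst-r}{get} under $\rednd$. You propose to force $n>0$ by ``propagating at least one reachable substitution towards the corresponding $\get{r}$ in $N$,'' but such a propagation step need not exist. Take $M=\refsrawsd{}{\multv{V}_2}{\refsrawsd{}{\multv{V}_1}{\get{r}}}$ and $N=\refsrawsd{}{(\multv{V}_1,\multv{V}_2)}{\get{r}}$, so that $M\req N$; if $M$ discards the inner substitution, giving $M'=\refsrawsd{}{\multv{V}_2}{\get{r}}$, then the only reductions available in $N$ are a selection (which changes the skeleton) or a discard (which empties $\reach{\get{r}}{N}$), and neither yields an $N'$ with $M'\rincl N'$. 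The correct resolution is to map discard steps with $n=0$, which the relation permits---discarding only shrinks $\reach{\get{r}}{M}$---and which is harmless for the corollaries precisely because of the termination fact you established: the $n>0$ requirement for \rrulet{subst-r}{get} is really only needed (and only provable) for the selection form, where it is forced because the skeleton changes. The paper itself glosses over this point---its proof of the get case treats only selection---so once this sub-case is reassigned to the $n=0$ class, your proof is no less complete than the paper's.
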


\begin{proof}
  We will first prove that if $M = C[E[P]$, then $N \to^\ast C[E'[P']]$. One may
  just have to apply the rule \rrule{subst-r}{\parallel} until it
  is not possible anymore to get $N \to C[N']$. Then, $E[P]$ and $N'$ having the
  same skeleton,
  $N'$ can be written as $E'[P']$ where $E'$ is $E$ with
  additionnal downward references substitutions, and $P$ and $P'$ have the same
  skeleton and the same head constructor (if $P'$ have additionnal substitutions in
  head position one can always include it in $E'$ : we actually take the maximal $E'$
  that satisfies the decomposition).
  The reachability sets of subterms in $C$ (respectively $E$,$E'$) only depends
  on $C$ (resp. $C,E$ and $C,E'$). The reachability sets of subterms in $P$
  (resp. $P'$) are unions of substitutions occuring in $P,E$ (resp.
  $P',E'$) and $C$.

  \begin{itemize}
    \item If the rule is one of the \rrulet{subst-r}{} except
      \rrulet{subst-r}{get}, by \ref{lemma-ord-invariance}, $n=0$ works.
    \item \rrulet{subst-r}{get} : $P = \refssd{r}{V}{\get{r}} \to V \in
      \multv{V}(r)$ and $P =
      \refssd{r'}{V'}{\get{r}}$. $\multv{V}(r)$ is in $\reach{\get{r}}{N}$ so by iterated application of \rrulet{subst-r}{}
      rules except \rrulet{subst-r}{get} and \rrulet{subst-r}{val}, we can push
      (without modifying the skeleton nor the reachability sets) the
      corresponding substitutions down to $\get{r}$ in $P'$ and we can do the
      same reduction. All the other reachability sets of gets or application or
      left unmodified.
    \item $(\beta_v)$ : $P = \refssl{r}{V}{(\lambda x. Q)\ V} \to
      T = \refssd{r}{V}{\vars{x}{V}{Q}}$. Up to
      \rrulet{subst-r}{val} reductions, $P' = \refssl{r'}{V'}{(\lambda x. Q)\ V} \to
        T' = \refssd{r'}{V'}{\vars{x}{V}{Q}}$. The condition on reachabiliy sets for
        application in the defintion of $\rincl$ precisely ensures that all the
        gets and applications in $Q$ have the same reachability in
        $C[E[T]]$ and in $C[E'[T']]$.
  \end{itemize}

  \rrulet{subst}{} rules
  \begin{itemize}
    \item \rrulet{subst}{var} : $P = \varss{\sigma}{x} \to x$ or $V(x)$, and $P' =
      \varss{\sigma}{\refssd{r}{V}{x}} \to \varss{\sigma}{x} \to x$ or $V(x)$.
    \item \rrulet{subst}{app} : $P = \varss{\sigma}{\refssl{r}{V}{Q_1\ Q_2}} \to
      T = \refrawsl{\vect{r}}{(\varss{\sigma}{V})}{(\varss{\sigma}{Q_1})\
      (\varss{\sigma}{Q_2})}$. We have $P' =
      \varss{\sigma}{\refssl{r'}{V'}{(\refssl{s}{U}{Q'_1})\
      (\refssl{t}{W}{Q'_2})}} \to^\ast
      \refrawsl{r'}{\varss{\sigma}{V'}}{(\refrawsd{s}{\varss{\sigma}{U}}{\varss{\sigma}{Q_1}})\
      (\refrawsd{s}{\varss{t}{V}{W}}{\varss{\sigma}{Q_2}})}$. By definition of
      reachability sets, they are invariant by all the rule applied.
    \item We proceed the same way for other cases : the var substitution just
      go through the additionnal references substitutions, and by design,
      reachability sets are not modified.
  \end{itemize}

  \rrulet{subst-r'}{} rules
  \begin{itemize}
    \item Upward substitutions commute with downward ones without interacting.
      On the other hand, they can span new downward substitutions but in this
      case, they do it in the same way for both $P$ and $P'$ and thus do not
      modify the inclusion relation on reachability sets.
  \end{itemize}
\end{proof}

\subsection{Soundness and Adequacy}
\label{ap:sound-adeq}

\begin{lemma}{Characterization}\label{lemma-charact}\\
  Let
  \begin{itemize}
    \item $\alpha = A_1 \sur{\to}{e_1} \ldots \sur{\to}{e_{n-1}} A_n
      \sur{\to}{e_n} \beta$ with $\beta = \text{Unit} \mid \mathbf{B}$
    \item $\vdash M : (\alpha,e)$
    \item $N_i \in \mathbf{SC}(A_i,e_i')$
    \item $\multv{U} \subseteq \mathbf{SC}$
  \end{itemize}
  with $\ptentre{i}{1}{n}, e'_i \subseteq e_i \subseteq e$ and $\dom{\multv{U}} \subseteq e$.
  We define
  \[
    \Lambda(M,\multv{U},\seq{N}) = \refssd{r}{\mathcal{U}}{M\ N_1\ \ldots\ N_n}
  \]
    Then $M \in \mathbf{SC}(\alpha,e)$ if and only if
  $\Lambda(M,\multv{U},\seq{N})$ is {\condSN} and {\condWB} for all $\multv{U},\seq{N}$
  satisfying the above conditions.
  In the following, we may conveniently omit some of the parameters
  $(M,\multv{U},\seq{N})$ of $\Lambda$.
\end{lemma}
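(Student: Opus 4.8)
The plan is to prove the equivalence by induction on $n$, the number of arrows in the type $\alpha$, peeling off one argument $N_i$ at each step by unfolding the function-type clause of Definition~\ref{definition-strongly-computable}. The engine of the induction is the identity $\Lambda(M\,N_1,\multv{U},(N_2,\ldots,N_n)) = \refssd{}{U}{(M\,N_1)\,N_2\cdots N_n} = \Lambda(M,\multv{U},(N_1,\ldots,N_n))$, which lets me apply the induction hypothesis to the term $M\,N_1$ of type $\alpha_1 = A_2\sur{\to}{e_2}\cdots\sur{\to}{e_n}\beta$ with the same ambient effect $e$ and the shorter argument list.

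For the base case $n=0$ we have $\alpha = \beta\in\{\Unit,\mathbf{B}\}$ and $\Lambda(M,\multv{U}) = \refssd{}{U}{M}$, so the claim becomes: $M\in\mathbf{SC}_R(\beta,e)$ iff $\refssd{}{U}{M}$ is {\condSN} and {\condWB} for every $\multv{U}\subseteq\mathbf{SC}_R$ with $\dom{\multv{U}}\subseteq e$. By the base clause of Definition~\ref{definition-strongly-computable}, $M\in\mathbf{SC}_R(\beta,e)$ is exactly ``$M$ is {\condSN} and {\condWB}''. The only work is to absorb the extra top-level substitution $\multv{U}$: since {\condSN} already quantifies over an arbitrary strongly computable substitution placed on top of the term, I push $\multv{U}$ inside with rule \rrulet{subst-r}{merge}, i.e. $\refssd{}{V}{\refssd{}{U}{M}}\to\refsrawsd{}{\multv{U},\multv{V}}{M}$, and observe that these two terms are $\req$-equivalent (same skeleton and reachability sets); Lemma~\ref{lemma-ord-simulation} then transfers strong normalization between them, and taking $\multv{U}$ or $\multv{V}$ empty recovers $M$ itself. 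For {\condWB}, prepending $\multv{U}$ to the environment sequence shows that a $(\refssd{}{U}{M},\seq{\multv{V}})$-reduction is, up to one leading $(\to_\downarrow)$ interaction with a strongly computable substitution, a $(M,(\multv{U},\multv{V}_1,\ldots))$-reduction, so the two well-behavedness statements coincide.

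For the inductive step, in the forward direction I assume $M\in\mathbf{SC}_R(\alpha,e)$ and fix admissible $\multv{U},\seq{N}$. Using effect weakening for $\mathbf{SC}$ (monotonicity of the sets in the effect annotation, established through the subtyping relation, as in the ``$e_1\subseteq e$ without loss of generality'' remark) I move each $N_i\in\mathbf{SC}_R(A_i,e_i')$ into $\mathbf{SC}_R(A_i,e)$, apply the function-type clause $n$ times to obtain $M\,N_1\cdots N_n\in\mathbf{SC}_R(\beta,e)$, and conclude with the base case that $\Lambda(M,\multv{U},\seq{N})=\refssd{}{U}{M\,N_1\cdots N_n}$ is {\condSN} and {\condWB}. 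In the backward direction I assume the right-hand side and must verify the function-type clause: for every $N_1\in\mathbf{SC}_R(A_1,e)$, $M\,N_1\in\mathbf{SC}_R(\alpha_1,e)$. By the induction hypothesis applied to $M\,N_1$, it suffices that $\Lambda(M\,N_1,\multv{U},(N_2,\ldots,N_n))=\Lambda(M,\multv{U},(N_1,\ldots,N_n))$ be {\condSN} and {\condWB}, which is precisely the assumed right-hand side --- provided $N_1$ meets the admissibility condition $N_1\in\mathbf{SC}_R(A_1,e_1')$ with $e_1'\subseteq e_1$.

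The main obstacle is exactly this last proviso: the definition of $\mathbf{SC}$ at a function type quantifies over arguments of any effect bounded by the ambient $e$, whereas the lemma only feeds in arguments whose effect is bounded by the latent effect $e_1\subseteq e$, and this argument position behaves contravariantly in the effect. Reconciling the two is the crux of the backward direction. I would resolve it using the call-by-value discipline: an argument is only ever consumed by $M$ after being reduced to a pure value, whose effect is $\emptyset\subseteq e_1$; combined with the closure of $\mathbf{SC}$ under reduction of arguments (a standard reducibility property, already used in the soundness sketch) and with effect weakening, this lets me replace an arbitrary $N_1\in\mathbf{SC}_R(A_1,e)$ by an admissible one without changing the {\condSN}/{\condWB} status of $\Lambda$, via the preorder $\rincl$ and Lemma~\ref{lemma-ord-simulation}. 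The remaining bookkeeping --- that the constraints $e_i'\subseteq e_i\subseteq e$ and $\dom{\multv{U}}\subseteq e$ are exactly what keeps every intermediate term, and in particular $\Lambda$, typable at effect $e$ so that {\condSN} and {\condWB} are even meaningful --- is routine from the typing rules \textsc{(app)} and \textsc{(subst-r)}.
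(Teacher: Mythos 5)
Your overall strategy---induction on the type structure, peeling one argument per step via the identity $\Lambda(M\,N_1,\multv{U},(N_2,\ldots,N_n)) = \Lambda(M,\multv{U},(N_1,\ldots,N_n))$, with a base case that unfolds Definition~\ref{definition-strongly-computable} directly---is the same approach as the paper's, whose entire proof of Lemma~\ref{lemma-charact} is the single phrase ``by induction on types''. Your base case is sound: absorbing the top-level $\multv{U}$ through \rrulet{subst-r}{merge} and the $\req$-equivalence of $\refssd{}{V}{\refssd{}{U}{M}}$ with $\refsrawsd{}{\multv{U},\multv{V}}{M}$ correctly transfers {\condSN} via Lemma~\ref{lemma-ord-simulation}, and the shift-by-one correspondence of environment reductions correctly transfers {\condWB}.

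However, the inductive step has a genuine gap, located exactly where you place the crux, and neither of the two tools you invoke to close it actually works. First, the ``effect weakening'' $\mathbf{SC}_R(A_i,e_i')\subseteq\mathbf{SC}_R(A_i,e)$ used in the forward direction is not a consequence of subtyping when $A_i$ is a function type: membership in $\mathbf{SC}_R(A\sur{\to}{f}\gamma,e')$ quantifies arguments over $\mathbf{SC}_R(A,e')$, whereas membership in $\mathbf{SC}_R(A\sur{\to}{f}\gamma,e)$ quantifies them over the larger set $\mathbf{SC}_R(A,e)$, so the argument position is contravariant in the effect index---this is the very same mismatch you are trying to repair, and proving such a weakening lemma essentially requires the characterization lemma itself, making the appeal circular. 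Second, in the backward direction, replacing an arbitrary $N_1\in\mathbf{SC}_R(A_1,e)$ by an ``admissible'' argument ``via the preorder $\rincl$ and Lemma~\ref{lemma-ord-simulation}'' is not a legitimate move: by definition $M\rincl N$ requires $\skel{M}=\skel{N}$, i.e.\ the two terms must be identical up to the downward substitutions in scope, so two versions of $\Lambda$ built from \emph{different} argument terms are simply not comparable under $\rincl$, and the simulation lemma says nothing about them. The call-by-value/purity intuition (arguments are only consumed once reduced to pure values) is the right one, but turning it into a proof requires the full passive/active analysis of the soundness argument---using {\condWB} to bound the exchange of substitutions, collecting everything an argument receives into a single $\multv{X}$, and only then comparing with $\refsrawsd{}{\multv{X}}{N_1}$ via $\rincl$---none of which can be replaced by the one-line substitution of arguments you propose. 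As written, your proof establishes the equivalence only in the case where the two quantifications (arguments at ambient effect $e$ in Definition~\ref{definition-strongly-computable}, arguments at effects $e_i'\subseteq e_i$ in the lemma) coincide.
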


\begin{proof}
  By induction on types.
\end{proof}

\begin{lemma}{Auxiliary results for soundness}\label{lemma-soundness-aux}\\
  Let $M,V \in \mathbf{SC},\ \multv{V} \subseteq \mathbf{SC}$ then
  \begin{enumerate}
    \item For any infinite reduction of $\Lambda(N)$, $N$ must be reduced at some
      point.
    \item If $\varss{\sigma}{N} \to M$ then $\varss{\sigma}{N} \in \mathbf{SC}$
    \item $\refssd{r}{V}{M} \in \mathbf{SC}$
    \item $\refssu{r}{V}{M} \in \mathbf{SC}$
    \item If $M \to M'$, then $M' \in \mathbf{SC}$
    \item $\varsmetas{x}{V}{\multv{V}} \subseteq \mathbf{SC}$
  \end{enumerate}
\end{lemma}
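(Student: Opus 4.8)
The plan is to treat the six items as a single development, organized by induction on the type $\alpha$ and, where item (6) demands it, interleaved with the soundness induction of Lemma~\ref{lemma-soundness}. The common engine is the Characterization Lemma~\ref{lemma-charact}, which reduces membership $M\in\mathbf{SC}(\alpha,e)$ to checking that every $\Lambda(M,\multv{U},\seq{N})$ satisfies \condSN{} and \condWB{}, together with the simulation preorder $\rincl$ of Lemma~\ref{lemma-ord-simulation}, which transports strong normalization from a term with more available reference substitutions to one with fewer. I would establish item (5), closure under reduction, first, as it is used repeatedly. At a base type it is immediate: if $M\to M'$ then $\refssd{r}{W}{M}\to\refssd{r}{W}{M'}$ by congruence, so \condSN{} for $M'$ follows from that of $M$; and any $(M',\seq{V})$-reduction is the tail of an $(M,\seq{V})$-reduction obtained by prepending the step $M\to M'$, so the bound $n_0$ for $M$ yields one for $M'$ and \condWB{} transfers. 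The inductive case is the one-liner $M\,N\to M'\,N$ together with the induction hypothesis at the smaller type.

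For items (3) and (4) I would again go through Characterization. For (3), the key is a preorder comparison: the $\get{r}$-occurrences inside $M$ see the same substitutions $\multv{U},\multv{V}$ in both $\Lambda(\refssd{r}{V}{M},\multv{U},\seq{N})$ and $\Lambda(M,(\multv{U},\multv{V}),\seq{N})$, whereas the $\get{r}$-occurrences inside the $N_i$ see only $\multv{U}$ on the left but $\multv{U},\multv{V}$ on the right; hence $\Lambda(\refssd{r}{V}{M},\multv{U},\seq{N})\rincl\Lambda(M,(\multv{U},\multv{V}),\seq{N})$. Since $\multv{U},\multv{V}\subseteq\mathbf{SC}$ and $M\in\mathbf{SC}$, the right-hand side is \condSN{} by Characterization, so the left-hand side is strongly normalizing by Lemma~\ref{lemma-ord-simulation}. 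Item (4) reduces to (3): starting from $\Lambda(\refssu{r}{V}{M},\multv{U},\seq{N})$, the upward substitution bubbles up through the applications by \rrulet{subst-r'}{lapp} and \rrulet{subst-r'}{rapp}, planting a downward copy on each argument $N_i$ before being garbage-collected at the top by \rrule{subst-r'}{\top}; by (3) each so-decorated $N_i$ stays in $\mathbf{SC}$, which reduces the situation to an instance of $\Lambda(M,\multv{U},\cdot)$ with strongly computable arguments, hence \condSN{}.

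Items (1) and (2) are the interactive core. For (1) I would argue by contradiction in the spirit of the ``$P$ passive'' case of the soundness sketch: if the head is never reduced, all reducts have the shape $Q\,N_1'\cdots N_n'$ with each $N_i'$ a reduct of the strongly computable $N_i$; condition \condWB{} forces the exchange of upward substitutions to stop after finitely many steps, after which the downward substitutions each $N_i'$ has absorbed can be merged into a single $\multv{X}_i$ with $N_i'\rincl\refsrawsd{}{\multv{X}_i}{N_i}$, the latter being strongly normalizing by \condSN{}; by Lemma~\ref{lemma-ord-simulation} the whole reduction is then finite, a contradiction. Item (2) exploits that $\varss{\sigma}{N}$, taken as a whole term, admits only the single administrative head-step to $M$ (the evaluation contexts $E$ never descend under a variable substitution), so $M$ is its unique one-step reduct; \condSN{} and \condWB{} for $\varss{\sigma}{N}$ then follow by prepending this step to those of $M$, exactly as in item (5).

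Finally, item (6) is where the circularity with soundness surfaces: for a stored value $W\in\multv{V}(r)$ the only nontrivial case is $W=\lambda y.M'$, where $\metas{\sigma}{W}=\lambda y.\varss{\sigma}{M'}$ is precisely the abstraction case of Lemma~\ref{lemma-soundness}. The honest way to organize this is one induction in which Lemma~\ref{lemma-soundness} and the present lemma are proved simultaneously over term structure, so that (6) may invoke soundness on the strictly smaller subterm $W$ and then pass from $\varss{\sigma}{W}$ to $\metas{\sigma}{W}$ via item (2)/(5). I expect the main obstacle to be not the strong-normalization bookkeeping, which $\rincl$ handles smoothly via Lemmas~\ref{lemma-ord-invariance} and~\ref{lemma-ord-simulation}, but the transfer of \condWB{} across the preorder and through the environment steps $\to_\uparrow,\to_\downarrow$: one must verify that the simulation matches upward steps with upward steps, so that both the bound $n_0$ and the clause requiring each emitted upward substitution to carry a strongly computable multiset are genuinely preserved.
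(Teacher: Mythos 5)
Your overall architecture is the paper's own: Characterization (Lemma~\ref{lemma-charact}) plus the simulation preorder $\rincl$ drive everything, and your items (1), (5), the \condSN{} half of (3), the sketch of (4), and the mutual-induction reading of (6) coincide with the paper's arguments. However, two steps are genuinely broken or missing as written. The more serious one is item (2). Your claim that \condSN{} and \condWB{} for $\varss{\sigma}{N}$ ``follow by prepending this step to those of $M$, exactly as in item (5)'' fails at arrow types: there, membership in $\mathbf{SC}$ is tested on $\Lambda(\varss{\sigma}{N},\multv{U},\seq{P})$, which does \emph{not} have a unique one-step reduct, since the arguments and the reference substitutions may reduce and interact arbitrarily before the head step fires; its reduction sequences are therefore not obtained by prepending the head step to reductions of $\Lambda(M,\multv{U},\seq{P})$. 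Nor can you mimic item (5)'s type induction: that induction applies the induction hypothesis to the arbitrary congruence step $M\,N \to M'\,N$, whereas backward closure is sound only for terms with a unique reduct, a property that $(\varss{\sigma}{N})\,P$ does not enjoy, so the induction hypothesis at the smaller type is not applicable. The paper's proof instead wires item (1) into item (2): in any infinite reduction of $\Lambda(\varss{\sigma}{N})$ the head must eventually be reduced, its unique reduct is $M$, and the term so obtained is reachable from $\Lambda(M)$, which is \condSN{}; the same factorization handles \condWB{}. You have item (1) available but never use it where it is actually needed.

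The second gap concerns the \condWB{} halves of (3) and (4). For (3) you establish only strong normalization, and your closing paragraph defers \condWB{} to an unresolved ``obstacle'' about transporting \condWB{} across $\rincl$. That route is indeed unavailable --- Lemma~\ref{lemma-ord-simulation} transfers only strong normalization --- but it is also unnecessary: the paper proves \condWB{} for $\refssd{r}{V}{M}$ by remapping any environment reduction of it to an environment reduction of $M$, appending $\multv{V}$ to the sequence $\seq{\multv{W}}$ and beginning with a $\to_\downarrow$ step; \condWB{} of $M$ then concludes directly. This direct remapping of environment reductions, rather than any preorder argument, is how \condWB{} is handled throughout the paper's proof, and it dissolves the obstacle you flag; without it (or something equivalent), your items (3) and (4) establish only half of what the Characterization Lemma requires for membership in $\mathbf{SC}$.
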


\begin{proof}
  \begin{enumerate}
    \item If the subterm $N$ is never reduced, the reducts of $\Lambda(N)$ are of
      the form $N\ N_1' \ldots N_n'$ (with some additionnal reference
      substitutions not written for conciseness) where each $N_i'$ can be seen as the result
      of an $(N_i,\seq{\multv{W}})$ reduction for some $\seq{\multv{W}}$ corresponding
      to the upward substitutions generated by the interaction with other
      subterms $N_1,\ldots,N_n$. All these subterms are {\condWB} and
      generate $\mathbf{SC}$ upward substitutions. $N_i$ being well behaved, there is a number of
      steps $n_i$ after wich the reduct $N'_i$ doesn't generate upward
      substitutions anymore. After $n= \vee n_i$ steps (actually one may have to take a
      bigger $n$ for the substitutions have to dispatch, but there is such a $n$), $N_i$ has reduced to some $P_i$
      and doesn't receive any downard substitution. If we gather all the downward substitutions
      delivered to $N_i$ (or, viewed as an $(N_i,\seq{\multv{W}})$ reduction, all
      the substitutions produced by $\to_i$ rules) as $\multv{X}$, then
      $N_i \rdiff{U} Q = \refssd{}{U}{N_i}$. Consider the first step of the
      $(N_i,\seq{\multv{W}})$ reduction $N_i \to N_i'$, either this is a usual step and then by
      \ref{lemma-ord-simulation} $Q \to^\ast Q'$ with $N_i'
      \rdiff{U} Q'$ or this is a $\to_j$ step and $N_i' = \refssd{}{Z}{N_i}$
      for some $\multv{Z} \subseteq \multv{U}$ and $N_i'
      \rdiffraw{\multv{U} \setminus \multv{Z}} Q$. By induction over the n firt
      steps, we get that $Q \to^\ast Q_0$ such that $P_i
      \req Q_0$. Since $N_i \in \mathbf{SC}$, $Q$ and all its reducts are
      {\condSN}, and so is $P_i$ thus the reduction of $\Lambda(N)$ must be finite.

    \item Let $M' = \varss{\sigma}{M}$, consider an infinite reduction of
      $\Lambda(M')$.  By the previous point, the subterm $M'$ must be reduced,
      and $M$ is the only possible reduct. But the reduct of $\Lambda(M')$
      we get is reachable from $\Lambda(M)$ which is {\condSN} thus the reduction
      must be finite and $\Lambda(M')$ is {\condSN}.

      Similarly, consider (for suitable $\seq{\multv{V}}$) a $(M',\seq{\multv{V}})$
      reduction. If $M'$ is never reduced, then no upward
      substitution is ever produced. If $M'$ is reduced at
      some point, then it is reduced to $M$ and thus produces a finite amount of
      $\mathbf{SC}$ upward substitutions since $M$ is {\condWB}. Hence $M'$ is {\condWB}.

    \item Consider $\Lambda(\refssd{r}{V}{M},\multv{U},\seq{N})$. Then
      $\Lambda(\refssd{r}{V}{M},\multv{U},\seq{N}) \rincl \Lambda(M,\multv{U} +
      \multv{V},\seq{N})$ but the latter is {\condSN}, since $M$ is $\mathbf{SC}$. So is
      the former by \ref{lemma-ord-simulation}.
      For {\condWB}, we can easily map a $(M',\seq{\multv{W'}})$ reduction to a
      $(M,\seq{\multv{W}})$ by just appending $\multv{V}$ to
      $\seq{\multv{W}}$ and
      start with a $\to_i$ reduction. Since $M \in \mathbf{SC}$, $M'$ is {\condWB}.

    \item Let $M' = \refssu{r}{V}{M}$. We proceed by induction on types. For
      base types, it is clear that $M'$ is {\condWB} iff $M$ is, and
      $\refssd{s}{U}{M'}$ has exactly the same reductions as $\refssd{s}{U}{M}$
      except for commutation of upward and downard substitutions, and a possible
      \rrule{subst}{\top}. Thus they are both {\condSN}.
      Now, for $\alpha = A \sur{\to}{e_1} \alpha'$, consider an infinite reduction of
      $\Lambda(M',\multv{U},\seq{N})$. If the upward substitution is never
      reduced, we can map this to an infinite reduction of
      $\Lambda(M,\multv{U},\seq{N})$ for the same reasons as above, but the
      latter is {\condSN}. Hence at some point the upward substitution must move up,
      so that the head term becomes $\refssu{r}{V}{\refssl{t}{W}{M''\
        (\refssd{r}{V}{N_1'}})}$ where $M''$ and $N_1'$ are reducts descendant
      from respectively $M$ and $N_1$. Using induction and points 5 and 3 of
      this lemma, $P = \refssl{t}{W}{M''\ (\refssd{r}{V}{N_1'}})$ is $\mathbf{SC}$. By
      induction, so is the new head redex, and the whole redex of $\Lambda$ is
      reachable from $\Lambda(\refssl{t}{W}{M\ N_1},\multv{U},\seq{N})$.

    \item This is straightforward from the definition of $\mathbf{SC}$ sets.
  \item Combining 2. and 4., using the fact that $\varsmetas{x}{V}{U}$ is the
      normal form of $\varss{\sigma}{U}$ using only \rrule{subst}{} rules, we get
      the result by induction on the length of the reduction.
  \end{enumerate}
\end{proof}

We are now able to sketch the proof of soundness and adequacy:

\begin{proof}\ref{lemma-soundness}
  We perform the proof by induction.
  \begin{itemize}
    \item $M=x$ : $\varss{\sigma}{x}$ reduces to $\sigma(x) \in \mathbf{SC}$, and
      we apply \ref{lemma-soundness-aux}
    \item $M=\ast$ : $\varss{\sigma}{\ast} \to \ast$.
    \item $M=\lambda y.M'$ : $\varss{\sigma}{M} \to \lambda x .
      (\varss{\sigma}{M'})$. Consider an infinite reduction $\Lambda(\lambda x.
      (\varss{\sigma}{M'}))$. By \ref{lemma-soundness-aux} 1), at some point a
      $(\beta_v)$ must occur, replacing the head redex by $P =
      \refssd{s}{U}{\vars{y}{X}{\varss{\sigma}{M'}}}$ which is $\mathbf{SC}$ by
      induction \ref{lemma-soundness-aux} 3). But this term can be reached from
      $\Lambda(P,\multv{U},\refssd{}{V}{N_{i+1}})$ (for some
      $\multv{V}$ that are $\mathbf{SC}$, the ones emitted by the reduction of
      $N_1$) which is {\condSN}. The same kind of
      argument show that $\Lambda(M,\multv{U},\seq{N})$ is {\condWB} : if the redex
      is not reduced, then all upgoing substitutions that come from $N_i$ must stop
      after a finite number of reductions and all contain $\mathbf{SC}$ terms,
      or the redex is reduced after a finite number of steps and from this point
      the term is a reduct of a {\condWB} one thus must be {\condWB} as well.

    \item $M = \get{r}$ : $\varss{\sigma}{\get{r}} \to \get{r}$. Let see that
      $\get{r} \in \mathbf{SC}$. Consider an infinite reduction of
      $\Lambda(\get{r})$. By \ref{lemma-soundness-aux} 1), the $\get{r}$ must be
      reduced at some point and from
      this point it is either replaced by a value from the substitution prefix,
      or by a value emitted by one of the $N_i$, all of these being
      $\mathbf{SC}$.
    \item $M = \varss{\tau}{M'}$, then $\varss{\sigma}{M} \to
      \varss{\mu}{M'}$. By \ref{lemma-soundness-aux},
      $\varsmetas{x}{V}{U} \in \mathbf{SC}$. $M'$ has a typing judgement of the form $x_1:A_1,\ldots,x_n : A_n, y_1 :
      B_1, \ldots, y_m : B_m \vdash M' : (\alpha,e)$. By induction,
      $\varss{\mu}{M'} \in
      \mathbf{SC}$.
    \item $M = \refssd{}{U}{M'}$ : $\varss{\sigma}{M} \to
      \refssd{}{\varsmetas{x}{V}{\multv{U}}}{\varss{\sigma}{M'}}$. By
      induction, $\varss{\sigma}{M'}$ is $\mathbf{SC}$ and by \ref{lemma-soundness-aux}
      so is $M$.
    \item $M = \refssu{r}{\mathcal{V}}{M'}$ : induction +
      \ref{lemma-soundness-aux}.
    \item $M = \refrawsl{(r_i)}{(\mathcal{V}_i)}{M_1\ M_2}$ : $\varss{\sigma}{M}
      \to M' = \refrawsl{r}{(\varss{\sigma}{\multv{V}})}{(\varss{\sigma}{M_1})\
      (\varss{\sigma}{M_2})}$. By induction and \ref{lemma-soundness-aux}, $\varss{\sigma}{M_i} \in \mathbf{SC}$
      and $\varss{x}{\multv{V}}$ are in $\mathbf{SC}$. Then by definition
      $M' \in \mathbf{SC}$.
    \item $M = M_1 \parallel M_2$ : $\vars{x}{V}{M} \to M'= M_1' \parallel M_2'$
      with $M_i' = \vars{x}{V}{M_i}$  and according to \ref{lemma-soundness-aux} it
      is sufficient to prove $M' \in \mathbf{SC}$. By induction, $M_i' \in
      \mathbf{SC}$. Let us take an $(M',r,\multv{V})$ reduction
      starting from $M'$. It can be associated to an $(M_i',r_i,\multv{V}_i)$
      reduction from the point of view of $M_i'$, where
      $r_i,\multv{V}_i$ are coming either from $(r,\multv{V})$ or from the
      upward substitution of $M'_{1-i}$, that are all in $\mathbf{SC}$. Since
      the $M_i$ are {\condWB}, so is $M'$.
      Now, the proof is very similar to the first point of
      \ref{lemma-soundness-aux} : Take an $M'$ reduction, since $M'_i$ are {\condWB}, after a finite
      number of steps, no more substitutions are exchanged. We can then smash
      all the substitutions received by $M'_i$ into a big one $(x,\multv{X})$
      and since $\refssd{x}{X}{M'_i}$ is {\condSN}, then the reduction must be
      finite.
  \end{itemize}

\end{proof}

\begin{proof}\ref{lemma-adequacy}
  We will prove additionnally by induction on types that $\mathbf{SC}(\alpha)
  \neq \emptyset$ for any $\alpha$.
  \begin{itemize}
    \item For $\alpha = \texttt{Unit} \mid \mathbf{B}$ it is immediate : $\ast
      \in \mathbf{SC}(\texttt{Unit})$ and $\ast \parallel \ast \in
      \mathbf{SC}(\mathbf{B})$. If $M \in \mathbf{SC}(\alpha)$
      had an infinite reduction, then so would $\refsrawsd{r}{V}{M}$
    \item For $\alpha = A \to \alpha'$, by induction there exists $M_{\alpha'}
      \in \mathbf{SC}(\alpha')$, then $\lambda x . M_{\alpha'}$ for $x$ not free in
      $M_{\alpha'}$ is in $\mathbf{SC}(\alpha)$. If $M \in \mathbf{SC}(\alpha)$ did not terminate, so would $M\
      N$ for $N \in \mathbf{SC}(A)$ which exists since the latter set
      is not empty.
  \end{itemize}
\end{proof}

\section{Relation between {\lthis} and {\lamadio}}
\label{ap:simul}
In the following, {\lamadio} stands for the version of the concurrent
$\lambda$-calculus described in the second chapter of
~\cite{madet:tel-00794977}.
\newcommand{\vstore}{\multv{V}_S}
\newcommand{\strans}{$\overline{\phantom{M}}^S$}
\newcommand{\sleq}{\rightsquigarrow_s}

\begin{definition}{Translation of {\lamadio} in {\lthis}}\\
  Let $M$ be a term and $S$ be a store of \lamadio.
  S can be written as
  \[
    S = r_1 \Leftarrow V^1_1 \parallel \ldots \parallel r_1
    \Leftarrow V^1_{k_1} \parallel \ldots \parallel r_n \Leftarrow
    V^n_1 \parallel \ldots \parallel r_n \Leftarrow
    V^n_{k_n}
  \]
  Let $\multv{V}_S : r_i \mapsto [V^i_1, \ldots, V^i_{k_i}]$.
  We define the translation of $M$ under $S$ by
  \begin{itemize}
    \item If $M = V$ or $M = \set{s}{V}$ then $\overline{M}^S = M$
    \item If $M = N\ N'$ then $\overline{M}^S = \refssl{}{\vstore}{\overline{N}^S\
      \overline{N'}^S}$
    \item If $M = \get{s}$ then $\overline{M}^S =
      \refssd{}{\vstore}{\get{s}}$
    \item If $M = N \parallel N'$ then $\overline{M}^S = \overline{N}^S
      \parallel \overline{N'}^S$
  \end{itemize}

  In fact, $\overline{M}^S$ is the normal form reached from
  $\refsrawsd{r}{\vstore}{M}$ using only downward structural rules.
  For any program $P = M \parallel S$, we define $\overline{P} =
  \overline{M}^S$.
  If $P = M$ then $\overline{P} = \overline{M} = M$.
\end{definition}

Our reduction have the drawback of not reducing under abstractions, such that
some of the variable substitution propagation will be delayed until the
corresponding lambda will be applied (or forever). To cope with this subtlety,
we introduce a relation on terms of {\lthis} that expresses that a
term $M$ is related to $N$ if $M$ is the same as $N$ up to some pending
substitutions hidden under lambdas, and such that if we could reduce these
substitutions freely $M$ would actually reduce to $N$.

  \begin{definition}{Substitution relation}\label{mdef-substitution-equivalence}\\
    We define $\sleq$ as :
    \begin{itemize}
      \item $x \sleq x$, $\get r \sleq \get r$, $\ast \sleq \ast$
      \item $M \parallel N \sleq M' \parallel N'$ iff $M \sleq M'$ and $N \sleq N'$
      \item $\lambda x.M \sleq \lambda x.M'$ if $M =
        \varss{\sigma_1}{\varss{\ldots}{\varss{\sigma_n}{N}}}$ such that
        $N\{\sigma_1,\ldots,\sigma_n\} = M$.
      \item $\refssd{}{V}{M} \sleq \refssd{}{V'}{M'}$ or $\refssu{}{V}{M}
        \sleq \refssu{}{V'}{M'}$ iff $M \sleq M'$ and $\multv{V} \sleq
        \multv{V'}$
      \item $\refssl{}{V}{M\ N} \sleq \refssl{}{V'}{M'\ N'}$ iff $M \sleq M'$,
        $N \sleq N'$ and $\multv{V} \sleq \multv{V'}$
      \item $\varss{\sigma}{M} \sleq \varss{\sigma'}{M'}$ iff $M \sleq M'$ and
        $\sigma \sleq \sigma'$
    \end{itemize}
    Where we extended point-wise the definition of $\sleq$ to functions and
    multisets.
  \end{definition}

\begin{theorem}{Simulation}\\
  Let $P,Q$ be {\lamadio} programs such that $P \rightarrow^\ast Q$. Then
  $$\overline{P} \rednd^\ast M \sleq \overline{Q}$$
\end{theorem}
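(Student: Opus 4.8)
The plan is to reduce the many-step statement to a single-step simulation and then iterate over the length of $P \rightarrow^\ast Q$. First I would record a few elementary facts about the relation $\sleq$ of Definition~\ref{mdef-substitution-equivalence}: that it is reflexive (take $n=0$ in the abstraction clause), that it is a congruence for all the constructors of {\lthis} (immediate from its inductive definition), and that $\sleq$-related terms expose the same {\lthis} redices up to substitutions frozen under abstractions. These give the base case directly: for $P \rightarrow^\ast P$ take $M = \overline{P}$ and use reflexivity. It is also convenient to use the characterisation of the translation as the downward normal form of $\refssd{}{\vstore}{M}$, which lets me read $\overline{M}^S$ as the result of distributing the store $\vstore$ into every application and $\get{}$ of $M$.

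The core is the single-step simulation lemma: \emph{if $M \sleq \overline{Q}$ and $Q \rightarrow Q'$ in {\lamadio}, then there is $M'$ with $M \rednd^\ast M'$ and $M' \sleq \overline{Q'}$}. Granting this, the theorem follows by induction on the length of $P \rightarrow^\ast Q$: the inductive step composes $\overline{P} \rednd^\ast M \sleq \overline{Q}$ (induction hypothesis) with the lemma applied to $M \sleq \overline{Q}$ and the last step $Q \rightarrow Q'$, yielding $\overline{P} \rednd^\ast M \rednd^\ast M' \sleq \overline{Q'}$. To prove the lemma I would do a case analysis on the {\lamadio} rule firing in $Q = N \parallel S \rightarrow Q'$, using the congruence of $\sleq$ and the fact that $\overline{\cdot}^S$ is a homomorphism on applications, parallel composition and $\get{}$ to localise to the redex and close under the evaluation context $C[E[-]]$.

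\textbf{get.} If $\get{s} \rightarrow V$ with $\store{s}{V}$ in $S$, then by the translation of $\get{s}$ and the definition of $\sleq$ the matching occurrence of $M$ is a term $\refssd{}{\vstore}{\get{s}}$ with $V \in \vstore(s)$; under $\rednd$ this reduces in one step to $V = \overline{V}^S$, the store being unchanged, and congruence closes the remaining positions. \textbf{set.} If $\set{s}{V} \rightarrow \ast$ with $S$ becoming $S \parallel \store{s}{V}$, then $\overline{Q'}$ is $\overline{Q}$ recomputed with $\vstore$ augmented by $V$ at $s$, i.e.\ with one extra downward binding distributed into every application and $\get{}$; on the {\lthis} side the encoded assignment $((\lambda x.\refrawsu{}{s \mapsto [V]}{\ast})\,\ast)$ emits the upward substitution $\refrawsu{}{s \mapsto [V]}{\ast}$, which by repeated \rrule{subst-r'}{\parallel}, \rrulet{subst-r'}{lapp}, \rrulet{subst-r'}{rapp}, the downward structural rules, and finally \rrule{subst-r'}{\top}, delivers exactly that binding to the same positions, giving a reduct $\sleq \overline{Q'}$. \textbf{$\beta_v$.} If $(\lambda x.R)\,V \rightarrow R[V/x]$, the matching subterm of $M$ is $\refssl{}{W}{((\lambda x.R_M)\ V_M)}$ with $R_M$ matching $R$ and $V_M \sleq V$, and a $(\beta_v)$ step yields $\refssd{}{W}{\varss{\{x\mapsto V_M\}}{R_M}}$; the explicit substitution $\varss{\{x\mapsto V_M\}}{}$ may remain blocked under inner abstractions whereas $\overline{R[V/x]}^S$ has it fully performed, a discrepancy absorbed precisely by the abstraction clause of $\sleq$.

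The main obstacle is exactly this last case. Because {\lthis} is weak and does not reduce under $\lambda$, the variable substitution produced by $(\beta_v)$ cannot in general be pushed all the way in, so the {\lthis} reduct genuinely lags behind the fully substituted {\lamadio} reduct; the entire purpose of $\sleq$ is to make this lag invisible. Concretely, the work is a commutation-with-substitution argument: after performing the downward structural rules that re-establish the translated form of the newly exposed body, one must check that $\overline{R[V/x]}^S$ relates, modulo $\sleq$ and the downward rules, to $\refssd{}{\vstore}{\varss{\{x\mapsto V\}}{R}}$, and that the reference substitution $\refssd{}{W}{}$ introduced by the step distributes consistently with the store $\vstore$ of $S$. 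Once this commutation is established, the \textbf{set} and \textbf{get} cases are routine substitution propagations and the contextual closure is immediate from the congruence of $\sleq$.
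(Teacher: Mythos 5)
There is nothing in the paper to compare you against: the Simulation theorem is the final statement of Appendix~\ref{ap:simul} and the paper gives \emph{no proof of it at all}. So your proposal has to be judged on its own merits. On those merits, your plan is the natural one and uses exactly the ingredients the appendix sets up for this purpose: the relation $\sleq$ of Definition~\ref{mdef-substitution-equivalence}, whose abstraction clause exists precisely to absorb the substitutions that weak reduction leaves frozen under binders; the characterization of $\overline{M}^S$ as the normal form of $\refsrawsd{}{\vstore}{M}$ under downward structural rules; and the encoding of $\set{r}{V}$ as $((\lambda x. \refrawsu{}{r \mapsto [V]}{\ast})\,\ast)$ from Section~\ref{sec:discussion}. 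Your strengthening of the statement to a one-step lemma with an arbitrary $M \sleq \overline{Q}$ on the left (rather than $\overline{Q}$ itself) is the correct move for the induction to close, and it has the additional virtue of never needing transitivity of $\sleq$, which the paper nowhere establishes. The weighting of the cases is also right: \textbf{get} and \textbf{set} are substitution plumbing, and the only real mathematical content is the $\beta_v$ case, where the {\lamadio} meta-level substitution $R[V/x]$ runs ahead of the explicit one.

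What you have, however, is a blueprint rather than a proof, and three pieces are asserted where they need to be carried out. First, the congruence of $\sleq$ and the claim that $\sleq$-related terms expose matching redices require proofs; moreover the abstraction clause of Definition~\ref{mdef-substitution-equivalence} as printed is garbled (the condition $N\{\sigma_1,\ldots,\sigma_n\} = M$ should relate $N$ to the body $M'$ on the right-hand side), so the definition must be repaired before anything can be proved about it. Second, the $\beta_v$ commutation you correctly identify as the crux --- that $\refssd{}{W}{\varss{\{x\mapsto V_M\}}{R_M}}$ reduces by structural rules to a term $\sleq \overline{R[V/x]}^S$ --- is itself a lemma needing an induction on $R$, tracking how the pending substitutions under the opened lambda are released, composed with $\{x\mapsto V_M\}$, and re-frozen at inner abstractions, while the downward substitution $\multv{W}$ redecorates the newly exposed applications and reads. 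Third, in the \textbf{set} case, the claim that the climbing upward substitution (via \rrule{subst-r'}{\parallel}, \rrulet{subst-r'}{lapp}, \rrulet{subst-r'}{rapp}, then \rrule{subst-r'}{\top}) delivers the new binding to \emph{exactly} the positions that $\overline{\cdot}^{S'}$ decorates is a global statement about the whole climb --- applications on the path receive it through their $\lambda$-substitutions, off-path subtrees through deposited downward substitutions --- and deserves an explicit argument. I see no obstruction in any of these steps, and your plan is almost certainly the intended one, but until the second item in particular is written out the theorem is not proved.
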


\end{document}